\newenvironment{proof}{\bg{Proof.}}{\ed}
\newcommand{\E}{\mathbb{E}}
\newcommand{\abs}[1]{\vert{#1}\vert}
\newtheorem{theorem}{Theorem}[section]\newtheorem{lemma}[theorem]{Lemma}
\newtheorem{claim}[theorem]{Claim}
\newtheorem{defn}[theorem]{Definition}
\newtheorem{fact}[theorem]{Fact}
\newtheorem{remark}[theorem]{Remark}}
\def\squareforqed{\hbox{\rule{2.5mm}{2.5mm}}}
\def\QED{\ifmmode\squareforqed 
  \else{\nobreak\hfil   
    \penalty50                 
    \hskip1em                  
    \null                      
    \nobreak                   
    \hfil                      
    \squareforqed              
    \parfillskip=0pt           
    \finalhyphendemerits=0     
    \endgraf}                  
  \fi}
\def\blksquare{\rule{2mm}{2mm}}
\def\qedsymbol{\blksquare}
\newcommand{\bg}[1]{\medskip\noindent{\bf #1}}
\newcommand{\ed}{{\hfill\qedsymbol}\medskip}
\newenvironment{proofof}[1]{\bg{Proof of #1.}}{\ed}
\newcommand{\R}{\ensuremath{\mathbb R}}
\newcommand{\comment}[1]{}
\newcommand{\junk}[1]{}
\newcommand{\ud}{\,\mathrm{d}}
\newcommand{\prob}{\mathbb{P}}
\newlength{\tmp} \newlength{\lpsx} \newlength{\lpsy} \newlength{\upsx} \newlength{\upsy}
\newcommand{\one}[1]{\mathbbm{1}\{#1\}}
\newcommand{\bid}{b}
\newcommand{\bids}{{\mathbf \bid}}
\newcommand{\bidsmi}{{\mathbf \bid}_{-i}}
\newcommand{\bidi}[1][i]{{\bid_{#1}}}
\newcommand{\hs}{{\mathbf h}}
\newcommand{\val}{v}
\newcommand{\vals}{{\mathbf \val}}
\newcommand{\valsmi}{{\mathbf \val}_{-i}}
\newcommand{\vali}[1][i]{{\val_{#1}}}
\newcommand{\util}{u}
\newcommand{\utili}[1][i]{{\util_{#1}}}
\newcommand{\dist}{F}
\newcommand{\dists}{{\mathbf \dist}}
\newcommand{\price}{p}
\newcommand{\pricei}[1][i]{{\price_{#1}}}
\newcommand{\ts}{{\mathbf t}}
\newcommand{\tsmi}{\ts_{-i}}
\newcommand{\signal}{{\mathbf s}}
\begin{document}

\title{Bounding the inefficiency of outcomes
in generalized second price auctions\thanks{Preliminary versions of some of the results in this paper
appeared in
Paes Leme and Tardos, FOCS'10 \cite{PLT10}, Lucier and Paes Leme,
EC'11 \cite{LPL11}, and Caragiannis, Kaklamanis, Kanellopoulos and
Kyropoulou, EC'11 \cite{CKKK11}.}}

\author{Ioannis Caragiannis\thanks{Computer Technology Institute and Press ``Diophantus'' \& Department of Computer Engineering and Informatics, University of Patras, Greece. Email: \texttt{\{caragian,kakl,kanellop\}@ceid.upatras.gr}}
\and Christos Kaklamanis$^\dag$
\and Panagiotis Kanellopoulos$^\dag$
\and Maria Kyropoulou\thanks{Department of Computer Science, University of Oxford, UK. Email: \texttt{kyropoul@cs.ox.ac.uk}}
\and Brendan Lucier\thanks{Microsoft Research New England, USA. Email: \texttt{brlucier@microsoft.com}}
\and Renato Paes Leme\thanks{Google Research New York, USA.  \comment{Part of
this work was conducted while at Department of Computer Science.}  Email:
\texttt{renatoppl@google.com}}
\and \'{E}va Tardos\thanks{Corresponding author. Department of Computer Science, Cornell University, USA. Supported in part by NSF grants CCF-0910940 and CCF-0729006, ONR grant N00014-98-1-0589,  a Yahoo!~Research Alliance Grant, and a Google Research Grant. Email: \texttt{eva@cs.cornell.edu}}}
\date{}

\maketitle
\begin{abstract}
The Generalized Second Price (GSP) auction is the primary auction used for
monetizing the use of the Internet. It is well-known that truthtelling is
not a dominant strategy in this auction and that inefficient equilibria can
arise. Edelman et al. (AER, 2007) and Varian (IJIO, 2007) show that an efficient
equilibrium always exists in the full information setting. Their results,
however, do not extend to the case with uncertainty, where efficient equilibria
might not exist.

In this paper we study the space of equilibria in GSP, and quantify the efficiency loss
that can arise in equilibria under a wide range of sources of uncertainty, as well as in the full information setting. The traditional Bayesian game models uncertainty in the valuations (types) of the participants. The Generalized Second Price (GSP) auction gives rise to a further form of uncertainty: the selection of quality factors resulting in uncertainty about the behavior of the underlying ad allocation algorithm.
The bounds we obtain apply to both forms of uncertainty, and are robust in the
sense that they apply
under various perturbations of the solution concept, extending to models with information asymmetries and bounded rationality in the form
of learning strategies.

We present a constant bound ($2.927$) on the factor of the efficiency loss
(\emph{price of anarchy}) of
the corresponding game for the Bayesian model of partial information
about other participants and about ad quality factors.
For the full information setting, we prove a surprisingly
low upper bound of $1.282$
on the price of anarchy over pure Nash equilibria, nearly matching a lower bound
of $1.259$ for the case of three advertisers.   Further, we do not require that the system reaches
equilibrium, and give similarly low bounds also on the quality degradation
for any no-regret learning outcome. Our conclusion is that the number
of advertisers in the auction has almost no impact on the price of anarchy, and that
the efficiency of GSP is very robust with respect to the belief and rationality
assumptions imposed on the participants.
\end{abstract}

JEL Classification: D44 -- Auctions  \\

Keywords: auction design, equilibrium analysis, price of anarchy, bayesian
games, generalized second price auction, keyword auctions

\newpage
\section{Introduction}

The sale of advertising space on the Internet, or AdAuctions, is the primary source of revenue for
many
providers of online services. According to a recent report \cite{emarketer},
$\$25.8$ billion dollars were spent in online advertisement in the US in
2010. The main part of this revenue comes from search advertisement, in which
search engines display ads alongside organic search results. The success of this
approach is due, in part, to the fact that providers can
tailor advertisements to the intentions of individual users, which can be inferred
from their search behavior.  A search engine,
for example, can choose to display ads that synergize well with a query being
searched.
However, such dynamic provision of content complicates the process of
selling ad space to potential advertisers.
Each search query generates a new set of advertising space to be
sold, each with its own properties determining the applicability of different
advertisements, and these ads must be placed near-instantaneously.

The now-standard mechanism for resolving online search advertisement
requires that each advertiser places a \emph{bid} that represents the maximum
she would be willing to pay if a user clicked her ad. These bids are then
resolved in an automated auction whenever ads are to be displayed.
By far the most popular bid-resolution method currently in use is the
Generalized Second Price (GSP) auction, a generalization of the well-known
Vickrey auction.  In the GSP auction, there are multiple ad ``slots'' of varying appeal
(e.g.\ slots at the top of the page are more effective).  In two seminal papers
Edelman et al.~\cite{edelman07sellingbillions} and Varian \cite{Varian06positionauctions}
propose a simple model of the GSP auction that we will also adopt in this paper.
They observe that truthtelling is not a dominant strategy under GSP, and GSP auctions do
not generally guarantee the most efficient outcome (i.e., the outcome that maximizes
social welfare). Nevertheless,
the use of GSP auctions has been extremely successful in practice.  This begs the
question: \emph{are there theoretical properties of the Generalized Second
Price auction that would explain its prevalence?} Edelman et al.~\cite{edelman07sellingbillions}
and Varian \cite{Varian06positionauctions}
provide a partial answer to this question by showing that,
in the full information setting, a GSP auction always has a Nash equilibrium that has same
allocation and payments as the VCG mechanism.
\cite{edelman07sellingbillions} and \cite{Varian06positionauctions}
give only informal arguments to justify the selection of envy-free equilibria.

We argue that the Generalized Second Price auction is best modeled as a
Bayesian game of partial information.
Modeling GSP as a full information game assumes that each auction is played
repeatedly with the same group of advertisers, and during such repeated play the bids stabilize.
The resulting stable set of bids is well modeled by a full information Nash equilibrium.
The analyses of Edelman et al.~\cite{edelman07sellingbillions} and
Varian \cite{Varian06positionauctions} provide important insight into the structure of
the GSP auction under this assumption.
However, the set and types of players can vary significantly between rounds of a
GSP auction.  Each query is unique, in the sense that it is defined not only by
the set of keywords invoked but also by the time the query was performed, the location
and history of the user, and many other factors. Search engines use complex machine learning
algorithms to select the ads, and more importantly to determine appropriate quality
scores (or factors) for each advertiser for a particular query, and then decide which advertiser
to display. This results in uncertainty both about the competing advertisers, and about
quality factors. We model this uncertainty by viewing the GSP auction as a Bayesian game, and ask:
what are the theoretical properties of the Generalized Second Price auction \emph{taking into account
the uncertainty that the advertisers face?}

\paragraph{Bounding the quality of outcomes: Price of Anarchy.}
To answer the question above, we offer
a quantitative understanding of the
inefficiencies that can arise in GSP auctions, using a metric known as the \emph{Price of
Anarchy}. We show that the welfare generated by the auction in \emph{any}
equilibrium of bidding behavior is at least a $\frac{1}{\eta}$-fraction
of the maximum achievable welfare (i.e., the welfare the auction could generate knowing
the player types and quality factors in advance). The value of $\eta$ measures the
robustness of an auction with
respect to strategic behavior: in the worst case, how much can strategic manipulation harm the
social welfare. The closer $\eta$ is to $1$, the more robust the auction is. An
auction that always generates efficient outcomes at equilibrium would have price of anarchy equal
to~$1$. We bound the inefficiency of the outcomes both in the Bayesian version of the game
as well as the full information game, and extend the analysis also for learning outcomes.

We develop a general technique for bounding the inefficiency of outcomes that
allows us to do this in the most general setting, even in Bayesian games with multiple,
correlated sources of uncertainty. Our framework of \emph{semi-smooth} games is an extension of
Roughgarden's \cite{roughgarden} smoothness framework, that allows dealing with correlated
distributions. Correlated distributions are an important feature of the GSP model, especially
when modeling quality factors, as the same facts affect clickability and hence the quality
factors for all advertisers. (For instance, an ad shown to a bot will not get a click independent
of the advertiser.)

For mechanisms that are not dominant strategy truthful,
like GSP auctions, price of anarchy analysis is a powerful
tool for quantifying the potential loss of efficiency at equilibrium. We conduct this
analysis both in a full information setting without uncertainty (in which the
price of anarchy is surprisingly small, indicating a loss of at most $22\%$ of the
welfare), but also in a setting with uncertainty and a very
general information structure, in which we prove that the price of anarchy is
still bounded by a small constant. This shows that while the GSP auction is
not guaranteed to be efficient, it is a reasonably good design, as remarkably, the
welfare loss of these auctions is bounded by a value that does not depend on the
number of players, the number of advertisements for sale, or the prior distributions
on player types.
In contrast, the variant of the Generalized Second Price auction that orders
advertisers by their bid ignoring quality factors, which has been
historically used by Yahoo!, results in a quality loss proportional to the range
of quality factors, while randomly assigning advertisers to slots can result in a
loss of efficiency proportional to the number of advertisers.


One feature of our results is that they hold for a variety of models regarding the
rationality and the beliefs of the players.
This robustness is particularly
important in large-scale auctions conducted over the Internet, where assumptions
of full information and/or perfect rationality of the participants are unreasonably strong.

\paragraph{The GSP auction and sources of uncertainty.}
By far the most popular auction method currently in use for search ads is the
Generalized Second Price (GSP) auction, a generalization of the well-known Vickrey auction.
The GSP auction is invoked every time a user queries a keyword of interest; it is a
repeated auction in which players repeatedly bid for ad slots.  However,
modeling equilibrium strategies in a repeated game of this nature is notoriously
difficult, and results in a game with a plethora of unnatural equilibria
due to the possibility of bids representing threats for
future rounds, optimal exploration of the bidding space, and so on.  A common
simplification used in the literature is to focus on auctions for a single
keyword, and to suppose that players will quickly learn each others' valuations and
reach a stationary equilibrium.  Under this assumption, the stationary
equilibrium would correspond naturally to a Nash equilibrium in the full
information, one-shot version of the GSP auction \cite{Edelman10repeated}.  It
has therefore become common practice to study pure, full information equilibria
of the one-shot game, as an approximation to expected behavior in the more
general repeated game \cite{edelman07sellingbillions, Varian06positionauctions,
PLT09_aaw}.

In reality, however, the set and types of players can vary significantly between
rounds of a GSP auction:  each query is unique, in
the sense that it is defined not only by the set of keywords invoked but also
the time the query was performed, the location and history of the user, and many
other factors.  This \emph{context} is taken into account by an underlying
\emph{ad allocation algorithm}, which is controlled by the search engine. The ad
allocation algorithm not only selects which advertisers will participate in an
auction instance, but also assigns a \emph{quality factor} to each advertiser. As a first
approximation we can think of the quality factor as a
score that measures how likely that participant's ad will be clicked for that
query. These quality factors are then used to scale the bids of the advertisers.
These scaled bids are known as \emph{effective bids}, which can be viewed as bids
derived from a similarly-modified  \emph{effective type}. Under our assumption that
quality factors measure clickability, the effective type of an advertiser is the
expected valuation of displaying the ad (valuation of the ad times its likelihood of getting a click).
The effective bid and effective type of a player are therefore random
variables, which can be thought of as the original valuations multiplied by
quality scores computed exogenously by the search engine.
Athey and Nekipelov \cite{AN10} point out that the uncertainty in quality
factors produces qualitative changes in the structure of the game. Thus, even if
players converge to a stationary bidding pattern,
 the resulting equilibrium cannot be described as the
outcome of a full information game.

We model the uncertainty about the effective types of advertisers as
a Bayesian, partial information game. That is, the inherent uncertainty
due to context and the ad allocation algorithm can be captured via prior
distributions over effective types, even when the true types of all potential
competitors are fully known.  The appropriate equilibrium notion is then
the Bayes-Nash equilibrium with respect to these distributions. Our model allows
arbitrary correlations between the types and quality factors. The uncertainty of
ad quality and allocation mostly comes from the query context, and hence is best
modeled by correlated distributions of types and ad quality. Search engines use complex
machine learning algorithms to compute quality factors based on all available
information about the context, whose outcome is hard to predict for the advertisers.
Search engines share distributional information about quality factors with
advertisers. We model this by assuming that the advertisers are aware of the
distribution of quality factors. Further, we also assume that
the quality factors computed by the search engine correspond exactly to the clickability of the
ad.

Summarizing, there are two main sources of uncertainty: the first is about the
quality factors that the search engine attributes to each
advertiser and the second is about the valuations (types) of
the players.
These sources are different in nature: each advertiser has knowledge of (and can
condition her behavior on) her own type, whereas quality factors are fully
exogenous and are only revealed ex post.



\paragraph{Asymmetric information.} There are
different types of players in advertising markets, which may have differing levels of
information about their competitors.  We assume all players know their own valuations
correctly, but some smaller players (such as individual
advertisers) might be clueless about the valuations of the other players and
expected behavior of quality scores, while others (say bidding agencies or
large companies with web advertising departments) may have a much better
understanding of how individual rounds of the auction will proceed.
Even among
this latter group, different advertisers may have access to different information.  We
can model such information asymmetries by giving each player access to an
arbitrary player-specific signal that can carry information about the
effective types of the auction participants. Our bounds
on social efficiency in the Bayesian model hold in settings with such
asymmetry in information.

\paragraph{Learning players.}
So far we have considered equilibria of the auction game.
Analyzing equilibria makes the strong assumption that players reach equilibrium
play. Learning outcomes provide a very appealing generalization.
A now standard model considers a repeated version of the game, and assumes that players
employ strategies that give them
vanishingly small \emph{regret} over time.   Roughly
speaking, such a model assumes that players observe the bidding patterns of
others and modify their own bids in such a way that their long-term performance
is at least as good as a single optimal strategy chosen in hindsight. Notice
that if all players employ the same (possibly randomized) strategy in each round, the
resulting stable strategies form a Nash equilibrium. Therefore, the no-regret
assumption of repeated play is a generalization of the notion of Nash equilibrium.
Further, there are many simple bidding strategies that yield vanishing regret over time,
as discussed below. The no-regret assumption does not require that players follow one of
 these algorithms; in fact, good play can result in better utility than simply no-regret,
e.g., if the player can anticipate the behavior of other players. Rather, the assumption
models a natural rationality: if there is a consistently good strategy players will
attempt to learn this over time, and do at least as well (or better) as this good
fixed strategy. In this sense, the no-regret assumption aims to capture the intuition
that players attempt to learn beneficial bidding strategies over time, while also
providing a generalization of Nash equilibrium play. We view the existence of simple
learning algorithms as supporting this assumption. If all players have no-regret
this will cause the empirical distribution of the bids to converge to a coarse
correlated equilibrium of the game, a slight generalization of the well-known correlated equilibrium.

We therefore assume that the players use algorithms to learn how to best bid given
their valuation and signal, and {achieve vanishing regret} over time.
In other words, for each possible valuation and signal, repeated auctions allow players to learn how to
best bid taking into account the varying bids of other players, and the uncertainty about
quality factors, other players' valuations, and bidding strategies. We will consider the
quality degradation of the average social outcome when all players employ strategies
with small regret. Blum et al.~\cite{BHLR-08} introduced the  term  \emph{Price
of Total Anarchy} for this analog of the price of anarchy.


\paragraph{Approximate rationality.} One of the fundamental assumptions in
auction analysis is that all players are perfectly rational utility
optimizers.  However, in reality (and especially in large online settings), it
is natural to assume that some fraction of the players participating in an
advertising auction might have unsophisticated bidding strategies.  In fact,
some players may not even play at equilibrium in the single-shot approximation of
the GSP auction, or may only be able to find strategies that are approximately
utility-maximizing.
We discuss the robustness of our bounds to the presence of players bidding with
limited (or no) rationality.
As we shall see, the GSP auction has the property that its social welfare
guarantees degrade continuously when our assumptions about the rationality of the players
are relaxed.

\subsection{Our results}
We present the following results.
\begin{itemize}
\item Our main result is a bound on the Bayesian
price of anarchy for the GSP auction.  Specifically, we show that the price of anarchy
is at most $2.927$, meaning that the social welfare in any
Bayes-Nash equilibrium is at least $1/2.927$ of the optimal
social welfare. Notice that this is an unconditional bound, as we make no assumptions on
the distribution on valuation profiles and quality factors (it can, for example,
be correlated) or on the number of players or slots. {In the main part of the
paper, we prove weaker bounds for both the full information and the Bayesian game, and
only sketch the stronger bounds. We believe that the weaker bounds are  interesting
{in} their own right, and show the main techniques of the paper {in a
way that is easier to read}. We defer the details of the stronger bounds to the Appendix.}

Perhaps just as important as the bound, however, is the
straightforward and robust nature of the GSP auction.  In particular, our results
extend to provide the same welfare guarantees for outcomes of no-regret learning:
the average social welfare when players play repeatedly in order to minimize total
regret, in a Bayesian setting, is within a $1/2.927$ factor of the optimal social welfare.
{In fact, some of our bounds for learning outcomes require only that the players
have no regret for a particular natural strategy of shading their bids.
The bounds continue} to hold even if players have asymmetric access to
distributional information, in the form of exogenously provided signals. It
also degrades continuously in the presence of approximately rational players or a
small fraction of irrational players {as explained in Appendix \ref{appendix:irrational}}.
{The results also extend to the case
when possible bid space is discretized (i.e., players need to bid in integer number
of pennies). This case is interesting both as in practice bids do come from such a
discrete space, and also as in the discrete case, { the existence of Nash
equilibria is guaranteed}. In fact, using a result of Athey \cite{Athey} and Reny \cite{Reny},
in the discrete case if player types and quality factors are drawn independently,
one can also show that {the existence of pure strategy equilibria that are
monotone in the types is guaranteed.}}

\item We achieve {the bounds on the solution quality by identifying a
property that encapsulates some of the insight.}
Roughgarden \cite{roughgarden} identified a class of games that he termed \emph{smooth}
games, defined via a similar property that is used to bound the price of anarchy.
We identify a {stronger} 
property, semi-smoothness, that is
satisfied by the GSP auction, and is strong enough to also imply price of
anarchy bounds {even in the Bayesian setting with arbitrarily correlated types}.

\item We provide improved results for the case where there is no uncertainty, which
is the traditional setting studied in \cite{edelman07sellingbillions,
Varian06positionauctions}. If valuations and quality factors are fixed, we prove
that the social welfare in any \emph{pure Nash equilibrium} is within a factor
of $1.282$ of the optimal one and show that this bound is essentially tight by
providing a lower bound of $1.259$. {Also, we show a bound of $2.310$ for
coarse correlated equilibria; as discussed above, this implies the same bound on
the social welfare for learning outcomes when players with fixed (effective) types
minimize their regret in a repeated auction.}
This bound of $2.310$ holds for mixed Nash equilibria as well.
\end{itemize}

\subsection{Related work}
Due to their central role in Internet monetization, sponsored search auctions have
received considerable attention in the past years. From the optimization
perspective, they were first considered by Mehta et al.~\cite{jacmMehtaSVV07}. A
classical game-theoretical modeling of sponsored search auctions was proposed simultaneously
by Edelman et al.~\cite{edelman07sellingbillions} and Varian
\cite{Varian06positionauctions}. See the surveys of Lahaie et al.~\cite{algorithmgame}
and Maille et al.~\cite{maille} for an overview of
subsequent developments.

The model we adopt follows
\cite{edelman07sellingbillions,Varian06positionauctions}. In those two seminal
papers, the authors notice that even though truthtelling is not a dominant
strategy under GSP, the full information game always has a Nash equilibrium that has same allocation
and payments as the VCG mechanism. They focus
on a subclass of Nash equilibria which is called \emph{envy-free equilibria} in
\cite{edelman07sellingbillions} and \emph{symmetric equilibria} in
\cite{Varian06positionauctions}. They show that such equilibria always exist and are
always efficient. In this class, an advertiser
would not be better off after switching bids with the advertiser just above
her.
Note that this is a stronger requirement than in Nash equilibria, which are defined
considering only unilateral deviations by the advertisers, and if an advertiser
unilaterally switches to a slot with higher click-through-rate, she pays more than the advertiser at that slot paid.
In \cite{edelman07sellingbillions,
Edelman10repeated,Varian06positionauctions}, informal arguments are presented
to justify the selection of envy-free equilibria, but no formal
game-theoretical analysis is done. We believe it is an important question to go
beyond this and prove efficiency guarantees for all Nash equilibria.
Lahaie \cite{lahaie} also considers the
problem of bounding the social welfare obtained at equilibrium, but restricts
attention to the special case that click-through-rates decay
exponentially along the slots with a factor of $\frac{1}{\delta}$.  Under this
assumption, Lahaie proves a price of anarchy of $\min\{\frac{1}{\delta}, 1 -
\frac{1}{\delta} \}$.

Gomes and Sweeney \cite{gomes09} study the GSP auction in
the Bayesian setting, where player types are drawn from independent and identical
distributions (without considering the uncertainty due to quality factors).
They show that, unlike the full information case, there
may not exist symmetric or socially optimal equilibria in this model, and obtain
sufficient conditions on click-through-rates that guarantee the existence of a
symmetric and efficient equilibrium. Athey and Nekipelov \cite{AN10} study the
effect of uncertainty of quality factors both from a theoretical and an empirical
perspective.

The study of price of anarchy for non-truthful auction mechanisms (especially in
the Bayesian setting) was initiated by
Christodoulou et al.~\cite{christodoulou} and developed in
Lucier and Borodin \cite{borodin_lucier}, Lucier \cite{lucier_ics}, and most
recently in the work of Bhawalkar and
Roughgarden \cite{roughgarden_bhawalkar}. To the best of our knowledge, the current
paper is the first one in which the price of anarchy bounds hold when
player valuations are drawn from a correlated distribution. In truthful
mechanism design, the study of correlated valuations has a long history -- see
Cremer and
McLean \cite{cremer-mclean} for an early reference.

The study of regret-minimization goes back to the work of Hannan on repeated
two-player games \cite{H-57}. Since then, a number of simple algorithms
(to be thought of as adaptive procedures) that guarantee no-regret have been proposed
in the literature. Initial work in this area focused on the stronger requirement of
finding simple adaptive procedures through which the play converges to the set of
correlated equilibria, requiring that players have a stronger form {of} no-regret
that is called no internal regret (see the survey by Blum and Mansour \cite{BlumMansour}
for a discussion of such procedures and a comparison). Foster and Vohra \cite{FosterVohra1997}
obtained such a procedure, and Fudenberg and Levine \cite{FudenbergLevine1999}
presented a different one. Hart and Mas-Collel's regret matching strategy \cite{hart00}
or the multiplicative weight updating strategy of
\cite{weighted-maj} (see also \cite{Arora-etal}) are two procedures that become especially
simple when used to guarantee only no-regret (as opposed to no internal regret).
{These classical learning algorithms assume that players learn
outcomes and strategies
of all participants in each round, but have also been extended to
situations}  {where in each round, a
player observes only her own outcome, or even realizations of her outcome in case it is
randomized. We refer to Auer et al. \cite{AU95} for a detailed discussion
on this matter}.

Adaptive procedures that guarantee no-regret define a play that converges to the set of coarse correlated equilibria.
Blum et al.~\cite{BHLR-08} apply regret-minimization to the study of inefficiency in
repeated games, coining the term ``price of total anarchy'' for the worst-case
ratio between the optimal objective value and the average objective value when
players minimize regret.

Roughgarden \cite{roughgarden} identifies a class of games that he terms \emph{smooth}
games where the price of anarchy and price of total anarchy are identical.
{See also  \cite{NadavRoughgarden2010} and \cite{RoughgardenSchoppmann2011}, for
subsequent refinements. Since the initial conference versions of our Bayesian bound of
\cite{LPL11} and \cite{CKKK11}, Roughgarden \cite{RoughgardenEC12} and independently
Syrgkanis \cite{Syrgkanis12} show that the bounds proved via smoothness also extend to the Bayesian price of anarchy assuming a variant of the smoothness assumption (called universal
smoothness in \cite{Syrgkanis12}) if player types are drawn from independent
distributions. See \cite{SyrgkanisTardos13} for such an extension theorem without
the stronger assumption.
In this paper we isolate a stronger} property related to smoothness that encapsulates
many of the insights that drive our bounds {and allows us to extend our bounds
for the Bayesian price of anarchy with correlated distributions}.

Some of the results in this paper appeared in preliminary conference versions.
Paes Leme and Tardos \cite{PLT10} study equilibria of GSP auctions and give upper bounds on the price of
anarchy in pure, mixed, and Bayesian strategies; achieving bounds of $1.618$,
$4$, and $8$, respectively.
Lucier and Paes Leme \cite{LPL11} and Caragiannis et al.~\cite{CKKK11} improve
these bounds {to $3.16$ and $3.037$ respectively for Bayesian games,
and $1.282$ and $2.31$ for {pure} Nash and learning outcomes for full
information games (as well as mixed Nash equilibria),} and extend them to apply to equilibria with correlated valuations
and learning outcomes.  {Here we further improve
the bounds, present and also improve the proofs, and extend the results to games with
uncertainty about quality factors in addition to player types. }

\section{Model and Equilibrium Concepts}

We consider an auction with $n$ advertisers and $n$ slots\footnote{We note that we can handle
unequal numbers of slots and advertisers by adding
virtual slots with click-through-rate zero or virtual advertisers with zero valuation per
click.}. Each advertiser $i$ has a private
type $v_i$, representing her valuation per click received.  The sequence
$\vals = (\val_1, \dotsc, \val_n)$ is referred to as the \emph{type profile} (or \emph{valuation profile}).
We will write $\valsmi$ for $\vals$ excluding the $i$th entry, so that
$\vals = (\vali, \valsmi)$.

An \emph{outcome} is an assignment of advertisers to slots.  An outcome can be
viewed as a permutation $\pi$ with $\pi(k)$ being the advertiser assigned to slot
$k$. The probability of a click depends on the slot as well as the advertiser
shown in the slot. We use the model of separable click probabilities. We assume
slots have associated \emph{click-through-rates} $\alpha_1 \geq \alpha_2 \geq
\hdots \geq \alpha_n$, and each advertiser $i$ has a \emph{quality factor}
$\gamma_i$ that reflects the clickability of the ad. When advertiser $i$ is
assigned to the $k$-th slot, she gets $\alpha_k \gamma_i$ clicks.

A mechanism for this auction
elicits a bid $\bidi \in \R_+ := [0,\infty)$ from each advertiser $i$, which is
interpreted
as a type declaration, and returns an assignment as well as a price
$\pricei$ per click for each advertiser.
If advertiser $i$ is assigned to slot
$j$ at a price of $\pricei$, her \emph{utility} is $\alpha_j \gamma_i(\vali -
\pricei)$, which is the number
of clicks received times profit per click. The \emph{social welfare}
of outcome $\pi$ is $SW(\pi,\vals,\gamma) = \sum_j \alpha_j
\gamma_{\pi(j)} \val_{\pi(j)}$, the total value
of the solution for the participants. The social welfare also depends on the click-through-rates $\alpha_j$, but throughout the paper we will assume they are fixed and common knowledge, and as a result we suppress them in the notation.
The optimal social welfare is $OPT(\vals,\gamma) = \max_\pi
SW(\pi,\vals,\gamma)$, the welfare
generated by the socially efficient outcome. Note that the efficient outcome
sorts advertisers by their \emph{effective values} $\gamma_i v_i$, and assigns
them to slots in this order. The effective value can be thought of as the
expected value of showing the ad in a slot with click-through-rate equal to $1$.

We focus on a particular mechanism, the Generalized Second Price auction,
which works as follows.
Given bid profile $\bids$, we define the \emph{effective bid} of advertiser $i$ to
be $\gamma_i b_i$, which is her bid modified by her quality factor, analogous to the effective value defined above.
The auction sets $\pi(k)$ to be
the advertiser with the $k$th highest effective bid (breaking ties arbitrarily).
That is, the GSP mechanism assigns slots with higher click-through-rate to advertisers with higher
effective bids.
Payments are then set according to critical value: the smallest bid that guarantees the advertiser the same slot. When advertiser $i$ is assigned to slot $k$ (that is, when $\pi(k)=i$), this critical value is defined as
$$\pricei = \frac{\gamma_{\pi(k+1)}}{\gamma_i}
\bid_{\pi(k+1)}$$
where
we take $\bid_{n+1} = 0$.  We will write
$\utili(\bids,\gamma)$ for the utility derived by advertiser $i$ from the GSP mechanism
when advertisers bid according to $\bids$:
$$u_i(\bids, \gamma) = \alpha_{\pi^{-1}(i)}  \gamma_i (v_i - p_i) =
\alpha_{\pi^{-1}(i)} [ \gamma_i v_i - \gamma_{\pi(\pi^{-1}(i)+1)}
b_{\pi(\pi^{-1}(i)+1)} ].
$$

Notice that $\pi$ is a function of $\bids, \gamma$ as well. In places where we
need to be more explicit, we will write $\pi(\bids,\gamma,j)$ to be the advertiser
assigned
to slot $j$ by GSP when quality factors are $\gamma$ and the advertisers bid according to $\bids$.  We will also
write $\sigma(\bids,\gamma,i)$ for the slot assigned to advertiser $i$, again when advertisers
bid according to $\bids$ and quality factors are $\gamma$.  In other words, $\sigma(\bids,\gamma,\cdot) = \pi^{-1}(\bids,\gamma,\cdot)$.
We write $\pi^i(\bidsmi,\gamma,j)$ to be the advertiser that would
be assigned to slot $j$ if advertiser $i$ did not participate in the auction.
When $\bids$ and $\gamma$ are clear from the context, we write $\pi(i)$ and $\sigma(i)$
instead of $\pi(\bids,\gamma,i)$ and $\sigma(\bids,\gamma,i)$.
We will also write $\nu(\vals,\gamma)$ for the optimal assignment of slots to advertisers
for valuation profile $\vals$, so that $\nu(\vals,\gamma,i)$ is the slot that would be allocated to
advertiser $i$ in the optimal assignment\footnote{We note that, since GSP makes the
optimal assignment for a given bid declaration, we actually have that $\nu(\vals,\gamma,i)$
and $\sigma(\vals,\gamma,i)$ are identically equal.  We define $\nu$ mainly for use when
emphasizing the distinction between an efficient assignment for a valuation profile and
the assignment that results from a given bid profile.}.

We will consider rational behavior under various models of the information available to the advertisers.
In general, the advertisers are engaged as players in a game defined by the auction mechanism; each of them aims
to select a bidding strategy that maximizes her utility. In the following, we use the terms advertiser and player interchangeably.
We group our models into full information and partial information ones.
In all models we assume that the values $\alpha_j$ are fixed and
commonly known to all players. In our full information settings,
we assume that the quality factors $\gamma_i$ as well as the valuation profile $\vals$
are also common knowledge.
In our Bayesian setting of partial information,
we assume that the profile of quality factors is unknown to all players,
and the type $\vali$ is private knowledge known only to player $i$, but they are randomly drawn from a commonly known joint distribution $(\dists,\mathbf{G})$ of quality factors and valuation profiles.
It will turn out that bidding more than one's true type (\emph{overbidding}) is a dominated strategy in
the mechanism we consider. So, we will focus on non-overbidding (or conservative) players; see Section \ref{subsec:nonoverbidding} for a discussion.

\subsection{Full information setting}

In the full information setting, the valuation profile $\vals$ and quality
factors $\gamma_i$ are fixed and common knowledge.   We will therefore tend to
drop dependencies on $\gamma$ from our notation when working in the full
information
setting.  In this setting,
a \emph{pure strategy} for player $i$ is a bid $b_i \in \R_+$. We say that the
bid profile $\bids$ is a \emph{(pure) Nash equilibrium} if there is no deviation
from which the player can profit, i.e., for all $b'_i \in \R_+$,
$$u_i(b_i, \bidsmi) \geq u_i(b'_i, \bidsmi).$$
It is known that a pure Nash equilibrium always exists in this setting
\cite{edelman07sellingbillions,Varian06positionauctions}. 
We can therefore define the \emph{(pure) Price of Anarchy} to be
$$\sup_{\vals, \bids \in NE}
\frac{OPT(\vals)}{SW(\pi(\bids),\vals)}$$
where NE is the set of pure Nash equilibria (assuming no overbidding; see Section \ref{subsec:nonoverbidding}).

Similarly, a \emph{mixed strategy} is
a randomized bid $b_i$, which is a distribution over possible bids. A mixed
Nash equilibrium is a profile of bid distributions $\bids$ such that for all $i$
and all alternative strategies $b_i'$,
\[
\E_{\bids} [ u_i(b_i, \bidsmi)) ] \geq \E_{\bids} [ u_i(b'_i, \bidsmi)) ].
\]
Note that, unlike more general solution concepts we will discuss in a while, the bid distributions of different players at a mixed Nash equilibrium are independent.
We define the \emph{(mixed) Price of Anarchy} to be the worst-case ratio between
optimal social welfare and expected social welfare in GSP across all
valuation profiles and all mixed Nash equilibria:
$$\sup_{\vals, \bids \in NE}
\frac{OPT(\vals)}{\E_{\bids}[SW(\pi(\bids),\vals)]}.$$

\subsection{Bayesian setting}

In the Bayesian setting of partial information, we suppose that the valuation profile and the quality factors are drawn from a publicly known (possibly
correlated) joint distribution $(\dists,\textbf{G})$.
A strategy for player $i$ is a (possibly randomized) mapping $b_i
: \R_+ \rightarrow \R_+$, mapping her type $v_i$ to a bid $b_i(v_i)$. Notice
that a player is \emph{not} able to condition her bid on the quality factors,
since they are only known to the search engine, and not to the advertisers.

We write $\bids(\vals) = (\bidi[1](\vali[1]), \dotsc, \bidi[n](\vali[n]))$ to denote the
profile of bids that results when $\bids$ is applied to type profile $\vals$.
We then say that strategy profile $\bids$ is a Bayes-Nash equilibrium for distributions
$\dists,\textbf{G}$ if, for all $i$, all $v_i$, and all alternative strategies $b_i'$,
\comment{
\[
\E_{\gamma,\valsmi \vert v_i}[\utili(\bidi(\vali), \bidsmi(\valsmi), \gamma) ]
\geq
\E_{\gamma,\valsmi \vert v_i}[\utili(\bid'_i(\vali), \bidsmi(\valsmi), \gamma)]
\]
}
\[
\E_{\valsmi,\gamma,\bids }[\utili(\bidi(\vali), \bidsmi(\valsmi), \gamma) \vert v_i ]
\geq
\E_{\valsmi,\gamma,\bids }[\utili(\bid'_i(\vali), \bidsmi(\valsmi), \gamma)
\vert v_i].
\]
That is, each player maximizes her expected utility by bidding in accordance with strategy $b_i(\cdot)$,
assuming that the other players bid in accordance with strategies $\bidsmi(\cdot)$,
where expectation is taken over the distribution of the other players' types
conditioned on $v_i$, any randomness in
their strategies, and the quality factors.
We define the \emph{Bayes-Nash Price of Anarchy} to be
$$\sup_{\dists, \mathbf{G}, \bids(\cdot) \in BNE} \frac{\E_{ \vals,\gamma}
[OPT(\vals,\gamma)]}{\E_{ \vals,\gamma,
\bids(\vals)}[SW(\pi(\bids(\vals),\gamma),\vals,\gamma)]}$$
where BNE is the set of all Bayes-Nash equilibria (again assuming no overbidding; see below).

\subsection{No overbidding}\label{subsec:nonoverbidding}

It is important to note that, in both the full information and Bayesian
settings, any bid $b_i > v_i$ is dominated
by the bid $b_i = v_i$ in the GSP auction. If by bidding $b_i > v_i$, the
next highest effective bid is greater than $\gamma_i v_i$, then the player gets
negative utility.
If on the other hand, the next highest effective bid is smaller or equal than
$\gamma_i v_i$, then
bidding $b_i = v_i$ would get the same slot and payment. Based on this, we make the
following assumption for the rest of the paper:\\

\noindent \textbf{Assumption:}  Players are \emph{conservative} and do not employ overbidding strategies in
GSP auctions. This means that for pure strategies $b_i \leq v_i$, for mixed strategies
$\prob(b_i > v_i) = 0$, and for Bayesian strategies $\prob(b_i(v_i) > v_i) = 0$
for all $v_i$.\\

We use this assumption to rule out unnatural equilibria in which
advertisers apply {certain} dominated strategies.  We remark that, in these
equilibria, the social welfare may be arbitrarily worse than the optimal.  It
is therefore necessary to exclude {such} dominated strategies in order to obtain
meaningful bounds on the price of anarchy.  We note, however, that this
phenomenon is not specific to the GSP auction: such degenerate equilibria
exist even in the Vickrey auction for a single good, where truthful
bidding is a weakly dominant strategy.  Since the Vickrey auction is a
special case of GSP auctions (where one slot
has $\alpha_1=1$, all other slots have $\alpha_i=0$ and all quality factors have $\gamma_i = 1$), this issue carries over to our setting. Consider the example of a single-item Vickrey
auction, where truthful bidding of $b_i=v_i$ is a weakly dominant strategy. Yet
with overbidding, there are equilibria where an arbitrary player
bids excessively high (and hence wins), while everyone else bids $0$. If the
player bidding high has a low {valuation}, this results in a high price of
anarchy.
Note, however, that this Nash equilibrium seems very artificial as it depends
crucially on the low valuation player using the dominated strategy of overbidding.
Indeed, such an advertiser is exposed to the risk of negative utility (if
some other advertiser submits a new bid between her valuation and bid)
without any benefit.  We therefore take the position that advertisers will
avoid such dominated strategies when participating in the GSP auction.

\subsection{Signals and information asymmetry}\label{subsec:assymmetric_informatation}

We define an extension of the setting above, incorporating a Bayesian version of
information asymmetry. In this model, each player's type consists of a
signal $s_i$ drawn from an arbitrary signal space $S$. The signal of player $i$
includes her valuation $v_i(s_i)$ and can contain other privately-gained insight
that refines the player $i$'s conditional distribution over the space of other players' types and quality factors. The signals and quality factors come from a publicly
known joint distribution $(\dists', \mathbf{G})$, which can be arbitrarily correlated.

In this model, a strategy is a bidding function that maps $s_i$, a signal, to a
distribution of possible bids. The bid profile $\bids$ is a Bayes-Nash equilibrium in the asymmetric information model if,
for all $i$ and all alternative bidding functions $\bidi'$,
\[
 \E_{ \textbf{s}_{-i},\gamma,\bids }[\utili(\bidi(s_i),
\bidsmi(\textbf{s}_{-i}),\gamma) \vert s_i]
\geq
\E_{\textbf{s}_{-i}, \gamma,\bids }[\utili(\bid'_i( s_i),
\bidsmi(\textbf{s}_{-i}),\gamma)\vert s_i]
 \]
In this model, the Price of Anarchy is defined as
$$\sup_{\dists', \mathbf{G}, \bids(\cdot) \in BNE} \frac{\E_{\signal, \gamma
}[OPT(\vals(\signal),\gamma)]}{\E_{\signal ,\gamma,
\bids(\signal)}[SW(\pi(\bids(\signal),\gamma),\vals(\signal),\gamma)]}$$
where BNE is the set of Bayes-Nash equilibria with respect to distribution $\dists'$ over signals, with no overbidding.

The presence of signals captures the notion that some advertisers might have a better potential to infer the other
advertisers' valuations than others, or may be endowed with privileged information. We
do note, however, that players do know their own valuations $v_i(s_i)$, and also are aware of the profile of bidding strategies $\bids(\cdot)$ and the distribution $\dists'$,
so that players can rationalize about the effects of signals upon the bidding behavior of their opponents.

\subsection{Repeated auctions and regret minimization}

We now consider the GSP auction in a repeated-game setting.  In this model, the
GSP auction is run $T \geq 1$ times.  We will distinguish between two variants
of this model: the full information model and the model with uncertainty.

\paragraph{Full information model.} Each round of the GSP auction occurs
with the same slots and players.  The valuation profile $\mathbf{v}$ of the
players and the quality factors do not change between rounds, but the players
are free to change their bids. We write $b_i^t$ for the bid of player $i$ on
round $t$. We refer to a $D = (\bids^1, \dotsc, \bids^T, \ldots)$ as an (infinite)
\emph{declaration sequence}. Given declaration sequence $D$, we will write $D^T$
to mean the prefix of $D$ of length $T$; that is, $D^T = (\bids^1, \dotsc, \bids^T)$.
Given a (finite or infinite) declaration sequence $D$, we will write $\pi(D)$
for the sequence of permutations generated by GSP on input $D$.  The average
social welfare generated by GSP on a finite input sequence $D^T$ of length $T$
is
$SW(\pi(D^T),\vals) = \frac{1}{T}\sum_{t=1}^T SW(\pi(\bids^t),\vals).$  The
average
social welfare generated by GSP on an infinite input sequence $D$ is then
defined to be $SW(\pi(D),\vals) = \liminf_{T \to \infty} SW(\pi(D^T),\vals)$.

The full range of equilibria in such a repeated game is very rich, so we
restrict ourselves to a particular non-equilibrium form of play that
nevertheless captures the intuition that players learn appropriate bidding
strategies over time, without necessitating convergence to a stationary
equilibrium.

We say that declaration sequence $D = (\bids^1, \dotsc, \bids^T, \ldots)$ \emph{minimizes
external regret} for player $i$ if, for any fixed declaration $b'_i$,
\[ \sum_{t \leq T} u_i(b_i^t,\bidsmi^t) \geq \sum_{t \leq T} u_i(b'_i,\bidsmi^t) +
R(T)\]
where $R(T)/T \rightarrow 0$ as $T$ grows large.  That is, as $T$
grows large, the utility of player $i$ in the limit is no worse than the utility of the optimal
fixed strategy in hindsight.  The \emph{Price of Total Anarchy} \cite{BHLR-08}
is the worst-case ratio between optimal social welfare and the average
social welfare obtained by GSP across all declaration sequences that minimize
external regret for all players.  That is, the price of total anarchy is
$$\sup_{\vals, D} \frac{OPT(\vals)}{SW(\pi(D),\vals)}$$
where the supremum is taken over (infinite) declaration sequences that minimize external regret for all players.

To this point we have not discussed \emph{how} the players achieve vanishing
regret, and indeed our results are agnostic to this process.
There are many known learning algorithms that guarantee vanishing regret as $T$
goes to infinity. These algorithms require only that a player observes
her realized payoff after each round.  That is, to facilitate learning it is
enough if players
see whether or not their ad was clicked and, if so, the price of the
click.  In particular,
they do not need to learn outcomes or bids of other players, nor even the actual slot their ad was placed in.  For an extensive
discussion on no-regret algorithms with limited feedback, we refer to Auer
et al. \cite{AU95}.  Further, it is known that the price of total anarchy is closely related to an equilibrium notion for the single-shot game known as coarse correlated equilibrium.  We discuss
this relationship further
in Section \ref{sec.learning-full}.

\comment{
Given an instance of the GSP auction in the full information model, a \emph{coarse correlated equilibrium} (CCE) is a randomized bid profile $\bids$ (i.e.\ a distribution over possible bidding profiles) such that for all $i$
and all possible bids $b_i'$,
\[
\E_{\bids} [ u_i(b_i, \bidsmi)) ] \geq \E_{\bids} [ u_i(b'_i, \bidsmi)) ].
\]
Note that the difference between a coarse correlated equilibrium and a mixed Nash equilibrium is that the strategies of the player can be arbitrarily correlated in a CCE.  The price of anarchy in coarsely correlated equilibria is defined to be
$$\sup_{\vals, \bids \in CCE}
\frac{OPT(\vals)}{\E_{\bids}[SW(\pi(\bids),\vals)]}.$$
It is known that, for any game, the price of anarchy in coarsely correlated strategies is precisely equal to the price of total anarchy \cite{BHLR-08}.  Thus, to analyze the performance of GSP when players play repeated strategies, it is sufficient to analyze coarsely correlated equilibria of GSP.
}

\paragraph{Learning with uncertainty.} Next we describe our model of learning in
repeated GSP auctions with uncertainty.  In this model, each round of the GSP auction
occurs with the same slots, but the valuation profile $\vals$ and quality factors
$\gamma$ are redrawn from $(\mathbf{F},\mathbf{G})$ on each round\footnote{In
fact, we can also think of the set of players as changing on each round: if
player $i$ is assigned type $0$ on a given round, this can be interpreted as
player $i$ not being present in that round.}.  These changes to ad
quality and types can be thought of as being due to the context of the search query that initiates
each auction instance, which can change between rounds.
{As before, learning requires only that players observe their own outcome each
round, and not the
results for other players.} {I.e., a player learns whether or not her ad was
clicked,
and if so the price per click, but does not necessarily
observe the outcomes or bids of other players nor the realization of $\gamma$.}
{We again refer to Auer
et al. \cite{AU95} for a
discussion of no-regret algorithms with limited feedback.}

Suppose that each player has a finite type space\footnote{For instance, one could
assume that valuations are bounded and multiples of some arbitrarily small
increment.}.  Let $\vals^t, \gamma^t$ be the type profile
and quality factors drawn at round $t$. Given a declaration sequence $D$ and
type $\tilde{v}_i$ for player $i$, we denote by $I(i,\tilde{v}_i)$
the subsequence of $D$ consisting of the set of rounds in which player $i$ has
type $\tilde{v}_i$, i.e., $I(i,\tilde{v}_i) = \{t; v_i^t = \tilde{v}_i\}$. Define $I^T(i,\tilde{v}_i)$ analogously with respect
to $D^T$. Given a sequence of type profiles and quality factors that
represent the realization of these random quantities over the rounds of the
auction, we say that player $i$ has vanishing regret in declaration sequence $D$
if player $i$ has vanishing regret (in the sense of the full information game) on
the subsequence $I(i,\tilde{v}_i)$ of $D$ for each possible type
$\tilde{v}_i$. Formally:
\[ \sum_{t \in I^T(i,\tilde{v}_i)} u_i(b_i^t,\bidsmi^t,\gamma^t) \geq
\sum_{t \in I^T(i,\tilde{v}_i)}  u_i(b'_i,\bidsmi^t,\gamma^t) +
R(\abs{I^T(i,\tilde{v}_i) })\]
for $R(T)/T \rightarrow 0$ as $T \rightarrow \infty$. Notice that since
$\vals^t$ is independently and identically distributed in each
round, we have $\abs{I^T(i,\tilde{v}_i) } \rightarrow \infty$ as $T \rightarrow \infty$. Now,
we can define the \emph{Price of Total Anarchy with uncertainty} as:
$$\sup_{\{\vals^t\}, \{\gamma^t\}, D} \limsup_{T\rightarrow \infty} \frac{
\sum_{t \leq T}
OPT(\vals^t, \gamma^t) }{
\sum_{t \leq T}
SW(\pi(\bids^t,\gamma^t), \vals^t, \gamma^t)}.$$

As in the full information setting, there is a relationship between regret
minimization under uncertainty and coarse correlated equilibria with
uncertainty.  {Note however, that the speed of learning now depends on
the time {needed for} the empirical distribution of iid samples $(\vals^t,
\gamma^t)$
to resemble the original distribution {and for learning algorithms to
guarantee low regret with high probability}. We discuss this in more detail in
Section \ref{subsec:learning_with_uncertainty}.}

\section{Semi-Smooth Games and the Price of Anarchy with Uncertainty}\label{sec:uncertainty}

Our main result is a bound on the price of anarchy for the Generalized
Second Price auction with uncertainty.  Recall that our model captures two types
of uncertainty: uncertainty for player types and uncertainty about quality
factors. Further, our result holds even in the presence of information asymmetry
in the form of personalized signals available to the players.\footnote{In the presence of additional signals, we can assume that signal $s$ also
encodes the valuation of the player, i.e., that player $i$'s valuation for a click when
she receives signal $s_i$ is $v_i(s_i)$, and in this case, signals and quality factors are
drawn from a known joint distribution $(\mathbf{F}',\mathbf{G})$. Our statement and proof carry over to this case with straightforward modifications.} For simplicity of
presentation, we focus on the setting where there are no signals and
player valuations and quality factors are drawn from a known joint distribution $(\dists,\mathbf{G})$.

\begin{theorem}
\label{thm:cbpoa_main}
The price of anarchy of the Generalized Second Price auction with uncertainty is at most
$2.927$. That is, for any fixed click-through-rates $\alpha_1, \hdots, \alpha_n$, any
joint distribution $(\dists,\mathbf{G})$ over valuation profiles and quality
factors, and any Bayes-Nash equilibrium $\bids$,
 $$\E_{\vals,\gamma, \bids}[SW(\pi(\bids,\gamma),\vals,\gamma)] \geq \frac{1}{2.927}
\E_{\vals,\gamma}[OPT(\vals,\gamma)].$$
\end{theorem}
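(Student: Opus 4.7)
The plan is to establish the bound via a \emph{semi-smoothness} property of the GSP game. Specifically, I will show that, for appropriate constants $\lambda$ and $\mu$ with $(1+\mu)/\lambda \leq 2.927$, the following holds: for every quality vector $\gamma$, valuation profile $\vals$, and bid profile $\bids$ (with no overbidding), each player $i$ admits a randomized deviation $b_i^*(v_i)$ depending \emph{only} on her own type $v_i$, such that
\[
\sum_i \E\bigl[u_i(b_i^*(v_i), \bidsmi, \gamma)\bigr] \;\geq\; \lambda \cdot OPT(\vals,\gamma) \;-\; \mu \cdot SW(\pi(\bids,\gamma),\vals,\gamma).
\]
The restriction that $b_i^*$ depends only on $v_i$ is precisely what allows the inequality to lift to the Bayes--Nash setting: invoking the equilibrium condition conditional on $v_i$ and taking expectations over $(\vals,\gamma) \sim (\dists,\mathbf{G})$ and over any strategy randomness, one obtains $\E[SW] \geq \lambda\, \E[OPT] - \mu\, \E[SW]$, which rearranges to $\E[SW]/\E[OPT] \geq \lambda/(1+\mu)$. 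Crucially, the argument is insensitive to correlation in $(\dists,\mathbf{G})$, and the same template also handles the signal-based asymmetric-information variant, since in each case the equilibrium condition holds pointwise against any deviation that is measurable with respect to the player's own private information.

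For the deviation itself, I take $b_i^*(v_i) = v_i \cdot Z$, where $Z \in [0,1]$ is drawn from a density $f_Z$ to be optimized later; non-overbidding is automatic because $Z \leq 1$. Fix $\bids,\vals,\gamma$, let $k = \nu(\vals,\gamma,i)$ denote the slot player $i$ occupies in the efficient assignment, and let $\sigma(z)$ be the slot she wins when her effective bid is $\gamma_i v_i z$ against the fixed competitor effective bids $\{\gamma_j b_j\}_{j \neq i}$. The utility from this deviation is $\alpha_{\sigma(z)}(\gamma_i v_i - p(z))$, where $p(z)$ is the effective bid immediately below $\gamma_i v_i z$ in the sorted competitor list. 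I would split the analysis into two regimes: when $\sigma(z) \leq k$, the deviation utility captures an $\alpha_{\sigma(z)}/\alpha_k$ fraction of the optimal contribution $\alpha_k \gamma_i v_i$, minus a payment; when $\sigma(z) > k$, the bidders blocking $i$ from reaching slot $k$ have effective bids exceeding $\gamma_i v_i z$, and those effective bids can be charged against the equilibrium welfare $SW(\pi(\bids,\gamma),\vals,\gamma)$, using no overbidding to convert effective bids into contributions to effective values.

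Averaging over $Z$ and summing over $i$ produces an inequality of the form
\[
\sum_i \E_{Z}\bigl[u_i(v_i Z, \bidsmi, \gamma)\bigr] \;\geq\; \lambda(f_Z)\cdot OPT(\vals,\gamma) \;-\; \mu(f_Z)\cdot SW(\pi(\bids,\gamma),\vals,\gamma),
\]
with $\lambda(f_Z)$ and $\mu(f_Z)$ given by explicit integrals against $f_Z$. Solving the resulting one-dimensional functional optimization to minimize $(1+\mu(f_Z))/\lambda(f_Z)$ then yields the numerical constant $2.927$.

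The principal technical obstacle is the charging argument in the second regime: when player $i$'s randomized deviation does not reach her optimal slot, the price she would face after deviating (and the utility she fails to gain) must be absorbed into $SW(\pi(\bids,\gamma),\vals,\gamma)$ rather than $OPT$. The key insight is that the competitors obstructing $i$ must themselves have high effective bids, so (via no overbidding) their equilibrium contributions absorb the shortfall; randomising $Z$ rather than bidding $v_i$ deterministically is what smooths this trade-off enough to yield a finite constant. A second subtlety, which the semi-smoothness framing handles painlessly, is that bidders do not observe $\gamma$: requiring $b_i^*$ to be a function of $v_i$ alone ensures the deviation is admissible in the Bayesian game even though the realised slot $\sigma(Z)$ depends on both $\bidsmi$ and $\gamma$, which are unknown to the bidder.
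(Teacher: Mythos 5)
Your skeleton is the right one for the paper's \emph{weaker} bound of $2(1-1/e)^{-1}\approx 3.164$ (Lemma \ref{smoothness-lemma} plus Lemma \ref{lem.smoothness-to-bpoa}), but as described it cannot reach $2.927$, and the gap is quantitative, not cosmetic. If you insist on a genuine $(\lambda,\mu)$-semi-smoothness inequality $\sum_i \E[u_i(b_i^*(v_i),\bidsmi,\gamma)] \geq \lambda\, OPT - \mu\, SW$, the per-player bound you get from the deviation $b_i^*=v_iZ$ is $\E_Z[u_i] \geq \alpha_{\nu(i)}\gamma_i v_i \int_z^1 (1-y)f_Z(y)\,dy$ with $z=\gamma_{\pi(\nu(i))}b_{\pi(\nu(i))}/(\gamma_i v_i)$, and converting the blocking effective bid into a welfare term via no overbidding forces the constraint $\int_z^1(1-y)f_Z(y)\,dy \geq \lambda-\mu z$ for all $z\in[0,1]$ with $\int_0^1 f_Z\leq 1$. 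Optimizing $\lambda/(1+\mu)$ over densities subject to these constraints (the extremal family is $f_Z(y)=\mu/(1-y)$ on $[0,\lambda/\mu]$, giving $\lambda=\mu(1-e^{-1/\mu})$) tops out near $0.318$, i.e., a price of anarchy of about $3.15$. So the ``one-dimensional functional optimization'' you defer to at the end does not have $2.927$ as its optimum; this is precisely why the paper presents $3.164$ as the semi-smoothness bound and calls the improvement ``a modification of semi-smoothness specially tailored to GSP.''

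The two ideas you are missing are in Appendix \ref{appendix:bayes}. First, the loss term must be kept as a sum of effective \emph{bids} $\sum_j \alpha_j\gamma_{\pi(j)}b_{\pi(j)}$ rather than converted to welfare: in the identity $SW=\sum_i u_i + \sum_j \alpha_j\gamma_{\pi(j+1)}b_{\pi(j+1)}$, the payment collected from the occupant of slot $j-1$ is $\alpha_{j-1}\gamma_{\pi(j)}b_{\pi(j)}\geq \alpha_j\gamma_{\pi(j)}b_{\pi(j)}$, so every term with $j\geq 2$ is cancelled at rate $1$ instead of being charged against $SW$ through the lossy inequality $b\leq v$. Second, the term $\alpha_1\gamma_{\pi(1)}b_{\pi(1)}$ corresponds to no payment, so the player whose efficient slot is slot $1$ must be handled by a different deviation bound (property (ii) versus (iii) of the $(\beta,\delta)$-bounded functions in Definition \ref{def:bounded}), which produces the compensating $+\E[\alpha_1\gamma_{\pi(1)}b_{\pi(1)}]$ in inequality (\ref{semi-smooth-weak}). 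This yields a price of anarchy of $(1+\delta)/\beta$ rather than $(2+\delta)/\beta$, and only then does the optimization over densities (Lemma \ref{lem:function}, with $\kappa=1.7507$, $\lambda=0.225$, $\mu=0.7966$) produce $2.927$. Your Bayesian lifting argument and the observation that $b_i^*$ must depend only on $v_i$ are correct and match the paper; the defect is solely in the claim that pure semi-smoothness suffices for the stated constant.
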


\paragraph{Semi-smooth games and the price of anarchy.}
Our proof is based on an extension of a proof technique introduced by
Roughgarden \cite{roughgarden}, which he calls smoothness. We begin by reviewing
this notion briefly in the context of a general game.
Let $\ts$ denote the (fixed) player types in a game, and $\hs$ a pure strategy
profile for the players, and let $U_i(\ts,\hs)$ denote the utility of  player
$i$ with player types $\ts$, and strategy profile $\hs$.
Let $sw(\ts, \hs)$ denote the social welfare generated by strategy profile
$\hs$, and $sw^*(\ts)$ the maximum possible social welfare.
Roughgarden defines $(\lambda,\mu)$-smooth games as games where for all pairs
of
pure strategy profiles $\hs, \hs'$, and any (fixed) vector of types $\ts$, we have
$$\sum_i U_i(\ts, h'_i, \hs_{-i}) \geq \lambda \cdot sw(\ts, \hs') - \mu \cdot sw(\ts, \hs).$$
Roughly speaking, smoothness captures the property that if strategy profile $\hs'$
results in a significantly larger social welfare than another strategy profile
$\hs$, then a large part of this gap in welfare is captured by the marginal increases in the
utility of each individual player when unilaterally switching her strategy from
$h_i$ to $h'_i$.

{It is not hard to see that GSP does not satisfy this definition for all
pairs of strategy profiles $\hs, \hs'$.  However, we argue
that GSP is smooth with respect to a
\emph{particular} (possibly randomized) strategy profile $\hs'$, as defined by Nadav and Roughgarden \cite{NadavRoughgarden2010}, that can be used by players unilaterally to improve
the efficiency of GSP whenever its allocation resulting from a
pure strategy profile $\hs$ is highly inefficient. Note that unlike
\cite{NadavRoughgarden2010} we require improvement relative to the social
optimum $sw^*(\ts)$ and not relative to $sw(\ts,\hs')$, i.e., {we} will not
assume that  $sw(\ts,\hs')$ is (close to) the maximum $sw^*(\ts)$.
Further, we will show that there exists such a strategy profile $\hs'$ where
the strategy $\hs'_i$ of a player depends only on the type of the player. We call games that satisfy this stronger requirement semi-smooth.}

\begin{defn}[semi-smooth games]\label{defn:semi-smooth-games}
We say that a game is \textbf{$(\lambda,\mu)$-semi-smooth} if for each player $i$ there exists some
(possibly randomized) strategy $h_i'(\cdot)$ (depending only on the type of the player) such that,
$$\sum_i \E_{h'_i(t_i)} [ U_i(\ts, h'_i(t_i), \hs_{-i})]  \geq \lambda \cdot sw^*(\ts)
- \mu \cdot sw(\ts,\hs),$$
for every pure strategy profile $\hs$ and every (fixed) type vector $\ts$. The expectation is taken over the random bits of $h_i'(t_i)$.
\end{defn}

Analogous to  Roughgarden's \cite{roughgarden} proof {(see also Nadav and
Roughgarden \cite{NadavRoughgarden2010}),} semi-smoothness also immediately
implies a bound on the price of anarchy with uncertainty {even when the
types are arbitrarily correlated}.

\begin{lemma}\label{lem.smoothness-to-bpoa}
If a game is $(\lambda,\mu)$-semi-smooth and its social welfare is at least the sum of the players' utilities, then the price of anarchy with uncertainty (and information asymmetries) is at most $(\mu+1)/\lambda$.
\end{lemma}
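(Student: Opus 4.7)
The plan is to follow Roughgarden's smoothness-to-PoA template, with the crucial new ingredient that the semi-smoothness deviation $h_i'(\cdot)$ depends only on player $i$'s own type. This is exactly what makes it realizable as a unilateral deviation by $i$ in a Bayes--Nash equilibrium, whereas a generic smoothness-style deviation may depend on the full type profile and therefore cannot be evaluated by $i$ under partial information. Semi-smoothness is thus tailor-made to survive the averaging over types and quality factors inherent in the Bayesian setting.

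Fix a joint distribution over type profiles $\ts$, a Bayes--Nash equilibrium strategy profile $\bids(\cdot)$, and the per-player deviations $h_i'(\cdot)$ guaranteed by semi-smoothness. For each player $i$ and each realized type $t_i$, the BNE condition applied to the alternative strategy $b_i' = h_i'(t_i)$ gives
$$\E_{\tsmi, \bids}[U_i(\ts, \bids(\ts)) \mid t_i] \;\geq\; \E_{\tsmi, \bids, h_i'}[U_i(\ts, h_i'(t_i), \bidsmi(\tsmi)) \mid t_i].$$
I would then sum over $i$ and take expectation over $t_i$. The left-hand side collapses to $\sum_i \E[U_i(\ts, \bids(\ts))]$, which by the hypothesis $sw \geq \sum_i U_i$ is bounded above by $\E_{\ts}[sw(\ts, \bids(\ts))]$. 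For the right-hand side, I apply the semi-smoothness inequality pointwise to each realization of $\ts$, with $\bids(\ts)$ playing the role of the pure profile $\hs$, to obtain
$$\sum_i \E_{h_i'(t_i)}[U_i(\ts, h_i'(t_i), \bidsmi(\tsmi))] \;\geq\; \lambda \cdot sw^*(\ts) \;-\; \mu \cdot sw(\ts, \bids(\ts)).$$
Taking expectation over $\ts$ and chaining with the previous BNE inequality yields $\E[sw(\ts, \bids(\ts))] \geq \lambda \E[sw^*(\ts)] - \mu \E[sw(\ts, \bids(\ts))]$, which rearranges to the claimed bound of $(1+\mu)/\lambda$.

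The main subtlety, as flagged above, is conceptual rather than computational: one must recognize that the deviation being a function only of $t_i$ is exactly what permits chaining the BNE inequality with semi-smoothness \emph{inside} the expectation over $\ts$; without this, one would need the more delicate ``independent type-sample'' trick that Roughgarden's smoothness framework requires in the Bayesian setting. The same proof template then extends verbatim to the information-asymmetry model of Section~\ref{subsec:assymmetric_informatation}: the player's type is the signal $s_i$, the deviation $h_i'(s_i)$ depends only on $s_i$ (and hence on the induced valuation $v_i(s_i)$), so it remains an admissible Bayesian deviation for player $i$. Minor bookkeeping is required to keep the internal randomness of $h_i'$ and of $\bids(\cdot)$ separate, but this is transparent by linearity of expectation.
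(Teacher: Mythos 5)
Your proof is correct and follows essentially the same route as the paper: apply the Bayes--Nash condition to the type-only deviation $h_i'(t_i)$, sum over players, invoke the hypothesis that welfare dominates the sum of utilities together with the pointwise semi-smoothness inequality, and rearrange; the extension to signals is handled identically. The only cosmetic difference is that the paper first states the equilibrium inequality for each realization of $h_i'(t_i)$ before averaging over its internal randomness, whereas you fold that averaging in directly.
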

\begin{proof}
Consider a game in the Bayesian setting where player types are drawn from a joint probability distribution and let
$\hs$ be a Bayes-Nash equilibrium for this game. By the definition of the Bayes-Nash equilibrium, we have that
$\E_{\tsmi,\hs} [U_i(\ts, \hs) \vert t_i] \geq \E_{\tsmi,\hs}[U_i(\ts, h'_i(t_i), \hs_{-i}) \vert t_i]$
for every value the random variable $h'_i(t_i)$ may take. Hence,
$\E_{\tsmi,\hs} [U_i(\ts, \hs) \vert t_i] \geq \E_{\tsmi,\hs}\E_{h'_i(t_i)}[U_i(\ts, h'_i(t_i), \hs_{-i}) \vert t_i]$.
Now taking expectation over $t_i$, we get
$\E_{\ts,\hs} [U_i(\ts, \hs)] \geq \E_{\ts,\hs}\E_{h'_i(t_i)}[U_i(\ts, h'_i(t_i), \hs_{-i})]$.
By summing over all players, and using the fact that the social welfare is at least the sum of the players' utilities, as well as the semi-smoothness property, we have
\begin{eqnarray*}
\E_{\ts,\hs}[sw(\ts,\hs)] &\geq & \E_{\ts,\hs}[\sum_i{U_i(\ts,\hs)}]\\
&\geq & \E_{\ts,\hs}[\sum_i{\E_{h'_i(t_i)}[U_i(\ts,h'_i(t_i),\hs_{-i})]}]\\
&\geq & \E_{\ts,\hs}[\lambda \cdot sw^*(\ts) - \mu \cdot sw(\ts,\hs)]\\
& = & \lambda \E_\ts[sw^*(\ts)] - \mu \E_{\ts,\hs}[sw(\ts,\hs)].
\end{eqnarray*}
Note that the third inequality follows by applying the semi-smoothness property
for every fixed type vector and every pure strategy profile that are
simultaneous outcomes of the random vectors $\ts$ and $\hs$. The last inequality
implies $\E_{\ts}[sw^*(\ts)] \leq \frac{\mu+1}{\lambda}
\E_{\ts,\hs}[sw(\ts,\hs)]$, as claimed.
\end{proof}

We remark that the proof holds without significant changes if we add information
asymmetries in the game, i.e., if we assume that each player gets signals that
reveal her type and refine her knowledge on the probability distributions of the
types of the other players. {The only change required is to define an
extended type for each player, consisting of the player's original type composed with
that player's signal, and use it in place of the original type.}

{A particular strength} of Lemma \ref{lem.smoothness-to-bpoa} lies in the fact
that it can provide bounds on the efficiency loss for Bayesian games
{even with correlated types} (and, as we will see later in Section
\ref{sec.learning}, under even more general equilibrium concepts) by examining
substantially more restricted settings. In the context of GSP auction games, it
allows us to focus on identifying a (possibly randomized) deviating bid strategy
for each player (i.e., a bid $b'_i$ for each player $i$) so that the
semi-smoothness inequality holds for every fixed valuation vector $\vals$ and
pure bidding profile $\bids$.
By Lemma \ref{lem.smoothness-to-bpoa}, this then immediately implies
a bound on the price of anarchy of GSP auction games with uncertainty and information asymmetries.

\begin{remark}
{It is maybe easier to interpret a deterministic special case of Lemma
\ref{lem.smoothness-to-bpoa} where we require that Definition
\ref{defn:semi-smooth-games} holds for deterministic bids $h'_i(\cdot)$. As a
warmup {in analyzing} the GSP auction, we will show {in} Claim
\ref{smoothness-claim} that the bids $b'_i=\frac{1}{2}v_i$ can serve to prove
that the GSP auction is $(1/2,1)$-smooth, and hence has a price of anarchy of at
most 4. This bound of 4 on the price of anarchy {shows} that if the
social welfare is less than a quarter of the maximum possible, there is a player
$i$ who can deviate to $b'_i = \frac{1}{2} v_i$, a natural shading of
{her} bid, and improve {her utility}.}

{To  improve the bound we will consider a deviating bid $b_i = \theta \cdot
v_i$ for other constants $\theta \in (0,1)$. In fact, we will need to {consider}
a
random $\theta$ rather than a constant one.
There are two ways to understand such a random bid: a direct conclusion is that sampling $\theta$  according to the prescribed distribution produces a
good deviation in expectation, whenever welfare is low. But maybe a more natural interpretation is through the lenses of the \emph{probabilistic method}, used in combinatorics to
show that a certain object exists without finding it explicitly. If {there
exists} a randomized deviation $b_i = \theta \cdot v_i$ that improves
{player $i$'s utility}, this implies that there exists a deterministic
bid $\theta v_i$ that improves {player $i$'s utility}.}

{The randomization on selecting the bid $h'_i(\cdot)$ in Definition
\ref{defn:semi-smooth-games} gives us more flexibility to prove the
semi-smoothness inequality with good parameters $\lambda$ and $\mu$ by defining
appropriately the density function of $h'_i$'s.}
\end{remark}

\paragraph{Price of anarchy of GSP auctions.}
First note that, technically speaking, the GSP auction does not immediately fit
into the framework of semi-smoothness: advertiser payoffs depend on random
quality factors which may be correlated with the type profile.  However, this
notational technicality is easily addressed by expressing advertiser utilities
in expectation over quality scores. 
That is, expressing utilities in the GSP auction in the notation of general games, we have $U_i(\vals,\bids) = \E_\gamma[u_i(\bids,\gamma) \vert \vals]$. Since quality factors affect the social welfare as well, we have $sw^*(\vals) = \E_\gamma[OPT(\vals,\gamma) \vert \vals]$ and $sw(\vals,\bids) =
\E_\gamma[SW(\pi(\bids,\gamma),\vals,\gamma) \vert \vals]$.

We are ready to prove that GSP auction games are semi-smooth.
We start by presenting a slightly weaker version of Theorem \ref{thm:cbpoa_main},
where we prove a bound of $3.164$. Then we sketch the proof of the improved bound of $2.927$, which is more technically involved. Details of the proof can be found in Appendix \ref{appendix:bayes}.


\begin{lemma}\label{smoothness-lemma}
The GSP auction game is $(1-\frac{1}{e},1)$-semi-smooth.
\end{lemma}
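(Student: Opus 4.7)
The plan is to exhibit, for each player $i$, a randomized deviation $b_i'(v_i)$ depending only on $v_i$ for which the semi-smoothness inequality can be verified pointwise in $(\vals, \gamma, \bids)$; averaging over $\gamma$ then yields the inequality under the game-theoretic payoffs $U_i(\vals,\bids) = \E_\gamma[u_i(\bids,\gamma) \mid \vals]$. My candidate deviation is $b_i'(v_i) = X v_i$, where $X$ is drawn from the density $f(x) = 1/(1-x)$ on $[0, 1 - 1/e]$. A one-line check gives $\int_0^{1-1/e} (1-x)^{-1}\, dx = 1$, so $f$ is a valid density, and since the support lies in $[0,1)$ the bid never overbids.

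Now fix $(\vals, \gamma, \bids)$. When player $i$ deviates to $X v_i$, her effective bid $\gamma_i X v_i$ is ranked against the effective bids of $\bids_{-i}$, giving her some slot $R_i(X)$; the next-highest effective bid is at most her own, so her per-click payment is at most $X v_i$, and $u_i \geq \alpha_{R_i(X)} \gamma_i v_i (1 - X)$. Writing $\beta_k^{-i}$ for the $k$-th largest effective bid in $\bids_{-i}$ (with $\beta_n^{-i} := 0$), we have $R_i(X) \leq k$ precisely when $X > \beta_k^{-i}/(\gamma_i v_i)$. Setting $k = \nu(\vals,\gamma,i)$ and $a = \beta_k^{-i}/(\gamma_i v_i)$, the density is tailored so that $(1-x)f(x) \equiv 1$ on its support, yielding the tight identity
\[
\E_X\bigl[(1-X)\,\mathbf{1}\{X > a\}\bigr] \;=\; \max\!\bigl(0,\,(1 - 1/e) - a\bigr).
\]
Combined with $\alpha_{R_i(X)} \geq \alpha_k$ on $\{X > a\}$ and the trivial bound $\max(0,y) \geq y$, this yields
\[
\E_X[u_i] \;\geq\; \alpha_{\nu(\vals,\gamma,i)}\bigl[(1 - 1/e)\,\gamma_i v_i \;-\; \beta_{\nu(\vals,\gamma,i)}^{-i}\bigr].
\]

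Summing over players, the leading block of terms equals $(1 - 1/e)\cdot OPT(\vals,\gamma)$ because $\nu(\vals,\gamma,\cdot)$ is a bijection between players and slots. For the subtracted block I would use $\beta_k^{-i} \leq \beta_k$ (the $k$-th largest effective bid in the full profile) to reduce $\sum_i \alpha_{\nu(\vals,\gamma,i)}\,\beta_{\nu(\vals,\gamma,i)}^{-i}$ to $\sum_k \alpha_k \beta_k$, and then invoke the no-overbidding hypothesis $\gamma_j b_j \leq \gamma_j v_j$ to conclude $\sum_k \alpha_k \beta_k \leq SW(\pi(\bids,\gamma),\vals,\gamma)$. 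This pointwise-in-$\gamma$ inequality survives taking expectation over $\gamma$, producing exactly the $(1-1/e,1)$-semi-smoothness inequality.

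The main obstacle is picking the correct deviation distribution: the density $1/(1-x)$ on $[0, 1-1/e]$ is uniquely tailored so that $(1-x)f(x)$ is constant on the support, which turns the expectation above into a clean linear function of $a$ and delivers exactly the $(1-1/e)$ coefficient. A secondary subtlety is the reduction $\beta_k^{-i} \leq \beta_k$ followed by the charge to $SW$, which critically relies on players not overbidding.
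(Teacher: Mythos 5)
Your proposal is correct and follows essentially the same route as the paper: the deviation $b_i'=Xv_i$ with density $1/(1-x)$ on $[0,1-1/e]$ is exactly the paper's randomized bid (after the change of variables $y=xv_i$), the identity $\E_X[(1-X)\mathbf{1}\{X>a\}]=\max(0,(1-1/e)-a)$ is the paper's $[\,\cdot\,]^+$ computation, and the final charge of $\sum_k\alpha_k\beta_k$ to $SW(\pi(\bids,\gamma),\vals,\gamma)$ via no overbidding is the paper's concluding step. The only cosmetic difference is that you compare against the $\nu(i)$-th highest effective bid in $\bidsmi$ and then pass to the full-profile order statistic, whereas the paper compares directly against the effective bid of the occupant of slot $\nu(i)$; both yield the same bound.
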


\begin{proof}
We begin by rewriting the definition of semi-smoothness in the notation of GSP auctions.  The GSP auction game is $(1-\frac{1}{e},1)$-semi-smooth if and only if, for each valuation profile $\vals$, there exists a (possibly randomized) bid profile $\bids'$ (with $\bid'_i$ depending only on the valuation of player $i$) such that, for every bid profile $\bids$,
\begin{equation}
\label{eq.semismooth.expect}
\sum_i{\E_{\gamma,b'_i}[u_i(b'_i,\bidsmi,\gamma) \vert \vals]} \geq \left( 1-\frac{1}{e} \right)\E_\gamma[OPT(\vals,\gamma) \vert \vals] - \E_\gamma[SW(\pi(\bids,\gamma), \vals, \gamma) \vert \vals].
\end{equation}
We will actually establish the stronger property that this inequality holds for
\emph{all} $\gamma$, and not {only} in expectation.
\begin{equation}
\label{eq.semismooth.noexpect}
\sum_i{\E_{b'_i}[u_i(b'_i,\bidsmi,\gamma)]} \geq \left( 1-\frac{1}{e} \right)OPT(\vals,\gamma)  - SW(\pi(\bids,\gamma), \vals, \gamma).
\end{equation}
The desired inequality \eqref{eq.semismooth.expect} will then follow by taking \eqref{eq.semismooth.noexpect} in expectation over the choice of $\gamma$ (whose distribution may depend on the valuation profile $\vals$).

Before establishing inequality \eqref{eq.semismooth.noexpect}, we will prove the even weaker statement that the GSP auction game is $(1/2, 1)$-semi-smooth (which implies a bound of $4$ on
the price of anarchy with uncertainty).
\begin{claim}\label{smoothness-claim}
The GSP auction game is $(1/2,1)$-semi-smooth.
\end{claim}
\begin{proof}
Choose a vector $\vals$ of fixed valuations, a pure bidding
profile $\bids$, and quality factors $\gamma$.
Consider a  (deterministic) deviating bid $b'_i = v_i/2$ for each player $i$. We
distinguish between two cases (recalling that $\nu(i)$ is the slot assigned to player $i$
in the efficient allocation given $\vals$ and $\gamma$):

\begin{itemize}
\item If by bidding $b'_i$ player $i$ gets slot $\nu(i)$ or better, then
$u_i(b'_i, \bidsmi, \gamma )\ge \alpha_{\nu(i)} \gamma_i v_i/2$, as the payment
$p_i$ cannot exceed her effective bid.
\item If by bidding $b'_i$ player $i$ gets a slot lower than $\nu(i)$, then the
effective value of the player $\pi(\nu(i))$ in slot $\nu(i)$ is at least $\gamma_i
v_i/2$, as we assume no overbidding.
\end{itemize}
We conclude that, in either case,
$$
u_i(b'_i, \bidsmi, \gamma )\ge \alpha_{\nu(i)} \gamma_i v_i/2 -\alpha_{\nu(i)}
\gamma_{\pi(\nu(i))} v_{\pi(\nu(i))}.$$
Summing over all players, and noticing that $\sum_i \alpha_i
\gamma_{\pi(i)} v_{\pi(i)}=SW(\pi(\bids,\gamma), \vals, \gamma)$, while $\sum_i
\alpha_{\nu(i)} \gamma_i v_i=OPT(\vals, \gamma)$, we arrive at the claimed bound that
the GSP auction game is $(1/2, 1)$-semi-smooth:
$$\sum_i{u_i(b'_i,\bidsmi,\gamma)} \geq
\frac{1}{2}OPT(\vals,\gamma) - SW(\pi(\bids,\gamma), \vals,\gamma).$$
\end{proof}

Notice that the proof uses a single Nash inequality: that no player $i$ would be
better off changing her bid to $b'_i = v_i/2$, bidding half her {valuation}, a
natural shading of her {valuation}. As we will see in
Section~\ref{sec.learning}, the bound will also apply to learning outcomes under
the same assumption of not regretting this single alternative.

Now we return to proving the $(1-\frac{1}{e},1)$ semi-smoothness. To do this, consider a randomized bid $\bids'$,
rather than the deterministic bid of $b'_i=v_i/2$ considered above, that offers a more sophisticated bid-shading strategy. We consider a random strategy where player $i$ shades her bid randomly to a value in the interval $[0,v_i(1-\frac{1}{e})]$, where bid $b'_i$
is a random variable with density $f(y) = \frac{1}{v_i-y}$ for $y \in [0,
v_i(1-\frac{1}{e})]$ and $f(y) = 0$ otherwise. We will show that
\begin{equation}
\label{eq.smooth.1}
\E_{b'_i} [u_i(b'_i, \bids_{-i}, \gamma)] \geq \left( 1-\frac{1}{e} \right)
\alpha_{\nu(i)}
\gamma_i v_i -
\alpha_{\nu(i)} \gamma_{\pi( \nu(i))} b_{\pi(\nu(i))}.
\end{equation}
Like in the proof of Claim \ref{smoothness-claim}, by summing expression \eqref{eq.smooth.1} for all $i$ and using the fact
that $b_{\pi(i)} \leq v_{\pi(i)}$ by the
non-overbidding assumption, we obtain that the game is
$(1-\frac{1}{e},1)$-semi-smooth.

It remains to derive equation \eqref{eq.smooth.1}. 
We have that
\begin{eqnarray*}
\E_{b'_i} [u_i(b'_i, \bidsmi, \gamma)] & \geq & \E_{b'_i} [ \alpha_{\nu(i)}
\gamma_i ( v_i
- b'_i) \one{\gamma_i b'_i \geq \gamma_{\pi(\nu(i))}
b_{\pi(\nu(i))}} ] \\
& = & \int_0^{v_i( 1-\frac{1}{e} )} \alpha_{\nu(i)} \gamma_i (v_i
- y) \one{\gamma_i y \geq \gamma_{\pi(\nu(i))}
b_{\pi(\nu(i))}} \frac{1}{v_i-y} dy \\
& = & \alpha_{\nu(i)} \gamma_i \left[ v_i\left( 1-\frac{1}{e} \right) -
\frac{\gamma_{\pi(\nu(i))}}{\gamma_i} b_{\pi(\nu(i))} \right]^+\\
& \geq & \left( 1-\frac{1}{e} \right)
\alpha_{\nu(i)}
\gamma_i v_i - \alpha_{\nu(i)} \gamma_{\pi(\nu(i))} b_{\pi(\nu(i))}
\end{eqnarray*}
which implies {\eqref{eq.smooth.1},}
completing the proof of Lemma \ref{smoothness-lemma}.
\end{proof}

Combining Lemmas \ref{lem.smoothness-to-bpoa} and \ref{smoothness-lemma}, we get the claimed bound on the price of anarchy.
\begin{theorem}
\label{thm:cbpoa}
The price of anarchy of the Generalized Second Price auction with uncertainty (and with information asymmetries) is at most
$2(1-1/e)^{-1} \approx 3.164$.
\end{theorem}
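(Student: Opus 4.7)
The plan is to derive the theorem by direct combination of Lemmas~\ref{lem.smoothness-to-bpoa} and~\ref{smoothness-lemma}, which together have already carried out essentially all of the work. Lemma~\ref{smoothness-lemma} certifies that the GSP auction game is $(1-1/e,\,1)$-semi-smooth, while Lemma~\ref{lem.smoothness-to-bpoa} converts any $(\lambda,\mu)$-semi-smoothness certificate into a Bayesian price-of-anarchy bound of $(\mu+1)/\lambda$, provided the social welfare dominates the sum of players' utilities.

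First I will verify the welfare-dominance hypothesis of Lemma~\ref{lem.smoothness-to-bpoa} for GSP: each player's utility equals the effective value of her assigned slot minus a nonnegative payment, so summing over players gives $\sum_i u_i(\bids,\gamma) = SW(\pi(\bids,\gamma),\vals,\gamma) - \sum_i \alpha_{\sigma(i)}\gamma_i p_i \leq SW(\pi(\bids,\gamma),\vals,\gamma)$. Second, I will address a minor mismatch: Lemma~\ref{lem.smoothness-to-bpoa} is stated for games with deterministic payoffs, whereas GSP utilities depend on the random quality factors $\gamma$. As the paragraph preceding Lemma~\ref{smoothness-lemma} explains, one identifies $U_i(\vals,\bids) := \E_\gamma[u_i(\bids,\gamma)\mid\vals]$ and analogously $sw^*(\vals) := \E_\gamma[OPT(\vals,\gamma)\mid\vals]$ and $sw(\vals,\bids) := \E_\gamma[SW(\pi(\bids,\gamma),\vals,\gamma)\mid\vals]$; and since the proof of Lemma~\ref{smoothness-lemma} in fact established the semi-smoothness inequality pointwise in $\gamma$ (see \eqref{eq.semismooth.noexpect}), taking a conditional expectation over $\gamma$ given $\vals$ yields exactly the semi-smoothness statement needed by the general framework.

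With both hypotheses in place, the theorem follows from substituting $\lambda = 1-1/e$ and $\mu = 1$ into $(\mu+1)/\lambda$, which gives $2/(1-1/e) = 2e/(e-1) \approx 3.164$. I anticipate no real obstacle, since the substantive content lives entirely in the two lemmas: the randomized-deviation construction with density $1/(v_i - y)$ on $[0, v_i(1-1/e)]$ inside Lemma~\ref{smoothness-lemma}, and the Bayes-Nash-to-pointwise transfer inside Lemma~\ref{lem.smoothness-to-bpoa}. The theorem itself reduces to a one-line arithmetic combination, together with the brief bookkeeping above to reconcile the generic semi-smoothness framework with GSP's $\gamma$-randomized payoffs.
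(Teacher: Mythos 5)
Your proposal is correct and is essentially identical to the paper's own (one-line) proof, which simply combines Lemma~\ref{lem.smoothness-to-bpoa} with Lemma~\ref{smoothness-lemma} and substitutes $\lambda = 1-1/e$, $\mu = 1$. The two bookkeeping points you raise (welfare dominating the sum of utilities, and folding the random quality factors into expected utilities) are exactly the ones the paper handles in the paragraphs surrounding Lemma~\ref{smoothness-lemma}.
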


{To prove Theorem \ref{thm:cbpoa_main} we will need to extend semi-smoothness
specially tailored to the GSP auction game. In addition to working more
carefully on optimizing constants, we want to highlight two ideas: First, note
{that the} payment of the player in slot $i$ is $\alpha_i
\gamma_{\pi(i+1)} b_{\pi(i+1)}$, and this payment also contributes to the social
welfare. Using that $\alpha$ is monotone decreasing, we {can add} a term
$\alpha_i \gamma_{\pi(i)} b_{\pi(i)}$ to social welfare (in addition to the
advertisers' utility) for all slots
 except the top one. Second, the player in the top slot can obtain a stronger
bound on her utility by considering the deviation $b'_1=v_1$. (For all other
players bidding too close to $v_i$ endangers getting a higher slot at too high a
price, but the top player {does not} face this danger.) We will use
{these ideas} in Section \ref{sec.learning} to improve our bound on the
learning outcomes for full information games. The details of the more
complicated improved bound for the Bayesian case are found in Appendix
\ref{appendix:bayes}.}

\paragraph{Discretization of the bidding space.}
{Analogous results also hold when the possible bid space is discretized
(i.e., players need to bid in integer number of pennies). With a finely enough
discretized bid space, the players could approximately follow the bidding
strategies used in the above proofs, as well as in the proofs in
Appendix \ref{appendix:bayes}. The Nash property then implies that the
same bound holds at the equilibria with a small loss due to the
discretization. Recall that this case is both of practical relevance,
and using a result of Athey \cite{Athey} and Reny \cite{Reny}, in the discrete
case if player types and quality factors are drawn independently, the existence of pure strategy equilibria that are monotone in the types is also
guaranteed.}

{In order to illustrate this point, we {show how to} adapt Lemma
\ref{smoothness-lemma} and Theorem \ref{thm:cbpoa} to the case where} {possible
bids are discrete. Assume bids $b_i$ must be in the finite set $T_{\epsilon, K} =
\{0, \epsilon, 2 \epsilon, \hdots, K \epsilon \}$ for some large integer $K$. We
{also need} to assume that $\epsilon$ is small compared to the
{valuations}. We will assume that $\epsilon < \frac{1}{e}$ and all types
in the support of the
distribution are $v_i \geq 1$. We show that:}

\begin{lemma}\label{smoothness-lemma-discrete}
The GSP auction game with {discretized} bid is
$((1-\frac{1}{e})(1-e\epsilon),1-e\epsilon)$-semi-smooth assuming $\epsilon <
\frac{1}{e}$ and $v_i \geq 1$ for all $i$. That is, there is a deviation $b_i'$
from the discrete space $T_{\epsilon, K}$ that {satisfies} the
semi-smoothness inequality.
\end{lemma}

\begin{proof}
The proof follows from a small modification of Lemma \ref{smoothness-lemma}.
There we considered the deviation where a player with {valuation} $v_i$ samples
a
bid $b'_i$ from the distribution with density $f(y) = \frac{1}{v_i - y}$ for $y
\in [0, (1-\frac{1}{e})v_i]$. In this setting{,
bids} must lie in $T_{\epsilon, K}$, so we use a rounded version instead:
$\hat{b}'_i = \epsilon \cdot \lceil \frac{y}{\epsilon} \rceil$. This change
increases the probability that $\gamma_i \hat{b}'_i \ge \gamma_{\pi(\nu(i))}
b_{\pi(\nu(i))}$, but decreases the expression $v_i-b'_i$ inside the integral.
{This decrease, however, is bounded since it holds that}
$${\min_y} \frac{v_i-\epsilon \cdot \lceil \frac{y}{\epsilon} \rceil}{v_i-y}\ge
(1- \frac{\epsilon e}{v_i})\ge 1- \epsilon e.$$
Using the same calculation as in the proof of Lemma \ref{smoothness-lemma}, we get that
\begin{eqnarray*}
\E_{\hat{b}'_i} [u_i(\hat{b}'_i, \bidsmi, \gamma)] & \geq &  (1-\epsilon e) \left( 1-\frac{1}{e} \right)
\alpha_{\nu(i)}
\gamma_i v_i - (1-\epsilon e) \alpha_{\nu(i)} \gamma_{\pi(\nu(i))}
b_{\pi(\nu(i))}.
\end{eqnarray*}
\end{proof}

{Using this Lemma \ref{smoothness-lemma-discrete} we {immediately} get
the following theorem.}

{
\begin{theorem}
The price of anarchy of the Generalized Second Price auction with uncertainty
and discretized bids is at most $(1+\frac{1}{1-\epsilon 
e})\cdot (1-\frac{1}{e})^{-1}=3.16+O(\epsilon)$, assuming that bids lie on
$T_{\epsilon, K}=\{0, \epsilon, 2 \epsilon, \hdots, K \epsilon \}$, for a large
integer $K$ and $\epsilon <1/e$, and that, for each player $i$, $v_i\geq 1$. 
\end{theorem}
}

\section{Pure Nash Equilibria in the Full Information Setting}\label{sec:fullinfo}

In this section we turn our attention to the full information
setting, where the quality factors $\gamma$ are
fixed and common knowledge. Without loss of generality we can assume that $\gamma_1 v_1 \geq
\gamma_2 v_2 \geq \hdots \geq \gamma_n v_n$. In this setting the strategy of a
player is a single bid $b_i \in [0, v_i]$, again assuming that players do not overbid. Our main result in this setting
is the following:

\begin{theorem} \label{thm:full_info_bound}
The (pure) price of anarchy of the Generalized Second Price auction in the full information setting
is at most $1.282$. In other words, for any fixed click-through-rates $\alpha$,
valuation profile $\vals$, and quality factors $\gamma$, if $\bids$ is a bid
profile in pure Nash equilibrium, then $SW(\pi(\bids), \vals) \geq \frac{1}{1.282} \cdot OPT(\vals)\approx 0.78 \cdot OPT(\vals) $.
\end{theorem}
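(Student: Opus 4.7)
I would reduce the theorem to a factor-revealing linear program derived from Nash-deviation inequalities. After absorbing quality factors into valuations so that $v_1 \geq v_2 \geq \cdots \geq v_n$ and the efficient allocation puts player $i$ in slot $i$, let $\pi$ denote the equilibrium permutation with $\sigma := \pi^{-1}$, and set $w_k := v_{\pi(k)}$ and $\beta_k := b_{\pi(k)}$ for the value and bid of whoever occupies slot $k$. Non-overbidding gives $\beta_k \leq w_k$, the GSP allocation rule forces $\beta_1 \geq \beta_2 \geq \cdots \geq \beta_n$, and we have $SW(\bids,\vals) = \sum_k \alpha_k w_k$ while $OPT(\vals) = \sum_k \alpha_k v_k$.

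For each player $j$ and each alternative slot $k'$, the Nash condition against deviating to slot $k'$ (combined with non-overbidding) gives
\[
\alpha_{\sigma(j)}\bigl(v_j - \beta_{\sigma(j)+1}\bigr) \;\geq\; \alpha_{k'}\max\{0,\; v_j - \beta_{k'}\}.
\]
Specializing $k' = j$, i.e., ``player $j$ deviates to her optimal slot'', and rearranging yields a lower bound on $\alpha_j \beta_j$ in terms of $\alpha_j v_j$, $\alpha_{\sigma(j)} v_j$ and $\alpha_{\sigma(j)} \beta_{\sigma(j)+1}$; using this single family of inequalities one recovers the golden-ratio bound of roughly $1.618$ in the style of Paes Leme and Tardos. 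To push the bound down to $1.282$ I would combine these with (i) no-overbidding, (ii) monotonicity $\beta_1 \geq \cdots \geq \beta_n$, and (iii) Nash deviations to multiple slots $k' \neq j$, producing a full system of linear inequalities in $(\alpha_k, v_k, w_k, \beta_k)$ for each fixed permutation $\pi$.

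The resulting factor-revealing LP has $OPT/SW$ as its objective; by LP duality, proving $OPT/SW \leq 1.282$ amounts to exhibiting nonnegative multipliers that combine the Nash inequalities into the desired aggregate bound $\sum_k \alpha_k v_k \leq 1.282 \sum_k \alpha_k w_k$. Taking the maximum over permutations $\pi$ and using the scale invariance of all constraints, I expect the extremal instance to have only a handful of ``active'' slots with CTRs and values of geometric form, so that the LP reduces to a low-dimensional parametric problem whose optimum is a transcendental expression with numerical value at most $1.282$. The stated matching lower bound of $1.259$ on a three-advertiser instance supports the guess that the extremal support is essentially constant-sized.

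\textbf{Main obstacle.} The hard part is tightness: a single Nash deviation per player loses a golden-ratio factor, so reaching $1.282$ forces one to couple deviations across players through the sortedness of the effective bid vector $\beta$, and to charge the optimal contribution at each slot against a carefully weighted mixture of current utility and payment terms. Producing the correct dual multipliers (or equivalently the extremal instance), and then verifying the sharp constant, is the main technical work; I expect the argument to split according to whether each player is displaced upward or downward relative to her optimal slot, and to end in a small numerical optimization whose value happens to be $1.282$.
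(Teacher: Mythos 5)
Your setup is sound: after folding in quality factors, the Nash deviation inequalities combined with no-overbidding give exactly the constraints the paper works with (its ``weak feasibility'' condition, Lemma \ref{theorem_weak_formulation}, is the special case of your inequality where player $\pi(j)$ deviates to slot $i$ and the payment terms are bounded by $0$ and $v_{\pi(i)}$), and you are right that a single deviation per player only yields the golden-ratio bound. But everything after that is a plan whose central step is precisely what is missing. You do not exhibit the dual multipliers, and more importantly the premise that makes the LP route tractable --- that the factor-revealing program has a constant-size extremal instance of ``geometric form'' --- is unsupported and, on the evidence of the paper, likely false: the tight bound for $n=3$ is $1.259$ while the general upper bound is $r=\frac{61+7\sqrt{217}}{128}\approx 1.28216$, so the worst case is not captured by any small fixed $n$ and no finite truncation of the program is known to suffice. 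Note also that the program is not literally an LP: the constraints are bilinear in $(\alpha,v)$, and linearizing by treating the products $\alpha_k v_j$ as variables discards the rank-one structure, so it is not clear the relaxation still certifies $1.282$.

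The paper's actual argument has a structural ingredient your proposal does not anticipate: induction on $n$. It first reduces to \emph{irreducible} weakly feasible allocations by decomposing the permutation $\pi$ into cycles (Fact \ref{fact:reducible}), each cycle giving an independent subgame. For an irreducible allocation it isolates three distinguished indices (the advertiser $j$ in slot $1$, the slot $i_1$ of advertiser $1$, and the slot $i_2$ of advertiser $i_1$), deletes two or three advertiser/slot pairs to obtain a restricted game whose induced allocation is still weakly feasible, applies the inductive hypothesis to that restricted game, and verifies inequality (\ref{eq:cost-final}) through a six-way case analysis on the relative order of $j,i_1,i_2$. The constant $r$ then emerges from a four-variable minimization (Lemma \ref{lem:technical}) over the two weak-feasibility constraints linking those indices --- this is the ``small numerical optimization'' you predicted, but it is reached only after the induction has reduced the global problem to those four parameters, not by guessing an extremal instance. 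To complete your approach you would need either to supply the induction (at which point you have rediscovered the paper's proof) or to prove a compactness/support-size result for the factor-revealing program, which is a genuinely open step in your write-up.
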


The bound above is very close to being tight, since we can exhibit an example with $3$
players and $3$ slots for which there is an equilibrium where the gap between
the optimal social welfare and the social welfare in equilibrium is $1.259$. Also, we can show
the following slightly stronger bound for a small number of players and slots.
Notice however that the bound in Theorem \ref{thm:full_info_bound} holds
regardless of the number of slots.

\begin{theorem} \label{few_slots_thm}
For $2$ players and $2$ slots, the price of anarchy is exactly $1.25$. For $3$
players and $3$ slots, the price of anarchy is exactly $1.259$. By
\emph{exactly} we mean that there is a particular GSP auction game with an equilibrium matching
this bound.
\end{theorem}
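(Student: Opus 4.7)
The plan is to prove both statements (upper bounds and matching examples) by direct case analysis over the possible Nash equilibrium allocations. Since with $k\in\{2,3\}$ slots there are only $k!$ permutations to consider, and after fixing a normalization of effective values most of the inefficiency analysis reduces to a small constrained optimization, this is entirely tractable.

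For the $2$-slot upper bound, I would normalize so that $\gamma_1 v_1 \ge \gamma_2 v_2$, making the efficient allocation the identity, so the only non-trivial candidate equilibrium is the swap $\pi=(2,1)$. The two no-deviation conditions (player $2$ does not prefer slot $2$, player $1$ does not prefer slot $1$), together with non-overbidding, translate into the single feasibility constraint $r \ge 1-a$ for $r := \gamma_2 v_2 / \gamma_1 v_1$ and $a := \alpha_2/\alpha_1 \in [0,1]$. The PoA ratio becomes $\frac{1+ar}{r+a}$, which is decreasing in $r$, so its maximum on the feasible region is attained at $r=1-a$, reducing to $1 + a - a^2$; optimizing over $a$ gives $5/4$ at $a = r = 1/2$. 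The tight example takes $\alpha=(2,1)$, $v=(2,1)$, $\gamma=(1,1)$, and verifies that the bid profile $b_1 = 1/2$, $b_2 = 1$ is a Nash equilibrium (both players are indifferent between their slots) with $SW=4$ and $OPT=5$.

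For the $3$-slot case, the same recipe applies but now with five non-identity permutations. For each candidate $\pi$, I would write down all deviation inequalities of the form ``player $\pi(k)$ does not benefit from moving to slot $k'$'' (each of which bounds the relevant bid in terms of the bid currently occupying slot $k'$ and the quality factors), together with the bid-ordering constraints $\gamma_{\pi(1)} b_{\pi(1)} \ge \gamma_{\pi(2)} b_{\pi(2)} \ge \gamma_{\pi(3)} b_{\pi(3)}$ and no-overbidding. Each permutation yields a fractional maximization of $OPT/SW(\pi)$ over the CTRs, effective values, and bids; the worst case over all five permutations gives the bound $1.259$. The matching example is then extracted from the argmax of the worst sub-problem, and its verification as an equilibrium reduces to checking the same finite list of deviation inequalities.

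The main obstacle is the bookkeeping in the $3$-slot case. The three ``transposition'' permutations largely reduce to the $2$-slot analysis on the affected pair (with the unswapped player's deviation constraints serving as auxiliary restrictions on bids), but the two cyclic shifts $(2,3,1)$ and $(3,1,2)$ are genuinely three-dimensional and require jointly optimizing over multiple bid variables and CTR ratios. Identifying which deviation constraints are active at the optimum, and verifying that the supremum across all five permutations is exactly $1.259$ (and not larger for some non-cyclic permutation), is the delicate part of the argument; once this is done, the lower bound is certified by constructing a specific equilibrium that attains this worst case.
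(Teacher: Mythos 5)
Your two-slot argument is essentially the paper's: the same reduction to the single swapped allocation, the same feasibility constraint $r\ge 1-a$ extracted from the lower player's upward deviation combined with no overbidding, and the same optimization of $\frac{1+ar}{r+a}$ giving $5/4$; your tight example is a rescaling of the paper's ($v=(1,1/2)$, $\alpha=(1,1/2)$, $b=(0,1/2)$) and checks out.

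For three slots your plan diverges from the paper in a way that matters for whether it can actually be carried out. The paper does not enumerate the five permutations with the bids as free variables. It first replaces the Nash conditions by the bid-free weak feasibility conditions $\frac{\alpha_j}{\alpha_i}+\frac{\gamma_{\pi(i)}v_{\pi(i)}}{\gamma_{\pi(j)}v_{\pi(j)}}\ge 1$ (Lemma \ref{theorem_weak_formulation}), so the optimization runs only over CTR and value ratios; this is a relaxation, hence still yields a valid upper bound. It then observes (i) that any allocation whose cycle decomposition has more than one cycle inherits its inefficiency bound from a smaller subgame (Fact \ref{fact:reducible}), which disposes of the three transpositions via the $5/4$ bound as a clean general fact rather than via ``auxiliary restrictions on bids,'' and (ii) that the two $3$-cycles are exchanged by a duality swapping the roles of CTRs and valuations ($\alpha'_i=v_i$, $v'_i=\alpha_i$, $\pi_*=\pi^{-1}$) that preserves weak feasibility and both welfare quantities. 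This leaves a single irreducible allocation and a two-variable constrained optimization, solved in closed form to give $\frac{\lambda^2+\lambda+2-2\sqrt{\lambda^3+1}}{\lambda}\le 1.259134$. Your proposal correctly identifies the cyclic allocations as the delicate case but then stops: the ``fractional maximization over CTRs, effective values, and bids'' and the assertion that its value is $1.259$ are precisely the content of the theorem, and neither that optimization nor the matching three-player equilibrium (the paper exhibits $v=(1,0.5296,0.14583)$, $\alpha=(1,0.55071,0.4704)$, $b=(0,0.5296,0.14583)$) is produced. The plan would succeed if executed, but the two reductions above are what make it executable, and they are the part of the argument you are missing.
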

\begin{proof}
Here we give an example with two slots that yields price of anarchy $1.25$. In Appendix \ref{appendix:fullinfo}, we show that this is worst possible, and show the bound for 3 slots.

For two slots, consider an example with two players with valuations 1 and $1/2$ respectively, quality factors $\gamma_1=\gamma_2=1$, and two slots with $\alpha_1 = 1$ and $\alpha_2 = 1/2$. The bids $b_1=0$ and $b_2=1/2$ are at equilibrium, resulting in a social welfare of $1$, while the optimal social welfare is $1.25$.
\end{proof}

The full proof of Theorem \ref{thm:full_info_bound} can be found in Appendix \ref{appendix:fullinfo}.  Here instead, we present
the proof of a weaker bound that
highlights the intuition underlying our result that GSP equilibria have good
social welfare properties.

\begin{theorem} \label{bound2}
The (pure) price of anarchy of the Generalized Second Price auction in the full information setting is at most $2$.
\end{theorem}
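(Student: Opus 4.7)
The plan is to sharpen the argument underlying Lemma \ref{smoothness-lemma}. The deterministic deviation $b'_i = v_i/2$ from that lemma, combined with the Nash condition, would only yield a bound of $4$; for the tighter bound of $2$ we exploit the pure Nash condition with a deviation tailored to the equilibrium bid profile itself.

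First, I would prove the following per-player inequality: for every player $i$ in a pure Nash equilibrium $\bids$,
\[
u_i(\bids) \;\geq\; \alpha_{\nu(i)}\bigl(\gamma_i v_i - \gamma_{\pi(\nu(i))} b_{\pi(\nu(i))}\bigr).
\]
If the right-hand side is non-positive the bound follows from $u_i(\bids)\geq 0$. Otherwise I would consider a deviation $b'_i$ whose effective bid $\gamma_i b'_i$ lies just above $\gamma_{\pi(\nu(i))} b_{\pi(\nu(i))}$; this bid satisfies $b'_i \leq v_i$ and is therefore admissible under the no-overbidding assumption. Because $\gamma_i b'_i$ strictly exceeds the effective bid of the current occupant of slot $\nu(i)$, the deviating player $i$ is assigned to some slot $j \leq \nu(i)$, and the player immediately below her in the new ranking has effective bid at most $\gamma_{\pi(\nu(i))} b_{\pi(\nu(i))}$, so her payment per click is at most $\gamma_{\pi(\nu(i))} b_{\pi(\nu(i))}/\gamma_i$. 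Combining these two facts with $\alpha_j \geq \alpha_{\nu(i)}$ yields
\[
u_i(b'_i,\bidsmi) \;\geq\; \alpha_{\nu(i)}\bigl(\gamma_i v_i - \gamma_{\pi(\nu(i))} b_{\pi(\nu(i))}\bigr),
\]
and the Nash condition $u_i(\bids)\geq u_i(b'_i,\bidsmi)$ finishes the per-player bound.

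Second, I would sum the per-player inequality over all players and reindex the second summand by slot $k = \nu(i)$ (a bijection on $\{1,\dots,n\}$) to obtain
\[
\sum_i u_i(\bids) \;\geq\; OPT(\vals) - \sum_k \alpha_k \gamma_{\pi(k)} b_{\pi(k)} \;\geq\; OPT(\vals) - SW(\pi(\bids),\vals),
\]
where the last step uses the no-overbidding bound $b_{\pi(k)} \leq v_{\pi(k)}$. Since the GSP revenue is non-negative, $SW(\pi(\bids),\vals)\geq \sum_i u_i(\bids)$, so $2\,SW(\pi(\bids),\vals) \geq OPT(\vals)$, giving the claimed bound of $2$.

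The main technical obstacle is the careful justification of the per-player inequality, in particular that the chosen deviation $b'_i$ indeed places player $i$ into a slot of index at most $\nu(i)$ and that the effective bid of the player just below her is bounded by $\gamma_{\pi(\nu(i))} b_{\pi(\nu(i))}$. These facts follow from a brief analysis of how the ranking of effective bids reshuffles when $i$'s bid is changed (with a mild case split on whether $\sigma(i)$ is above, equal to, or below $\nu(i)$); the common thread is that $i$'s new effective bid strictly exceeds that of the current slot-$\nu(i)$ occupant, so she lands in a slot of index at most $\nu(i)$ and her payment is bounded by the displaced effective bid.
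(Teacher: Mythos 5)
Your proof is correct and is essentially the paper's argument in different packaging: the paper derives the same per-player Nash inequality (player $i$ deviating to just outtake the current occupant of her efficient slot, paying at most that occupant's effective bid), folds it via no-overbidding into the ``weak feasibility'' condition $\alpha_{\sigma(i)}\gamma_i v_i + \alpha_i\gamma_{\pi(i)}v_{\pi(i)} \geq \alpha_i\gamma_i v_i$, and sums over players exactly as you do. Your bookkeeping via $SW \geq \sum_i u_i$ versus the paper's direct summation of the weak-feasibility inequalities are interchangeable, so there is no substantive difference.
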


The proof is based on the concept of \emph{weakly feasible allocations}.
Recall that each
bid profile $\bids$ defines an allocation $\pi$ that is a mapping from
slots to players $\pi:[n] \rightarrow [n]$.

\begin{defn}[weakly feasible allocations]
 We say that an allocation $\pi$ is \emph{weakly feasible} if the following holds
for each pair $i,j$ of slots:
\begin{equation} \label{weak_equilibrium}
 \frac{\alpha_j}{\alpha_i} + \frac{\gamma_{\pi(i)} v_{\pi(i)}
}{\gamma_{\pi(j)} v_{\pi(j)}} \geq 1.
\end{equation}
\end{defn}
We also use the term {\em weak feasibility condition} to refer to inequality (\ref{weak_equilibrium}).

{The concept of weakly feasible allocations is a relaxation of the
concept of Nash equilibrium. We adopt this terminology to denote it is a
weakening of the feasibility conditions for Nash equilibrium. This concept}
{encapsulates the fact that an
allocation in equilibrium cannot be too far from the
optimal. The optimal allocation is such that $\pi(i) = i$, since both
$\{\alpha_i\}$ and $\{\gamma_i v_i \}$ are sorted. If an allocation is not
optimal, then two slots $i < j$ have advertisers assigned to them such that
$\pi(i) > \pi(j)$, i.e., they are assigned in the wrong order. Equation
(\ref{weak_equilibrium}) implies that at least one of the two ratios is at
least $1/2$, and hence whenever advertisers are assigned in the non-optimal
order, then either
(i) the two
advertisers have similar effective values for a click, or (ii)
the click-through-rates of the two slots are not very
different; in either case their relative
order does not affect the social welfare very much.}

\begin{lemma} \label{theorem_weak_formulation}
If $\bids$ is a Nash equilibrium of the GSP auction game, then the induced allocation
$\pi$ satisfies the weak feasibility condition.
\end{lemma}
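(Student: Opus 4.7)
The plan is to verify the weak feasibility inequality slot-pair by slot-pair, splitting on the sign of $\alpha_i - \alpha_j$ and, in the interesting case, using a deviation-to-slot-$i$ argument for the player currently occupying slot $j$.

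First I would dispose of the trivial case $i \geq j$: since the click-through-rates are non-increasing in the slot index, $\alpha_j / \alpha_i \geq 1$, and as the second summand in \eqref{weak_equilibrium} is non-negative, the inequality is immediate. So from now on I can assume $i < j$, i.e., slot $i$ is strictly better than slot $j$.

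For $i < j$ the key step is to consider the player $\pi(j)$ deviating from $b_{\pi(j)}$ to a bid $b'$ just above $\gamma_{\pi(i)} b_{\pi(i)} / \gamma_{\pi(j)}$. Such a deviation places her immediately above $\pi(i)$ in the effective-bid order, so she moves to slot $i$ at a per-click critical-value price of $\gamma_{\pi(i)} b_{\pi(i)} / \gamma_{\pi(j)}$, yielding deviation utility
$$\alpha_i \bigl( \gamma_{\pi(j)} v_{\pi(j)} - \gamma_{\pi(i)} b_{\pi(i)} \bigr).$$
Her equilibrium utility, on the other hand, is $\alpha_j \bigl( \gamma_{\pi(j)} v_{\pi(j)} - \gamma_{\pi(j+1)} b_{\pi(j+1)} \bigr)$. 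The Nash condition (current $\geq$ deviation), combined with dropping the non-negative $\alpha_j \gamma_{\pi(j+1)} b_{\pi(j+1)}$ term and invoking the no-overbidding bound $b_{\pi(i)} \leq v_{\pi(i)}$, rearranges to
$$\alpha_i \gamma_{\pi(i)} v_{\pi(i)} \ \geq\ (\alpha_i - \alpha_j)\, \gamma_{\pi(j)} v_{\pi(j)},$$
and dividing through by $\alpha_i \gamma_{\pi(j)} v_{\pi(j)}$ gives exactly \eqref{weak_equilibrium}.

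The only subtlety to address is that the deviation argument yields useful information only when it represents a feasible profitable threat. If instead $\gamma_{\pi(i)} b_{\pi(i)} \geq \gamma_{\pi(j)} v_{\pi(j)}$, the putative deviation has non-positive utility and the Nash inequality is vacuous; but in that regime one already has $\gamma_{\pi(i)} v_{\pi(i)} \geq \gamma_{\pi(i)} b_{\pi(i)} \geq \gamma_{\pi(j)} v_{\pi(j)}$ by no overbidding, so $\gamma_{\pi(i)} v_{\pi(i)} / \gamma_{\pi(j)} v_{\pi(j)} \geq 1$ and weak feasibility again holds for free. These two sub-cases exhaust $i < j$, which completes the argument. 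The main conceptual point (and essentially the only non-routine step) is the choice of the slot-$i$ upgrade deviation; after that choice is made, the claim reduces to a one-line rearrangement plus the no-overbidding assumption.
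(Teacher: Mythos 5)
Your proof is correct and follows essentially the same route as the paper's: dispose of $i \geq j$ trivially, then for $i < j$ compare the equilibrium utility of player $\pi(j)$ against the deviation that takes slot $i$ at critical price $\gamma_{\pi(i)} b_{\pi(i)}/\gamma_{\pi(j)}$, drop the non-negative payment term, and invoke no-overbidding. The extra sub-case you discuss (where the upgrade deviation is unprofitable) is handled correctly and is a point the paper glosses over, but it does not change the argument.
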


\begin{proof}
If $j \le i$ the inequality is obviously true. Otherwise consider the player $\pi(j)$ in slot $j$.
Since $\bids$ is a Nash equilibrium, the player in slot $j$ is happy with her outcome
and does not want to increase her bid to take slot $i$, so:
$\alpha_j (\gamma_{\pi(j)} v_{\pi(j)} - \gamma_{\pi(j+1)} b_{\pi(j+1)}) \geq
\alpha_i (\gamma_{\pi(j)} v_{\pi(j)} - \gamma_{\pi(i)} b_{\pi(i)})$
since $b_{\pi(j+1)} \geq 0$ and $b_{\pi(i)} \leq v_{\pi(i)}$ then:
$\alpha_j \gamma_{\pi(j)} v_{\pi(j)} \geq \alpha_i (\gamma_{\pi(j)} v_{\pi(j)} -
\gamma_{\pi(i)} v_{\pi(i)})$.
\end{proof}

Given Lemma \ref{theorem_weak_formulation}, the proof of Theorem \ref{bound2}
follows almost directly:

\begin{proofof}{Theorem \ref{bound2}}
 Taking $j = \sigma(i)$ in the definition of weakly feasible allocations, we get
that: $\alpha_{\sigma(i)} \gamma_{i} v_{i}  + \alpha_i \gamma_{\pi(i)}
v_{\pi(i)} \geq \alpha_i \gamma_{i} v_{i} $. Now, summing this for each player
$i$, we get
$$2 \cdot SW(\pi(\bids), \vals) = \sum_i \alpha_{\sigma(i)} \gamma_{i} v_{i}  + \sum_i
\alpha_i \gamma_{\pi(i)} v_{\pi(i)} \geq \sum_i{\alpha_i \gamma_{i} v_{i}} =
OPT(\vals).$$
\end{proofof}

To prove Theorem \ref{thm:full_info_bound} we proceed by induction on the number
of slots. Given an allocation $\pi$, consider the directed graph $G(\pi)$ that
has one node for each slot, and a directed edge for each advertiser $i$ that
connects the node corresponding to slot $i$ to the node corresponding to slot
$\pi^{-1}(i)$. When the allocation is optimal, this graph consists of
self-loops. In general, $G(\pi)$ consists of a set of disjoint cycles, {however,
without loss of generality,} we can assume $G(\pi)$ is a single cycle. We obtain
the improved bound by considering four nodes in the neighborhood of node $1$ in
this cycle, and separately considering cases depending on the order of the
{effective values} of the corresponding players. The details of the proof
can be found in Appendix \ref{appendix:fullinfo}.

\section{Quality of Learning Outcomes in GSP}
\label{sec.learning}

In this section, we bound the average quality of outcomes in a repeated
play of a GSP auction game where players employ strategies that guarantee no external regret.
In both the full information setting and the setting with uncertainty, we can reduce the
problem over declaration sequences to a problem over distributions.
This will allow us to adapt our earlier bounds on the price of
anarchy from Sections \ref{sec:uncertainty} and \ref{sec:fullinfo}
to bound the price of total anarchy.
%
As in previous sections, we show simple and intuitive bounds in this section, and defer
improved and more complex bounds to the appendix.

\newcommand{\ddist}{\mathbf{{D}}}

\subsection{Learning in the full information setting}
\label{sec.learning-full}

We will first focus upon the full information setting of the GSP auction.
Recall that, in this model, the valuation profile $\vals$ and quality factors
$\gamma$ are fixed and common knowledge. As in the previous section, we will assume
that $\gamma_1 v_1 \geq \gamma_2 v_2 \geq \hdots \geq \gamma_n v_n$.

We will begin by proving a relationship between the price of total anarchy and the
set of \emph{coarse correlated equilibria} for the GSP auction in the full information
model.
Given a valuation profile $\vals$, a distribution $\ddist$ over bid profiles is called a
coarse correlated equilibrium if
$$\E_{\bids \sim \ddist} [u_i(\bids)] \geq \E_{\bids \sim \ddist}
[u_i(b'_i, \bids_{-i})], \forall i, b'_i.$$
As we shall show, the price of total anarchy can be bounded by considering the social welfare
generated at any coarse correlated equilibrium.

\begin{lemma}\label{lem:total-anarchy-fullinfo}
The price of total anarchy in the full information setting is at most
$$\sup_{\vals, \ddist \in CCE} \frac{OPT(\vals)}{\E_{\bids \sim
\ddist}[SW(\pi(\bids), \vals)]}$$ where $CCE$ is the set of coarse correlated equilibria.
\end{lemma}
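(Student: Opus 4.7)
The plan is to convert a no-regret declaration sequence into a coarse correlated equilibrium via its empirical distribution. Fix a valuation profile $\vals$ and a declaration sequence $D = (\bids^1, \bids^2, \ldots)$ that minimizes external regret for every player. For each $T$, let $\ddist^T$ be the empirical distribution placing weight $1/T$ on each of the profiles $\bids^1, \ldots, \bids^T$; then $\E_{\bids \sim \ddist^T}[SW(\pi(\bids),\vals)]$ is exactly the average social welfare over the first $T$ rounds. Averaging the no-regret inequality $\sum_{t \leq T} u_i(\bids^t) \geq \sum_{t \leq T} u_i(b'_i, \bidsmi^t) + R(T)$ over $t$ and dividing by $T$ gives $\E_{\bids \sim \ddist^T}[u_i(\bids)] \geq \E_{\bids \sim \ddist^T}[u_i(b'_i, \bidsmi)] + R(T)/T$ for every player $i$ and every fixed deviation $b'_i$, so $\ddist^T$ is an $\epsilon_T$-approximate CCE with $\epsilon_T = |R(T)|/T \to 0$.

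The second step is to pass from an approximate CCE to a genuine one. Since conservative bidders play only bids in the compact box $\prod_i [0, v_i]$, Prokhorov's theorem guarantees a subsequence $T_k$ along which $\ddist^{T_k}$ converges weakly to some distribution $\ddist^*$ supported on that box; by choosing the subsequence appropriately, I can simultaneously arrange that the average welfare along $T_k$ converges to $SW(\pi(D),\vals) = \liminf_T \frac{1}{T}\sum_{t \leq T} SW(\pi(\bids^t),\vals)$. If $\ddist^*$ is a genuine CCE and the welfare passes to the limit, then applying the definition of the CCE-based supremum $C$ to $\ddist^*$ yields $SW(\pi(D),\vals) \geq OPT(\vals)/C$, which is precisely the desired bound; since $\vals$ and $D$ were arbitrary, taking the supremum over instances gives the lemma.

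The main obstacle is that $u_i$ and $SW(\pi(\cdot),\vals)$ are not continuous in $\bids$, because the GSP allocation jumps when two effective bids tie. The discontinuity set is a finite union of hyperplanes of the form $\{\bids : \gamma_i b_i = \gamma_j b_j\}$. I would handle this by fixing a deterministic tie-breaking rule that makes $SW$ and each $u_i$ upper semicontinuous in $\bids$, and then arguing that without loss of generality $\ddist^*$ and any candidate deviation $b'_i$ avoid the tie set: if $\ddist^*$ places positive mass on a tie hyperplane, one may perturb the deviating bid by an arbitrarily small amount without affecting the inequality in the limit, and the limiting welfare identity then follows from weak convergence together with boundedness. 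Once this technicality is dispatched, the CCE inequality and the welfare identity pass through to $\ddist^*$, completing the argument.
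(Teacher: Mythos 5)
Your proposal is correct and follows essentially the same route as the paper's proof: form the empirical distributions $\ddist^T$, observe that no-regret makes them approximate coarse correlated equilibria, use compactness of the bid space to extract a weakly convergent subsequence, and argue that the limit is a genuine CCE to which the supremum bound applies. The only difference is that you explicitly confront the discontinuity of $u_i$ and $SW$ at tie-breaking hyperplanes when passing expectations to the weak limit, a technicality the paper's proof silently elides, so your argument is if anything slightly more careful than the original.
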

\begin{proof}
Consider a declaration sequence $D = (\bids^1, \hdots, \bids^t,
\hdots)$ in the full information case. For each $T$ let $\ddist^T$ be the distribution over bid
profiles where each $\bids^t$ for $t \leq T$ is drawn with probability
$\frac{1}{T}$. Proving that the price of total anarchy is bounded by $\eta$ is
equivalent to showing that:
$$\liminf_T \E_{\bids \sim \ddist^T} [ SW(\pi(\bids), \vals) ]
\geq \frac{1}{\eta} OPT(\vals).$$
Since the set of all possible bid profiles is compact, one needs to
prove that for all distributions $\ddist$ such that there is a subsequence
of $\{\ddist^T\}_T$ converging in distribution to $\ddist$ we have:
$$\E_{\bids \sim \ddist} [ SW(\pi(\bids), \vals) ]
\geq \frac{1}{\eta} OPT(\vals).$$
It is therefore sufficient to show that such a $\ddist$ is a coarse correlated
equilibrium.  We note that the fact that the declaration sequence $D$
minimizes external
regret implies that, for each distribution $\ddist$ which can be written as
the limit of a subsequence of $\{\ddist^T\}_T$, it holds that:
$$\E_{\bids \sim \ddist} [u_i(\bids)] \geq \E_{\bids \sim \ddist}
[u_i(b'_i, \bids_{-i})], \forall i, b'_i$$
as required.
\end{proof}

Using this connection to coarse correlated equilibria, we are able to
obtain a bound of $2.310$ on the price of total anarchy of the GSP auction.  

\begin{theorem}
\label{thm.cce}
The price of total anarchy of the Generalized Second Price auction in the full information setting is at most $2.310$.
\end{theorem}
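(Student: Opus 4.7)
The plan is to invoke Lemma \ref{lem:total-anarchy-fullinfo} to reduce bounding the price of total anarchy to bounding the worst-case ratio $OPT(\vals)/\E_{\bids \sim \ddist}[SW(\pi(\bids), \vals)]$ uniformly over all coarse correlated equilibria $\ddist$ of the full information GSP auction game. By definition of a CCE, for every player $i$ and every deviating bid $b'_i$ chosen independently of the realized $\bids$,
\[\E_{\bids \sim \ddist}[u_i(\bids)] \geq \E_{\bids \sim \ddist}[u_i(b'_i, \bidsmi)].\]
Since the full information setting makes both $\vals$ and $\gamma$ common knowledge, each player is free to pick $b'_i$ as any randomized function of $(\vals, \gamma)$.

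The main step is a semi-smoothness-style argument mirroring Lemma \ref{smoothness-lemma}. For each player $i$, I would take a randomized deviation $b'_i$ supported on $[0, v_i]$ with a density of the form $f(y) \propto 1/(v_i - y)$ on an appropriate sub-interval, tailored to target the slot $\nu(i)$ that player $i$ occupies in the efficient allocation. A calculation analogous to the one in Lemma \ref{smoothness-lemma} yields, pointwise in $\bids$, a semi-smoothness-like inequality
\[ \sum_i \E_{b'_i}[u_i(b'_i, \bidsmi)] \geq \lambda \cdot OPT(\vals) - \mu \cdot SW(\pi(\bids), \vals), \]
where the non-overbidding assumption is used to bound the negative term by the social welfare. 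Summing the CCE inequality over $i$, taking expectation over $\bids \sim \ddist$, and using $\sum_i u_i(\bids) \leq SW(\pi(\bids), \vals)$ then gives $\E_{\bids \sim \ddist}[SW(\pi(\bids), \vals)] \geq (\lambda/(\mu+1))\, OPT(\vals)$, hence a price of total anarchy of at most $(\mu+1)/\lambda$.

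The main obstacle is choosing the density parameters of $b'_i$ so that $(\mu+1)/\lambda \leq 2.310$. The naive density from Lemma \ref{smoothness-lemma} gives only $(\lambda, \mu) = (1-1/e, 1)$ and so yields the weaker bound $2(1-1/e)^{-1} \approx 3.164$ already obtained in the Bayesian setting. To push the bound down to $2.310$, a more carefully parametrized family of deviations (with the density's support and normalization optimized, or a deviation that mixes a density-based bid with a deterministic targeted bid) is required, and the constant $2.310$ will emerge as the infimum of $(\mu+1)/\lambda$ over that family. Given the remark at the start of the section that improved and more complex bounds are deferred to the appendix, I expect the main text to present a simpler version of this argument (establishing, e.g., the $3.164$ warm-up) and to derive the sharper $2.310$ bound through the more delicate numerical optimization carried out in an appendix.
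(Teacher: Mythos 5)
Your first step (the reduction to coarse correlated equilibria via Lemma \ref{lem:total-anarchy-fullinfo}) and the general flavor of the deviation (a randomized bid $yv_i$ with density proportional to $1/(v_i-y)$, combined with no overbidding) match the paper. But the core of your plan is to stay inside the semi-smoothness framework: establish $\sum_i \E_{b'_i}[u_i(b'_i,\bidsmi)] \geq \lambda\, OPT(\vals) - \mu\, SW(\pi(\bids),\vals)$, where the loss terms $\alpha_{\nu(i)} b_{\pi(\nu(i))}$ are charged against $SW$ via $b\le v$, and conclude a bound of $(\mu+1)/\lambda$. You correctly observe that the density from Lemma \ref{smoothness-lemma} only gives $3.164$ this way, but your proposed fix --- optimizing the density's support and normalization --- cannot close the gap: within the constraint $\int_0^1 g \le 1$ and $\int_z^1 (1-y)g(y)\,dy \ge \lambda - \mu z$, the quantity $(\mu+1)/\lambda$ does not get down to $2.310$. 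This is why the paper itself abandons pure semi-smoothness for its improved bounds (inequality \eqref{semi-smooth-weak} in Appendix \ref{appendix:bayes} and Lemma \ref{lem:cor-f-poa} in Appendix \ref{appendix:learning} both track the payment terms $\alpha_i b_{\pi(i)}$ explicitly rather than folding them into $SW$).

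The two ideas you are missing are exactly the ones previewed in the proof of Theorem \ref{thm:full-info-learning-3}. First, instead of bounding $-\,\E[\alpha_i b_{\pi(i)}]$ by $-\,\alpha_i v_{\pi(i)}$ and absorbing it into $-\mu\, SW$, one writes $SW = \sum_i \E[u_i(\bids)] + \sum_i \E[\alpha_i b_{\pi(i+1)}]$ and notes that for $i\ge 2$ the loss term $\alpha_i b_{\pi(i)}$ is dominated by the payment of the player in slot $i-1$, so these terms cancel against the payments rather than costing a full extra multiple of $SW$. Second, the leftover term $\alpha_1 b_{\pi(1)}$ corresponds to no payment, so the top player is handled separately with the \emph{deterministic} deviation $b'_1=v_1$ (safe only for her, since she holds slot $1$ in the efficient allocation), which yields the stronger inequality $\E[u_1(\bids)] \ge \beta\alpha_1 v_1 - \beta\,\E[\alpha_1 b_{\pi(1)}]$. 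With these two ingredients the bound becomes $1+1/\beta$ for any $\beta$-bounded density in the sense of Definition \ref{def:cor-bounded}, and the choice $g(y)=\frac{1}{(1-\lambda)(1-y)}$ on $[0,\lambda]$ with $\lambda$ solving $1-\lambda+\ln(1-\lambda)=0$ gives $\beta=\lambda/(1-\lambda)$ and $1+1/\beta=1/\lambda\approx 2.310$. Your parenthetical mention of ``mixing a density-based bid with a deterministic targeted bid'' gestures at the second idea, but without the payment-cancellation decomposition the argument as you have structured it stalls at roughly $3.164$.
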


A full proof of Theorem \ref{thm.cce} appears in
Appendix \ref{appendix:learning}.  We now present a simpler proof of the following
weaker bound, which captures some of the intuition behind the proof of Theorem \ref{thm.cce}.

\begin{theorem} \label{thm:full-info-learning-3}
The price of total anarchy of the Generalized Second Price auction in the full information setting is at most
$3$.
\end{theorem}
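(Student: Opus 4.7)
The plan is to combine Lemma~\ref{lem:total-anarchy-fullinfo}, which reduces bounding the price of total anarchy to bounding the expected social welfare at any coarse correlated equilibrium (CCE), with a smoothness-style deviation analysis specialized to the sorted full information setting $\gamma_1 v_1 \geq \cdots \geq \gamma_n v_n$. After fixing an arbitrary CCE $\ddist$, the task becomes showing $\E_{\bids\sim\ddist}[SW(\pi(\bids),\vals)] \geq OPT(\vals)/3$.

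For each player $i$, I would consider the deterministic deviation $b'_i = v_i/2$ and mirror the case analysis from the proof of Lemma~\ref{smoothness-lemma}. Either (a) bidding $v_i/2$ lands player $i$ in a slot $\sigma'(i) \leq i$, in which case $\alpha_{\sigma'(i)}\geq\alpha_i$ and the per-click payment is at most $v_i/2$, giving $u_i(v_i/2,\bidsmi)\geq \tfrac{1}{2}\alpha_i\gamma_i v_i$; or (b) $\sigma'(i) > i$, so at least $i$ players other than $i$ have effective bid $\geq \gamma_i v_i/2$. A pigeonhole on the sorted order together with non-overbidding then forces $\gamma_{\pi(i)} v_{\pi(i)} \geq \gamma_i v_i/2$, so the right-hand side of the target inequality is non-positive and the trivial bound $u_i\geq 0$ suffices. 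Combining the two cases yields, for every pure bid profile $\bids$ inducing allocation $\pi$,
\[
u_i(v_i/2,\bidsmi) \;\geq\; \tfrac{1}{2}\alpha_i\gamma_i v_i - \alpha_i\gamma_{\pi(i)} v_{\pi(i)}.
\]
Summing over $i$ and using $\nu(i)=i$ to identify $\sum_i \alpha_i\gamma_i v_i$ with $OPT(\vals)$ and $\sum_i \alpha_i\gamma_{\pi(i)} v_{\pi(i)}$ with $SW(\pi(\bids),\vals)$ gives $\sum_i u_i(v_i/2,\bidsmi) \geq OPT(\vals)/2 - SW(\pi(\bids),\vals)$.

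Taking expectations over $\ddist$, applying the CCE inequality $\E_\ddist[u_i(\bids)] \geq \E_\ddist[u_i(v_i/2,\bidsmi)]$ for each $i$, and using the identity $\sum_i u_i(\bids) = SW(\pi(\bids),\vals) - R(\bids)$ (where $R(\bids)\geq 0$ is GSP revenue) yields $2\E_\ddist[SW] \geq OPT(\vals)/2 + \E_\ddist[R]$, which already gives a price-of-anarchy bound of $4$. The main obstacle is tightening this to $3$. For this I would sharpen the per-player inequality by replacing the subtractive term $\alpha_i\gamma_{\pi(i)} v_{\pi(i)}$ with $\alpha_i\gamma_{\pi(i)} b_{\pi(i)}$ (the pigeonhole argument in case (b) in fact delivers this stronger conclusion), and then combine the resulting bound on $\sum_i \alpha_i\gamma_{\pi(i)} b_{\pi(i)}$ with the GSP revenue identity $R(\bids) = \sum_i \alpha_i\gamma_{\pi(i+1)} b_{\pi(i+1)}$ via a slot-pairing argument in the spirit of Lemma~\ref{theorem_weak_formulation}. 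This extra accounting, which exploits structure beyond generic semi-smoothness, is the delicate step that produces the factor $3$ in place of the naive $4$.
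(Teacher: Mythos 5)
Your setup is right and matches the paper's: reduce to coarse correlated equilibria via Lemma~\ref{lem:total-anarchy-fullinfo}, use the deviation $b'_i=v_i/2$, and sharpen the subtractive term to $\alpha_i\gamma_{\pi(i)}b_{\pi(i)}$ so that it can be absorbed into the revenue. But the step you flag as ``delicate'' and leave unexecuted is exactly where the factor $3$ has to come from, and the accounting you describe does not produce it. Carry it out: summing $\E[u_i(\bids)]\geq \tfrac12\alpha_i\gamma_i v_i-\E[\alpha_i\gamma_{\pi(i)}b_{\pi(i)}]$ over all $i$ and adding the revenue $\sum_i\E[\alpha_i\gamma_{\pi(i+1)}b_{\pi(i+1)}]=\sum_{i\geq2}\E[\alpha_{i-1}\gamma_{\pi(i)}b_{\pi(i)}]\geq\sum_{i\geq2}\E[\alpha_i\gamma_{\pi(i)}b_{\pi(i)}]$ cancels every subtracted term except the one for slot $1$, leaving $\E[SW]\geq\tfrac12 OPT-\E[\alpha_1\gamma_{\pi(1)}b_{\pi(1)}]$. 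Closing with $\E[\alpha_1\gamma_{\pi(1)}b_{\pi(1)}]\leq\E[SW]$ then gives $\E[SW]\geq\tfrac14 OPT$ --- the bound of $4$ again, not $3$. No ``slot-pairing in the spirit of Lemma~\ref{theorem_weak_formulation}'' fixes this on its own, because the term $\alpha_1\gamma_{\pi(1)}b_{\pi(1)}$ genuinely corresponds to no payment and must be paid for out of the welfare at full strength.

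The missing idea in the paper's proof is to treat player $1$ asymmetrically: since player $1$ already holds the top slot in the efficient allocation, the aggressive deviation $b'_1=v_1$ is safe for her and the no-regret condition gives the stronger bound $\E[u_1(\bids)]\geq\alpha_1\gamma_1v_1-\E[\alpha_1\gamma_{\pi(1)}b_{\pi(1)}]$, with coefficient $1$ rather than $\tfrac12$ on $\alpha_1\gamma_1v_1$. Using only half of this inequality (i.e.\ $\E[u_1]\geq\tfrac12\E[u_1]$, valid since $u_1\geq0$) contributes $\tfrac12\alpha_1\gamma_1v_1-\tfrac12\E[\alpha_1\gamma_{\pi(1)}b_{\pi(1)}]$, so the uncancelled slot-$1$ term enters with coefficient $\tfrac12$ instead of $1$. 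The final inequality becomes $\E[SW]\geq\tfrac12 OPT-\tfrac12\E[\alpha_1\gamma_{\pi(1)}b_{\pi(1)}]\geq\tfrac12 OPT-\tfrac12\E[SW]$, which yields $\E[SW]\geq OPT/3$. Without this separate handling of the top player your argument is correct but only reproves the bound of $4$.
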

\begin{proof}
The proof can be thought of as an improved version of the price of anarchy bound
based on the fact that the GSP auction game is $(1/2,1)$-semi-smooth. We consider a distribution
$\ddist$ which corresponds to a coarse correlated equilibrium. All expectations
in the following are taken with respect to $\bids \sim \ddist$. Recall the
outline of the bound of $4$ on the price of anarchy based on the fact that the GSP auction game is
$(1/2,1)$-semi-smooth.  We considered a possible deviation for player $i$ with
valuation $v_i$  to bid $b'_i=v_i/2$, and concluded the bound $u_i(b'_i, \bidsmi)\ge
\alpha_i \gamma_i v_i/2 -\alpha_i \gamma_{\pi(i)} b_{\pi(i)}$ in the proof of
Lemma~\ref{smoothness-lemma}. We use the no-regret inequality directly, to get
that
$$\E[u_i(\bids)]\ge \alpha_i \gamma_i v_i/2 -\E[\alpha_i \gamma_{\pi(i)} b_{\pi(i)}].$$
Using that $b_{\pi(i)} \leq v_{\pi(i)}$, and summing over all players we get a
bound of 4 on the price of total anarchy as was done in Lemma
\ref{lem.smoothness-to-bpoa}.

Here we improve this bound by adding two new ideas. First, note that for all
slots except the top one $\alpha_i \gamma_{\pi(i)} b_{\pi(i)}$ is a lower
bound to the payment of the player in slot $i-1$. The social welfare is the sum of
player utilities and the payments. The inequality states that in expectation the
utility of player $i$ plus the payment of the player in slot $i-1$ is at least
$\alpha_i \gamma_i v_i/2$, i.e., half of the social welfare contributed by player $i$
in the efficient solution. To turn this into a bound on social welfare, we need
to handle player $1$ differently, as $\alpha_1 \gamma_{\pi(1)} b_{\pi(1)}$
does not correspond to any payment.

The second observation is that for player $1$ we can obtain a stronger bound on
her utility by considering the deviation $b'_1=v_1$. For other players such a
high bid would endanger them to get a slot much higher than their slot in the
optimum at a very high price. But player 1 already gets the best slot in the
efficient solution. Deviating to $b_1'=v_1$ will give the player the top slot,
and hence utility $\alpha_1\gamma_1 v_1 -\alpha_1 \gamma_{\pi(1)} b_{\pi(1)}$. Now using
the no-regret property we get
$$\E[u_1(\bids)]\ge \alpha_1 \gamma_1 v_1 -\E[\alpha_1 \gamma_{\pi(1)}
b_{\pi(1)}].$$

\comment{
We will bound the expected utilities of the players as follows:
$$\E[u_1(\bids)] \geq \alpha_1\gamma_1v_1/2 - \E[\alpha_1\gamma_{\pi(\bids,1)} b_{\pi(\bids,1)}]/2,$$
and $$\E[u_i(\bids)] \geq \alpha_i\gamma_iv_i/2 - \E[\alpha_i\gamma_{\pi(\bids,i)} b_{\pi(\bids,i)}]$$
for $i\geq 2$. In order to prove this we will consider deviating bids $b'_1=v_1$ and $b'_i = v_i/2$ for $i\geq 2$ and we will bound the utility of players when deviating. Clearly,
$\E[u_i(b'_i,\bidsmi)]$ is a lower bound on $\E[u_i(\bids)]$.

Let $\pi^i(\bidsmi,i)$ be the player with the $i$-th highest bid in $\bidsmi$ and observe that for every outcome of the random variable $b_{\pi^1(\mathbf \bid_{-1},1)}$, player $1$ is assigned to slot $1$ when deviating to $b'_1=v_1$. By observing that $b_{\pi^1(\bid_{-1},1)} \leq b_{\pi(\bid,1)}$ and $\gamma_1v_1 \geq \gamma_{\pi(\bid,1)} b_{\pi(\bid,1)}$, we have
$$u_1(v_1,\bid_{-1})\geq \alpha_1\gamma_1v_1-\alpha_1\gamma_{\pi^1(\mathbf \bid_{-1},1)} b_{\pi^1(\mathbf \bid_{-1},1)}\geq \alpha_1\gamma_1v_1/2-\alpha_1\gamma_{\pi(\bids,1)} b_{\pi(\bids,1)}/2.$$
Also, for every outcome of $b_{\pi^i(\bidsmi,i)}$ such that $\gamma_iv_i/2 \geq \gamma_{\pi^i(\bidsmi,i)} b_{\pi^i(\bidsmi,i)}$, player $i$ is assigned to slot $i$ (or a higher slot) and has utility
$$u_i(v_i/2,\bidsmi)\geq \alpha_i\gamma_iv_i/2-\alpha_i\gamma_{\pi^i(\bidsmi,i)} b_{\pi^i(\bidsmi,i)} \geq \alpha_i\gamma_iv_i/2-\alpha_i\gamma_{\pi(\bids,i)} b_{\pi(\bids,i)}.$$
Similarly, for every outcome of $b_{\pi^i(\bidsmi,1)}$ such that $\gamma_iv_i/2 < \gamma_{\pi^i(\bidsmi,i)} b_{\pi^i(\bidsmi,i)}$, player $i$ has utility
$$u_i(v_i/2,\bidsmi)\geq 0 > \alpha_i\gamma_iv_i/2-\alpha_i\gamma_{\pi^i(\bidsmi,i)} b_{\pi^i(\bidsmi,i)} \geq \alpha_i\gamma_iv_i/2-\alpha_i\gamma_{\pi(\bids,i)} b_{\pi(\bids,i)}.$$
}

By summing over all players and writing the social welfare as the sum of utilities plus the total payments, we get:
$$\begin{aligned}
& \E[SW(\pi(\bids), \vals)]  = \E [\sum_i{u_i(\bids)}] +
\E[\sum_i{\alpha_i\gamma_{\pi(i+1)} b_{\pi(i+1)}}] \\
& \qquad \geq \frac{1}{2}\E[u_1(\bids)] + \sum_{i\geq 2} \E [{u_i(\bids)}] +
\E[\sum_{i\geq2}{\alpha_i \gamma_{\pi(i)} b_{\pi(i)}}] \\
& \qquad \geq  \frac{\alpha_1 \gamma_1 v_1}{2} -
\frac{\E[\alpha_1\gamma_{\pi(1)} b_{\pi(1)}]}{2} + \sum_{i\geq2} \frac{\alpha_i
\gamma_i v_i}{2} - \sum_{i\geq2}{\E[\alpha_i \gamma_{\pi(i)}b_{\pi(i)}]}
+ \sum_{i\geq 2}{\E[\alpha_i \gamma_{\pi(i)}b_{\pi(i)}]}\\
& \qquad = \frac{1}{2} OPT(\vals) - \frac{1}{2}\E[\alpha_1 \gamma_{\pi(1)}
b_{\pi(1)}].
\end{aligned}$$
Since $\E[SW(\pi(\bids), \vals)]\geq \E[\alpha_1 \gamma_{\pi(1)}
b_{\pi(1)}]$, we obtain that $\E[SW(\pi(\bids), \vals)] \geq
\frac{1}{3} OPT(\vals)$.
\end{proof}

\subsection{Learning with uncertainty}\label{subsec:learning_with_uncertainty}

Let us now turn to the model of learning outcomes with uncertainty.
As in the full information model, we can define a Bayesian version of
the coarse correlated equilibrium.  A \emph{Bayesian coarse correlated
equilibrium} is a
joint distribution $(\vals, \gamma, \bids)$ whose $(\vals,
\gamma)$-marginals are $(\mathbf{F}, \mathbf{G})$ and satisfies the following
property:
$$\E_{(\vals,\gamma,\bids)} [u_i(\bids,\gamma) \vert v_i] \geq
\E_{(\vals,\gamma,\bids)} [u_i(b'_i, \bids_{-i}, \gamma) \vert
v_i], \forall i, v_i, b'_i. $$
Similarly to Lemma \ref{lem:total-anarchy-fullinfo}, we can show that the
price
of total anarchy with uncertainty can be bounded by considering the social welfare
generated at any Bayesian coarse correlated equilibrium.

\begin{lemma}
\label{lem:total-anarchy-uncertainty}
{Assuming that the distribution over types has finite support,} the
price of total anarchy with uncertainty is at most
$$\sup_{\mathbf{F},\mathbf{G}, \ddist(\cdot) \in CCE}
\frac{\E_{\vals,\gamma} OPT(\vals,\gamma)}{\E_{\vals,\gamma,\bids \sim
\ddist(\vals)}[SW(\pi(\bids), \vals, \gamma)]}$$ where $CCE$ is the set of
Bayesian coarse correlated equilibria.
\end{lemma}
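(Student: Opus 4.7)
The plan is to mirror the proof of Lemma \ref{lem:total-anarchy-fullinfo}, with the extra ingredient that the per-type no-regret property in the uncertainty model aligns naturally with the conditional form of the Bayesian coarse correlated equilibrium condition. I would fix distributions $(\mathbf{F}, \mathbf{G})$, a realization $\{(\vals^t,\gamma^t)\}$ drawn i.i.d.\ from $(\mathbf{F},\mathbf{G})$, and a declaration sequence $D = (\bids^1, \bids^2, \ldots)$ that minimizes regret per type for every player. For each valuation profile $\vals$ with $\mathbf{F}(\vals) > 0$, I would let $\ddist^T(\vals)$ be the uniform empirical distribution on $\{\bids^t : t \leq T,\ \vals^t = \vals\}$. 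Since the type space is finite and bids lie in a bounded interval by the no-overbidding assumption, the resulting space of candidate mappings is compact in the weak topology, so I can pass to a subsequence along which $\ddist^T(\vals) \to \ddist(\vals)$ for every $\vals$ in the support of $\mathbf{F}$.

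To verify that $\ddist(\cdot)$ is a Bayesian coarse correlated equilibrium, I would fix a player $i$, a type $\tilde{v}_i$ with $\Pr[v_i = \tilde{v}_i] > 0$, and a deviation $b'_i$. Dividing the per-type no-regret inequality by $|I^T(i,\tilde{v}_i)|$ and sending $T \to \infty$ makes the regret term vanish, since $|I^T(i,\tilde{v}_i)| \to \infty$ by the strong law of large numbers applied to the i.i.d.\ draws of types. Grouping the rounds in $I^T(i,\tilde{v}_i)$ by the full profile $\vals$ and by $\gamma$, the i.i.d.\ structure of $(\vals^t,\gamma^t)$ ensures that the conditional empirical joint law of $(\vals,\gamma,\bids)$ on this sub-sequence converges along the chosen subsequence to the law of $(\vals,\gamma,\bids \sim \ddist(\vals))$ conditioned on $v_i = \tilde{v}_i$. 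Passing to the limit of the no-regret inequality then yields exactly the BCCE condition for $\ddist(\cdot)$.

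An identical empirical-to-distributional convergence argument, now without conditioning on any type, would show that the average social welfare along the subsequence tends to $\E_{\vals,\gamma,\bids \sim \ddist(\vals)}[SW(\pi(\bids,\gamma),\vals,\gamma)]$ and the average optimum tends to $\E_{\vals,\gamma}[OPT(\vals,\gamma)]$. Therefore the $\limsup$ in the definition of the price of total anarchy with uncertainty is bounded by the welfare ratio of the BCCE just constructed under $(\mathbf{F},\mathbf{G})$, which is in turn bounded by the supremum in the lemma statement.

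The main obstacle will be the careful bookkeeping of the limit-of-empirical-distributions step: the per-type no-regret guarantee only concerns the rounds in $I^T(i,\tilde{v}_i)$, so I must verify that within this sub-sequence the joint empirical law of opposing valuations, quality factors, and bid profiles converges to what $\ddist(\cdot)$ together with $(\mathbf{F},\mathbf{G})$ would produce, while simultaneously agreeing with the prior $\mathbf{F}(\cdot \mid v_i = \tilde{v}_i)$ on the opposing types. Both properties follow from the i.i.d.\ redrawing of $(\vals^t,\gamma^t)$ together with a standard diagonal extraction that yields a single subsequence working for every player, every type, and every profile $\vals$ in the (finite) support.
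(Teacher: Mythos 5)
Your proposal is correct and follows exactly the route the paper intends: the paper omits the proof of this lemma, asserting only that it is ``similar to'' the full-information argument (empirical distributions, compactness, limit points are equilibria because regret vanishes), and your write-up is a faithful adaptation of that argument to the Bayesian setting, with the right additional ingredients (per-type empirical distributions, finiteness of the type space, the law of large numbers guaranteeing $\abs{I^T(i,\tilde{v}_i)} \to \infty$, and a diagonal extraction). The only point deserving explicit mention in a full write-up is that within the rounds where $\vals^t = \vals$ the quality factors $\gamma^t$ are conditionally independent of the bids (players cannot condition on the current round's $\gamma^t$), which is what justifies the claimed convergence of the joint empirical law; you flag this bookkeeping correctly.
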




\proofsk{
{The proof follows the same lines as the proof of Lemma
\ref{lem:total-anarchy-fullinfo}.
{For each $t \geq 1$, $(\vals^t, \gamma^t, \bids^t)$ is the tuple of profiles
corresponding to round $t$.}
Since the distribution over types has finite
support, there is almost surely some $T_0$ such that, for each type profile
$\tilde{\vals}$ in the support of $\mathbf{F}$, there is $t \leq T_0$ such that $\vals^t =
\tilde{\vals}$. For each $T \geq T_0$, {let $\ddist^T$ be the joint distribution
on $(\vals,
\gamma, \bids)$ that samples $t$ uniformly} from $\{1, 2, \hdots, T\}$ and
outputs $(\vals^t, \gamma^t, \bids^t)$. This defines a sequence of distributions
{$\{\ddist^T\}_{T \geq T_0}$}. Now, it is enough to observe that each convergent
subsequence
converges to a Bayesian coarse correlated equilibrium. Therefore the price of
total anarchy is bounded by the price of anarchy over Bayesian coarse correlated
equilibria.}}

\begin{remark}
 Since our theorems hold in the limit as $T$ goes to infinity, they
do not depend on the speed of learning -- which we can define as the speed in
which subsequences of $\{\ddist^T\}_T$ converge to a Bayesian coarse correlated
equilibrium in the proof above.  {The rate of convergence depends on
the specific learning methods being used by the players.  However, the reader might notice
that, regardless of the learning methods used, the speed of learning will depend on the time
required for the empirical distribution of $\vals, \gamma$ to resemble the real distribution.
The speed in which this
happens is controlled, for example, by the Central Limit Theorem.  Also, if players observe
only realized payoffs each round (rather than expectations), one would expect low click-through
rates to increase the amount of time needed for learning, since more rounds will be required to
accurately estimate expected outcomes.  See Auer et al. \cite{AU95} for a more detailed
discussion on the speed of convergence of no-regret algorithms with limited feedback.}
\end{remark}

The arguments in the proof of Lemma \ref{lem.smoothness-to-bpoa} can be used
with essentially no change to show that $(\lambda,\mu)$-semi-smoothness implies
a bound of $(\mu+1) / \lambda$ to the price of total anarchy with uncertainty.
From this, we know that:

\begin{theorem}
\label{thm.bcce}
 The price of total anarchy of the Generalized Second Price auction with uncertainty is bounded by $3.164$.
\end{theorem}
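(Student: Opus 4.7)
The plan is to mimic almost verbatim the proof of Lemma \ref{lem.smoothness-to-bpoa}, replacing ``Bayes-Nash equilibrium'' with ``Bayesian coarse correlated equilibrium'' and using the $(1-1/e,1)$-semi-smoothness of GSP established in Lemma \ref{smoothness-lemma}. By the preceding lemma, it suffices to bound the ratio $\E_{\vals,\gamma}[OPT(\vals,\gamma)] / \E_{\vals,\gamma,\bids\sim\ddist(\vals)}[SW(\pi(\bids,\gamma),\vals,\gamma)]$ over all Bayesian coarse correlated equilibria $\ddist(\cdot)$, so I fix such a $\ddist$ and proceed.

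First I would recall from Lemma \ref{smoothness-lemma} that for each valuation profile $\vals$ there is a randomized deviation $b'_i(v_i)$ depending only on $v_i$ such that, for every fixed $\bids$ and every $\gamma$,
\[
\sum_i \E_{b'_i(v_i)}[u_i(b'_i(v_i),\bidsmi,\gamma)] \ \geq\ \left(1-\tfrac{1}{e}\right)OPT(\vals,\gamma) - SW(\pi(\bids,\gamma),\vals,\gamma).
\]
Since $b'_i(v_i)$ is a function of $v_i$ alone, the BCCE condition applied to each realization of $b'_i(v_i)$ gives, for every player $i$ and every fixed $v_i$,
\[
\E_{\valsmi,\gamma,\bids\sim\ddist(\vals)}[u_i(\bids,\gamma)\mid v_i] \ \geq\ \E_{\valsmi,\gamma,\bids\sim\ddist(\vals)}\E_{b'_i(v_i)}[u_i(b'_i(v_i),\bidsmi,\gamma)\mid v_i].
\]
Taking expectation over $v_i$ and summing over $i$, I exchange the sum and expectation and apply the displayed semi-smoothness inequality pointwise in $(\vals,\bids,\gamma)$ to obtain
\[
\sum_i \E_{\vals,\gamma,\bids\sim\ddist(\vals)}[u_i(\bids,\gamma)] \ \geq\ \left(1-\tfrac{1}{e}\right)\E_{\vals,\gamma}[OPT(\vals,\gamma)] - \E_{\vals,\gamma,\bids\sim\ddist(\vals)}[SW(\pi(\bids,\gamma),\vals,\gamma)].
\]

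Finally, since the social welfare of any outcome is at least the sum of players' utilities (the remaining surplus being the payments, which are nonnegative), the left-hand side is upper bounded by $\E_{\vals,\gamma,\bids\sim\ddist(\vals)}[SW(\pi(\bids,\gamma),\vals,\gamma)]$. Rearranging yields
\[
\E_{\vals,\gamma}[OPT(\vals,\gamma)] \ \leq\ \frac{2}{1-1/e}\,\E_{\vals,\gamma,\bids\sim\ddist(\vals)}[SW(\pi(\bids,\gamma),\vals,\gamma)],
\]
which is the claimed bound of approximately $3.164$. The only subtlety — and the part that deserves a careful sentence in the write-up — is justifying that one may substitute the BCCE deviation for the random strategy $b'_i(v_i)$ even though the equilibrium condition is stated for deterministic deviations; this is immediate because the inequality holds for every realization of $b'_i(v_i)$ and we may average. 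No other step poses a real obstacle, since semi-smoothness has already been shown to hold in the stronger pointwise form \eqref{eq.semismooth.noexpect}, which trivially integrates against the joint distribution of $(\vals,\gamma,\bids)$ induced by any Bayesian coarse correlated equilibrium.
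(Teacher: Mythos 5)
Your proposal is correct and follows essentially the same route as the paper: the paper proves Theorem~\ref{thm.bcce} by first reducing the price of total anarchy with uncertainty to Bayesian coarse correlated equilibria (its Lemma preceding the theorem) and then observing that the argument of Lemma~\ref{lem.smoothness-to-bpoa} applies verbatim with the BCCE condition in place of the Bayes-Nash condition, combined with the $(1-\frac{1}{e},1)$-semi-smoothness of Lemma~\ref{smoothness-lemma} to yield $\frac{2}{1-1/e}\approx 3.164$. Your handling of the randomized deviation (averaging the equilibrium inequality over realizations of $b'_i(v_i)$) is exactly the step used in the paper's proof of Lemma~\ref{lem.smoothness-to-bpoa}, so nothing is missing.
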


In Appendix \ref{appendix:bayes} we present an improved result for the Bayesian
price of anarchy that also extends to the following improved
bound for learning outcomes.
\begin{theorem}
\label{thm.bcce-improved}
 The price of total anarchy of the Generalized Second Price auction with uncertainty is bounded by $2.927$.
\end{theorem}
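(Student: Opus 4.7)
The plan is to extend the semi-smoothness framework used for the $3.164$ bound of Theorem \ref{thm.bcce} in two complementary ways. First, I would invoke the improved semi-smoothness analysis of GSP from Appendix \ref{appendix:bayes} (the one that establishes Theorem \ref{thm:cbpoa_main}), which sharpens Lemma \ref{smoothness-lemma} and yields parameters $(\lambda,\mu)$ with $(\mu+1)/\lambda \le 2.927$. Second, I would verify that the efficiency implication ``semi-smoothness yields a $(\mu+1)/\lambda$ bound'' passes from Bayes--Nash equilibria to Bayesian coarse correlated equilibria, so the same constant governs the price of total anarchy under uncertainty.

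The BCCE extension is the new ingredient and is essentially a one-line modification of Lemma \ref{lem.smoothness-to-bpoa}. The crucial observation is that the semi-smoothness deviation $b'_i(v_i)$ depends only on player $i$'s own valuation (and its own independent randomness). Hence for any BCCE $\ddist(\vals)$, applying the BCCE inequality with each realized value of $b'_i(v_i)$ as a fixed deviation and averaging over that randomness gives $\E[\utili(\bids,\gamma) \mid v_i] \ge \E[\utili(b'_i(v_i), \bidsmi, \gamma) \mid v_i]$. Taking expectation over $v_i$, summing over $i$, using that social welfare is at least the sum of utilities, and invoking semi-smoothness pointwise for each realized $(\vals,\gamma,\bids)$ before integrating yields
\[
\E[SW(\pi(\bids,\gamma),\vals,\gamma)] \;\ge\; \sum_i \E[\utili(\bids,\gamma)] \;\ge\; \lambda\, \E[OPT(\vals,\gamma)] - \mu\, \E[SW(\pi(\bids,\gamma),\vals,\gamma)],
\]
which rearranges to the desired $(\mu+1)/\lambda \le 2.927$ bound. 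Combined with a BCCE-to-learning reduction analogous to Lemma \ref{lem:total-anarchy-fullinfo}---applied round-by-round to the subsequences $I(i,\tilde{v}_i)$ on which player $i$ has type $\tilde{v}_i$, and invoking compactness to extract an accumulation point of the empirical joint distributions over $(\vals,\gamma,\bids)$---this yields Theorem \ref{thm.bcce-improved}.

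The main obstacle lies not in either of these routine reductions but in the underlying improved semi-smoothness analysis that drives $(\mu+1)/\lambda$ down from $2(1-1/e)^{-1} \approx 3.164$ to $2.927$. The randomized deviation with density $f(y)=1/(v_i-y)$ on $[0, v_i(1-1/e)]$ used in Lemma \ref{smoothness-lemma} is too crude; the tighter bound must extract additional slack from the structure of GSP payments---for instance, by charging the term $\alpha_{\nu(i)}\gamma_{\pi(\nu(i))} b_{\pi(\nu(i))}$ against a payment already present in the current allocation rather than bounding it by $\alpha_{\nu(i)}\gamma_{\pi(\nu(i))} v_{\pi(\nu(i))}$ via the non-overbidding assumption, in the spirit of the refinement used to prove Theorem \ref{thm:full-info-learning-3}. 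That more delicate calculation is what is deferred to Appendix \ref{appendix:bayes}, and it is the sole place where the gap between $3.164$ and $2.927$ is closed; once the corresponding $(\lambda,\mu)$ parameters are in hand, the passage to BCCE and hence to the learning-with-uncertainty model is immediate by the argument above.
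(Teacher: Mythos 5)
Your proposal follows essentially the same route as the paper: the $2.927$ bound in Appendix \ref{appendix:bayes} rests only on the one-sided inequality $\E[u_i(\bids,\gamma)] \ge \E[u_i(b'_i(v_i),\bidsmi,\gamma)]$ for a deviation $b'_i(v_i)$ depending solely on player $i$'s own type, and this inequality holds verbatim at a Bayesian coarse correlated equilibrium, which is exactly how Lemma \ref{lem:ineq-implies-poa-bound} extends to the price of total anarchy with uncertainty. One caveat: the displayed chain in the middle of your write-up, which combines a $(\lambda,\mu)$-semi-smoothness inequality with only $\E[SW] \ge \sum_i \E[u_i]$, cannot by itself yield $2.927$ (that route caps out at $3.164$) --- the paper is explicit that inequality (\ref{semi-smooth-weak}) is \emph{weaker} than semi-smoothness, and the reduction must write the welfare as utilities \emph{plus payments} so that the $-(1+\delta)\sum_i\E[\alpha_i\gamma_{\pi(i)}b_{\pi(i)}]$ term is absorbed by the payment terms, with the top slot covered by the extra $+\E[\alpha_1\gamma_{\pi(1)}b_{\pi(1)}]$; since your final paragraph correctly identifies this payment-charging mechanism as the source of the improvement, the slip is presentational rather than structural.
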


\bibliographystyle{abbrv}


\begin{thebibliography}{10}

\bibitem{Arora-etal}
S.~Arora, E.~Hazan, and S.~Kale.
\newblock The multiplicative weights update method: a meta-algorithm and some
  applications.
\newblock {\em Theory of Computing Journal}, 8(6):~121--164, 2012.


\bibitem{Athey}
S.~Athey.
\newblock Single crossing properties and the existence of pure strategy equlibria in games of incomplete information.
\newblock {\em Econometrica}, 69(4):~861--890, 2001.

\bibitem{AN10}
S.~Athey and D.~Nekipelov.
\newblock A structural model of sponsored search advertising auctions.
\newblock Working paper. Preliminary version in {\em 6th Workshop on Ad Auctions
(AdAuctions '10)}, 2010.


\bibitem{AU95}
P.~Auer, N.~Cesa-Bianchi, Y.~Freund, and R.~E.~Schapire.
\newblock Gambling in a rigged casino: the adversarial multi-armed bandit
problem.
\newblock In {\em Proceedings of the 36th Annual Symposium on Foundations of
Computer Science (FOCS '95)}, pp. 322--331, 1995.



\bibitem{roughgarden_bhawalkar}
K.~Bhawalkar and T.~Roughgarden.
\newblock Welfare guarantees for combinatorial auctions with item bidding.
\newblock In {\em Proceedings of the 22nd Annual ACM-SIAM Symposium on Discrete Algorithms (SODA '11)}, pp. 700--709, 2011.

\bibitem{BHLR-08}
A.~Blum, M.~Hajiaghayi, K.~Ligett, and A.~Roth.
\newblock Regret minimization and the price of total anarchy.
\newblock In {\em Proceedings of the 40th Annual ACM Symposium on Theory of Computing (STOC '08)}, pp. 373--382, 2008.

\bibitem{BlumMansour}
A. Blum and Y. Mansour.
\newblock Learning, regret minimization, and equilibria.
\newblock In N. Nisan, T. Roughgarden, \'{E}. Tardos, and V. V. Vazirani, editors, {\em Algorithmic Game Theory}, pp. 79--101.
\newblock Cambridge University Press, 2007.

\bibitem{CKKK11}
I.~Caragiannis, C.~Kaklamanis, P.~Kanellopoulos, and M.~Kyropoulou.
\newblock On the efficiency of equilibria in generalized second price auctions.
\newblock In {\em Proceedings of the 12th ACM Conference on Electronic Commerce (EC '11)}, pp. 81--90, 2011.

\bibitem{christodoulou}
G.~Christodoulou, A.~Kov\'{a}cs, and M.~Schapira.
\newblock Bayesian combinatorial auctions.
\newblock In {\em Proceedings of the 35th International Colloquium
  on Automata, Languages and Programming (ICALP '08), Part I}, LNCS 5125, Springer, pp. 820--832, 2008.

\bibitem{cremer-mclean}
J.~Cremer and R.~P. McLean.
\newblock Full extraction of the surplus in {B}ayesian and dominant strategy
  auctions.
\newblock {\em Econometrica}, 56(6):~1247--1257, 1988.

\bibitem{edelman07sellingbillions}
B.~Edelman, M.~Ostrovsky, and M.~Schwarz.
\newblock Internet advertising and the generalized second-price auction:
  Selling billions of dollars worth of keywords.
\newblock {\em American Economic Review}, 97(1):~242--259, 2007.

\bibitem{Edelman10repeated}
B.~Edelman and M.~Schwarz.
\newblock Optimal auction design and equilibrium selection in sponsored search
  auctions.
\newblock {\em American Economic Review}, 100(2):~597--602, 2010.

\bibitem{emarketer}
eMarketer.
\newblock Online ad spend continues double-digit growth.
\newblock \url{http://www.emarketer.com/Article.aspx?R=1008087}.

\bibitem{FosterVohra1997}
D. P. Foster and R. V. Vohra.
\newblock Calibrated learning and correlated equilibrium.
\newblock {\em Games and Economic Behavior}, 21:~40--55, 1997.

\bibitem{FudenbergLevine1999}
D. Fudenberg and D. Levine.
\newblock Conditional universal consistency.
\newblock {\em Games and Economic Behavior}, 29:~104--130, 1999.

\bibitem{gomes09}
R.~D. Gomes and K.~S. Sweeney.
\newblock Bayes-{N}ash equilibria of the generalized second price auction.
\newblock In {\em Games and Economic Behavior}, to appear.

\bibitem{H-57}
J.~Hannan.
\newblock Approximation to {B}ayes risk in repeated play.
\newblock In M.~Dresher, A.~Tucker, and P.~Wolfe, editors, {\em Contributions
  to the Theory of Games}, Volume~3. Princeton University Press, 1957.

\bibitem{hart00}
S.~Hart and A.~M. Colell.
\newblock {A simple adaptive procedure leading to correlated equilibrium}.
\newblock {\em Econometrica}, 68(5):~1127--1150, 2000.


\bibitem{lahaie}
S.~Lahaie.
\newblock An analysis of alternative slot auction designs for sponsored search.
\newblock In {\em Proceedings of the 7th ACM Conference on Electronic Commerce (EC '06)}, pp. 218--227, 2006.

\bibitem{algorithmgame}
S.~Lahaie, D.~Pennock, A.~Saberi, and R.~Vohra.
\newblock Sponsored seach auctions. In N. Nisan, T. Roughgarden, \'{E}. Tardos, and V. V. Vazirani, editors, {\em Algorithmic Game Theory}, pp. 699--716.
\newblock Cambridge University Press, 2007.

\bibitem{weighted-maj}
N.~Littlestone and M.~K. Warmuth.
\newblock The weighted majority algorithm.
\newblock {\em Information and Computation}, 108(2):~212--261, 1994.

\bibitem{lucier_ics}
B.~Lucier.
\newblock Beyond equilibria: Mechanisms for repeated combinatorial auctions.
\newblock In {\em Proceedings of the First Symposium on Innovations in Computer Science (ICS '10)}, pp. 166--177, 2010.

\bibitem{borodin_lucier}
B.~Lucier and A.~Borodin.
\newblock Price of anarchy for greedy auctions.
\newblock In {\em Proceedings of the 21st Annual ACM-SIAM Symposium on Discrete Algorithms (SODA '10)}, pp. 537--553, 2010.

\bibitem{LPL11}
B.~Lucier and R.~Paes~Leme.
\newblock {GSP} auctions with correlated types.
\newblock In {\em Proceedings of the 12th Annual ACM Conference on Electronic
  Commerce (EC '11)}, pp. 71--80, 2011.
\newblock Preliminary version: Improved social welfare bounds for {GSP} at equilibrium.
\newblock arXiv:1011.3268, 2010.


\bibitem{maille}
P.~Maille, E.~Markakis, M.~Naldi, G.~D. Stamoulis, and B.~Tuffin.
\newblock Sponsored search auctions: An overview of research with emphasis on
  game theoretic aspects.
\newblock {\em Electronic Commerce Research}, 12(3):~265--300, 2012.

\bibitem{jacmMehtaSVV07}
A.~Mehta, A.~Saberi, U.~V. Vazirani, and V.~V. Vazirani.
\newblock Adwords and generalized online matching.
\newblock {\em Journal of the ACM}, 54(5), 2007.

\bibitem{NadavRoughgarden2010}
U. Nadav and T. Roughgarden
 \newblock The limits of smoothness: A primal-dual framework for price of anarchy bounds.
 \newblock In In {\em Proceedings of the 6th International Workshop on Internet and Network Economics (WINE '10)}, pp. 319--326, 2010.

\bibitem{PLT09_aaw}
R.~{Paes Leme} and \'{E}. Tardos.
\newblock Sponsored search equilibria for conservative bidders.
\newblock In {\em Proceedings of the 5th Workshop on Ad Auctions (AdAuctions '09)}, 2009.

\bibitem{PLT10}
R.~{Paes Leme} and \'{E}. Tardos.
\newblock Pure and {B}ayes-{N}ash price of anarchy for generalized second price
  auctions.
\newblock In {\em Proceedings of the 51st Annual IEEE Symposium on Foundations of Computer
  Science (FOCS '10)}, pp. 735--744, 2010.


\bibitem{Reny}
P.~J. Reny.
\newblock On the existence of monotone pure strategy equilibria in Bayesian
games.
\newblock {\em  Econometrica}, 79(2):~499--553, 2011.

\bibitem{roughgarden}
T.~Roughgarden.
\newblock Intrinsic robustness of the price of anarchy.
\newblock In {\em Proceedings of the 41st Annual ACM Symposium on
  Theory of Computing (STOC '09)}, pp. 513--522, 2009.

  \bibitem{RoughgardenEC12}
  T. Roughgarden.
  \newblock The price of anarchy in games of incomplete information.
\newblock In {\em Proceedings of the 13th ACM Conference on Electronic Commerce (EC '12)}, pp. 862--879, 2012.

\bibitem{RoughgardenSchoppmann2011}
T. Roughgarden and F. Schoppmann.
\newblock Local smoothness and the Price of Anarchy in atomic splittable congestion games.
\newblock In {\em Proceedings of the 22nd Annual ACM-SIAM Symposium on Discrete Algorithms (SODA '11)}, pp. 255--267, 2011.

  \bibitem{Syrgkanis12}
  V. Syrgkanis.
\newblock Bayesian games and the smoothness framework.
\newblock arXiv:1203.5155, 2012.

\bibitem{SyrgkanisTardos13}
V. Syrgkanis and \'{E}. Tardos
\newblock Composable and efficient mechanisms.
\newblock In {\em Proceedings of the 44th Symposium on Theory of Computing Conference (STOC '13)}, pp. 211--220, 2013.
\newblock See also arXiv:1211.1325, 2012.

\bibitem{Varian06positionauctions}
H.~R. Varian.
\newblock Position auctions.
\newblock {\em International Journal of Industrial Organization}, 26:~1163--1178, 2007.

\end{thebibliography}

\appendix
\comment{
\section{GSP auction games are not smooth}\label{appendix:smoothness}

\begin{theorem} The GSP auction game is not $(\lambda, \mu)$-smooth for any
parameters $\lambda, \mu$. \end{theorem}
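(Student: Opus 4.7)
The plan is to exhibit, for each choice of $\lambda>0$ and $\mu\geq 0$, a small GSP instance together with two conservative bid profiles $\hs,\hs'$ that jointly violate the $(\lambda,\mu)$-smoothness inequality. The driving intuition is that in GSP a low-value advertiser can ``block'' the only valuable slot by bidding up to her own valuation, which simultaneously suppresses the welfare of $\hs$ and shuts the door on the high-value advertiser's unilateral deviation derived from $\hs'$. Since the welfare gap $sw(\hs')-sw(\hs)$ can be driven to infinity by inflating the high value, while the deviation sum stays bounded, no fixed $(\lambda,\mu)$ with $\lambda>0$ can survive.

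Concretely, I would work with two players and two slots, taking $\alpha_1=1$, $\alpha_2=0$, $\gamma_1=\gamma_2=1$, $v_1=H$, and $v_2=1$, where $H$ is a parameter to be fixed later as a function of $\lambda,\mu$. Consider the two profiles $\hs'=(1/2,\,1/4)$ and $\hs=(0,\,1)$, both of which are conservative. A direct check shows that under $\hs'$ the high-value player wins slot $1$, so $sw(\vals,\hs')=H$, while under $\hs$ the low-value player wins slot $1$, so $sw(\vals,\hs)=1$.

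Next I would compute the two relevant deviations $U_i(\vals,b'_i,\hs_{-i})$. When player $1$ bids $b'_1=1/2$ against $h_2=1$, she is strictly outbid by player $2$, lands in slot $2$, and receives utility $0$ because $\alpha_2=0$. When player $2$ bids $b'_2=1/4$ against $h_1=0$, she wins slot $1$ at price $0$ and earns utility $1$. No ties arise in either deviation, so the sum is unambiguously $\sum_i U_i(\vals,b'_i,\hs_{-i})=1$. The $(\lambda,\mu)$-smoothness inequality would then require $1\geq \lambda H-\mu$, i.e., $H\leq(\mu+1)/\lambda$. For any preassigned $\lambda>0$ and $\mu\geq 0$ one may simply choose $H>(\mu+1)/\lambda$, producing a direct contradiction and ruling out $(\lambda,\mu)$-smoothness.

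The only step that requires thought is the choice of $\hs$: one needs an advertiser who both contributes negligibly to $sw(\hs)$ and whose bid in $\hs$ is high enough to nullify the high-value player's deviation from $\hs'$. Choosing $b_2=v_2=1$ (the largest conservative bid available to player $2$) accomplishes both ends at once, while the asymmetry $v_1\gg v_2$ creates the unbounded welfare gap. Everything else---verifying conservativeness, the allocations, the prices, and the resulting utilities---is a routine verification using the GSP payment rule.
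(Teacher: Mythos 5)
Your proof is correct and is essentially the paper's own argument: a two-player, two-slot instance in which the low-value player's high (but conservative) bid in $\hs$ both captures the only valuable slot and blocks the high-value player's deviation drawn from $\hs'$, forcing $\lambda\cdot sw(\hs')$ to outrun the bounded deviation sum plus $\mu\cdot sw(\hs)$. The only difference is normalization --- you blow up the top valuation $H\to\infty$ with $\alpha_2=0$, while the paper shrinks $v$ and $\alpha_2$ toward $0$ --- which by scale invariance of the utilities yields the same limiting contradiction $0\geq\lambda$.
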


\begin{proof}
Consider a GSP auction game with $2$ slots with click-through-rates $1$ and $\alpha < 1$, and
two advertisers with valuations $1$ and $v$. Let $s = (b_1, b_2)$ and $s^* = (b_3,
b_4)$ where $1 > v > b_2 > b_3 > b_4 > b_1 \geq 0$. For this case, the expression
$\sum_i u_i (s_i^*, s_{-i}) \geq \lambda SW(s^*) - \mu SW (s)$ (according to Roughgarden's definition of smoothness \cite{roughgarden}) becomes:
$$ \alpha(1-0) + 1(v - b_1) \geq \lambda( 1 + \alpha v) - \mu (v +
\alpha),$$
which yields
$$(1 + \mu) (\alpha + v) \geq \lambda (1 + \alpha v),$$
for $b_1\geq 0$. Given any $\lambda$, $\mu > 0$, there exist arbitrarily small $\alpha$ and $v$ so that the above inequality is violated; hence, the GSP auction game is not $(\lambda, \mu)$-smooth.
\end{proof}
}
\section{Improved Bounds for Games with Uncertainty}\label{appendix:bayes}

\newcommand{\iannis}[1]{\noindent\textbf{iannis:}\marginpar{****}%
\textit{{#1}}\textbf{:iannis}}%

In this section, we prove Theorems \ref{thm:cbpoa_main} and \ref{thm.bcce-improved}. The idea of the proof is analogous to our proof of the bound of $3.164$ in Section \ref{sec:uncertainty}, based on semi-smoothness, but will use a modification of semi-smoothness specially tailored to the GSP auction game, analogous to the way we modified the simple bound of 4 derived using the $(1/2,1)$-semi-smoothness of GSP to a bound of 3 on the price of total anarchy for the full information case in Section \ref{sec.learning}. We handle the case when a player has the highest effective value separately, and show that there exists a bidding profile $\bids'$ such that the following inequality holds.

\begin{equation}\label{semi-smooth-weak}
\E[\sum_i u_i(b'_i(v_i), \bidsmi,\gamma)] \geq \beta\E[OPT(\mathbf{v},\gamma)]-(1+\delta)\sum_i \E[\alpha_i\gamma_{\pi(i)}b_{\pi(i)}]+\E[\alpha_1\gamma_{\pi(1)}b_{\pi(1)}].
\end{equation}
This inequality is analogous but weaker than claiming that GSP is $(\beta,\delta)$-semi-smooth, yet we will show that it implies that the price of anarchy (and the price of total anarchy) is bounded by $\frac{1+\delta}{\beta}$. This connection is stated in the next lemma.

\begin{lemma}\label{lem:ineq-implies-poa-bound}
Assume that for every GSP auction game there is a bidding profile $\bids'$ and parameters $\beta, \delta>0$ such that inequality (\ref{semi-smooth-weak}) holds for any strategy profile $\bids$. Then, the price of anarchy of the Generalized Second Price auction with uncertainty is at most $\frac{1+\delta}{\beta}$. The same bound applies to the price of total anarchy with uncertainty as well.
\end{lemma}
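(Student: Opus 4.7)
The plan is to mimic the proof of Lemma \ref{lem.smoothness-to-bpoa}, using the Bayes-Nash equilibrium condition to swap in the deviation $b'_i(v_i)$, applying inequality (\ref{semi-smooth-weak}) in place of the ordinary semi-smoothness inequality, and then using the GSP payment structure to absorb the extra $\E[\alpha_1\gamma_{\pi(1)}b_{\pi(1)}]$ term.

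First I would take any Bayes-Nash equilibrium $\bids$ and, for each player $i$, use the equilibrium inequality against the deviation $b'_i(v_i)$ provided by the hypothesis. Taking expectations over types (and randomness in strategies and quality factors) and summing over $i$, one gets
\[
\sum_i \E[u_i(\bids,\gamma)] \;\geq\; \sum_i \E[u_i(b'_i(v_i),\bidsmi,\gamma)] \;\geq\; \beta\,\E[OPT(\vals,\gamma)] - (1+\delta)\sum_i \E[\alpha_i\gamma_{\pi(i)}b_{\pi(i)}] + \E[\alpha_1\gamma_{\pi(1)}b_{\pi(1)}],
\]
where the second inequality is (\ref{semi-smooth-weak}).

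Next I would account for GSP payments. The per-impression payment of the player in slot $k$ is $\alpha_k\gamma_{\pi(k+1)}b_{\pi(k+1)}$, so the total payment equals $\sum_{k\geq 1}\alpha_k\gamma_{\pi(k+1)}b_{\pi(k+1)}=\sum_{i\geq 2}\alpha_{i-1}\gamma_{\pi(i)}b_{\pi(i)}$. Since $\alpha_{i-1}\geq\alpha_i$, this is at least $\sum_{i\geq 2}\alpha_i\gamma_{\pi(i)}b_{\pi(i)}$. Because social welfare equals sum of utilities plus total payments, this yields
\[
\E[SW(\pi(\bids,\gamma),\vals,\gamma)] \;\geq\; \sum_i\E[u_i(\bids,\gamma)] + \sum_i \E[\alpha_i\gamma_{\pi(i)}b_{\pi(i)}] - \E[\alpha_1\gamma_{\pi(1)}b_{\pi(1)}].
\]
Combining this with the previous display, the $\E[\alpha_1\gamma_{\pi(1)}b_{\pi(1)}]$ terms cancel exactly and I obtain
\[
\E[SW] + \delta\sum_i\E[\alpha_i\gamma_{\pi(i)}b_{\pi(i)}] \;\geq\; \beta\,\E[OPT].
\]
Finally I would invoke the no-overbidding assumption $b_i\leq v_i$, which gives $\sum_i \alpha_i\gamma_{\pi(i)}b_{\pi(i)}\leq \sum_i \alpha_i\gamma_{\pi(i)}v_{\pi(i)} = SW(\pi(\bids,\gamma),\vals,\gamma)$, so that $(1+\delta)\E[SW]\geq\beta\E[OPT]$, i.e.\ the price of anarchy with uncertainty is at most $(1+\delta)/\beta$.

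For the price of total anarchy, I would use the Bayesian coarse correlated equilibrium reduction (the analogue of Lemma \ref{lem:total-anarchy-fullinfo} for the Bayesian model, stated just before the theorem) and rerun the identical chain of inequalities: the BCCE defining condition $\E[u_i(\bids,\gamma)\mid v_i]\geq\E[u_i(b'_i,\bidsmi,\gamma)\mid v_i]$ is exactly what the first step of the argument uses, and the rest of the proof (payment accounting, no-overbidding) is purely deterministic and therefore survives verbatim under expectations with respect to $\bids\sim\ddist(\vals)$. The main obstacle is simply bookkeeping: verifying that the extra term $+\E[\alpha_1\gamma_{\pi(1)}b_{\pi(1)}]$ in (\ref{semi-smooth-weak}) is precisely calibrated to cancel the "top slot" shift between the payment sum $\sum_{i\geq 2}\alpha_{i-1}\gamma_{\pi(i)}b_{\pi(i)}$ and the expression $\sum_i\alpha_i\gamma_{\pi(i)}b_{\pi(i)}$ appearing in the hypothesis, so that no loss is incurred for the top slot.
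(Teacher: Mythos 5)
Your proposal is correct and follows essentially the same route as the paper's proof: bound the equilibrium utilities from below via the deviation $\bids'$ and inequality (\ref{semi-smooth-weak}), write social welfare as utilities plus payments, lower-bound the total payment by $\sum_{i\geq 2}\E[\alpha_i\gamma_{\pi(i)}b_{\pi(i)}]$ using $\alpha_{i-1}\geq\alpha_i$ so that the extra $\E[\alpha_1\gamma_{\pi(1)}b_{\pi(1)}]$ term is absorbed, and finish with the no-overbidding bound $\sum_i\E[\alpha_i\gamma_{\pi(i)}b_{\pi(i)}]\leq\E[SW]$. The extension to the price of total anarchy via the Bayesian coarse correlated equilibrium condition is also exactly the paper's argument.
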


\begin{proof}
Consider a Nash equilibrium bidding profile $\bids$. Clearly, $\E[u_i(\bids,\gamma)] \geq \E[u_i(b'_i(v_i),\bidsmi,\gamma)]$ by selecting the bidding profile $\bids'$ as in inequality (\ref{semi-smooth-weak}). We use this inequality and the fact that the social welfare is the sum of the expected utilities of the advertisers plus the total payments to get
\begin{eqnarray*}
\E[SW(\pi(\mathbf{b(v)},\gamma),\mathbf{v},\gamma)] &=&\mathbb{E}[\sum_i{u_i(\mathbf{b},\gamma)}] +\mathbb{E}[\sum_i{\alpha_i\gamma_{\pi(i+1)}b_{\pi(i+1)}}]\\
&\geq &\E[\sum_i u_i(b'_i(v_i), \bidsmi,\gamma)] + \mathbb{E}[\sum_i{\alpha_i\gamma_{\pi(i+1)}b_{\pi(i+1)}}]\\
&\geq &\beta\E[OPT(\mathbf{v},\gamma)]-(1+\delta)\sum_i \E[\alpha_i\gamma_{\pi(i)}b_{\pi(i)}]+\E[\alpha_1\gamma_{\pi(1)}b_{\pi(1)}]\\
&& {}+{}\sum_{i\geq 2}{\E[\alpha_i\gamma_{\pi(i)}b_{\pi(i)}]}\\
&=& \beta\E[OPT(\mathbf{v},\gamma)]-\delta\sum_i \E[\alpha_i\gamma_{\pi(i)}b_{\pi(i)}]\\
&\geq& \beta\E[OPT(\mathbf{v},\gamma)]-\delta\E[SW(\pi(\mathbf{b(v)},\gamma),\mathbf{v},\gamma)],
\end{eqnarray*}
which implies that the price of anarchy is at most $\frac{1+\delta}{\beta}$, as desired.
To get the same bound for the price of total anarchy, consider a coarse correlated equilibrium $\bids$ instead of a Nash equilibrium.
\end{proof}

The next lemma (Lemma \ref{lem:f-poa}) connects inequality (\ref{semi-smooth-weak}) to the existence of functions with particular properties which we call $(\beta, \delta)$-bounded functions.

\begin{defn}\label{def:bounded}
Let $\beta,\delta >0$ and  $g:[0,1]\rightarrow \mathbb{R}_+$. Function $g$ is $(\beta, \delta)$-bounded if the following three properties hold:
\begin{eqnarray*}
i)&& \int_0^1 g(y)\ud y \leq 1,\\
ii)&& (1-z) \int_z^1 g(y)\ud y \geq \beta -\delta z, \quad \forall z\in[0,1],\\
iii)&&  \int_z^1 (1-y)g(y)\ud y \geq \beta -(1+\delta) z, \quad \forall z\in[0,1].
\end{eqnarray*}
\end{defn}
Recall that the proof of Lemma \ref{smoothness-lemma} that GSP is $(1-\frac{1}{e},1)$-semi-smooth relied on a random distribution with density $f(y)=\frac{1}{1-y}$ for $y \in [0,(1-\frac{1}{e})]$ and $f(y) = 0$ otherwise, and considered the bid distribution $b'_i=yv_i$ for player $i$ with valuation $v_i$. The improved proof in Lemma \ref{lem:f-poa} uses a $(\beta,\delta)$-bounded function $g$ in place of this $f$.

\begin{lemma}\label{lem:f-poa}
Let $\beta, \delta>0$ be such that a $(\beta, \delta)$-bounded function exists. Then, there is a bidding profile $\bids'$ such that inequality (\ref{semi-smooth-weak}) holds for any strategy profile $\bids$.
\end{lemma}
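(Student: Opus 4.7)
The plan is to extend the randomized-deviation technique used in Lemma~\ref{smoothness-lemma}: each player $i$ will deviate to a random bid $b_i'(v_i)=y\,v_i$, where $y\in[0,1]$ is sampled according to $g$, viewed as a sub-density (placing any residual probability $1-\int_0^1 g(y)\,dy$ at $y=0$, which can only weaken the resulting lower bound favorably). This strategy depends on $v_i$ alone, as required. Since (\ref{semi-smooth-weak}) is stated in expectation, I will establish the corresponding inequality pointwise for each realization of $\mathbf v,\gamma,\bids$ before taking expectations, adopting (without loss) the sorting $\gamma_1 v_1\ge\gamma_2 v_2\ge\cdots\ge\gamma_n v_n$, so that $\nu(i)=i$ and $OPT(\mathbf v,\gamma)=\sum_i\alpha_i\gamma_i v_i$.

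For an arbitrary player $i$, I will set $z_i=\gamma_{\pi(i)}b_{\pi(i)}/(\gamma_i v_i)$. Whenever $y\ge z_i$, the effective bid $\gamma_i y v_i$ is at least the $i$-th largest effective bid among $\bidsmi$ (since removing player $i$ from the original profile only decreases the effective bid at any fixed rank), so player $i$ lands in a slot with click-through-rate at least $\alpha_i$; since her payment is at most her own bid $y v_i$, her utility is at least $\alpha_i\gamma_i v_i(1-y)$. Integrating against $g$ and invoking property~(iii) of Definition~\ref{def:bounded} (when $z_i>1$ the bound is trivial provided $\beta\le 1+\delta$) will yield the generic bound
$$
\E_y[u_i(b_i'(v_i),\bidsmi,\gamma)]\;\ge\;\beta\,\alpha_i\gamma_i v_i\;-\;(1+\delta)\,\alpha_i\gamma_{\pi(i)}b_{\pi(i)}.
$$

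The key refinement concerns player $1$, whose efficient slot is the top slot. Whenever $y\ge z_1$, her deviation bid has the largest effective value in the modified profile, so she takes slot $1$ and pays at most the largest effective bid among $\bidsmi$, which is bounded above by $\gamma_{\pi(1)}b_{\pi(1)}$ (both when $\pi(1)=1$ and when $\pi(1)\ne 1$, by monotonicity of sorted effective bids together with no overbidding). Her utility is then at least $\alpha_1(\gamma_1 v_1-\gamma_{\pi(1)}b_{\pi(1)})=\alpha_1\gamma_1 v_1(1-z_1)$, and $z_1\le 1$ because $\gamma_{\pi(1)}b_{\pi(1)}\le\gamma_{\pi(1)}v_{\pi(1)}\le\gamma_1 v_1$ by no overbidding and the sorting. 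Integrating against $g$ and applying property~(ii) of Definition~\ref{def:bounded} will give the sharpened bound
$$
\E_y[u_1(b_1'(v_1),\bidsmi,\gamma)]\;\ge\;\beta\,\alpha_1\gamma_1 v_1\;-\;\delta\,\alpha_1\gamma_{\pi(1)}b_{\pi(1)},
$$
which improves on the generic bound at $i=1$ by exactly $\alpha_1\gamma_{\pi(1)}b_{\pi(1)}$.

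Summing the sharpened bound for $i=1$ with the generic bound for each $i\ge 2$, the $\beta$-terms collapse to $\beta\cdot OPT(\mathbf v,\gamma)$, the remaining terms combine to $-(1+\delta)\sum_i\alpha_i\gamma_{\pi(i)}b_{\pi(i)}+\alpha_1\gamma_{\pi(1)}b_{\pi(1)}$, and taking expectation over $(\mathbf v,\gamma,\bids)$ delivers (\ref{semi-smooth-weak}). The main obstacle I expect is the bookkeeping distinction between the allocation $\pi^i(\cdot)$ of the profile with player $i$ removed and the allocation $\pi(\cdot)$ of the full profile $\bids$: in both the generic and the player-$1$ analyses one must argue that the effective bid at any fixed rank cannot rise when a single bidder is withdrawn, which permits replacing $\pi^i$ by $\pi$ in all the relevant thresholds with only a favorable effect on the lower bound.
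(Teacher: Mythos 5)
Your proposal is correct and follows essentially the same route as the paper's proof: the same randomized deviation $y\,v_i$ with $y$ drawn according to $g$ (a legitimate strategy since it depends only on $v_i$), the same use of property (ii) for the player whose efficient slot is the top one and property (iii) for the remaining players, and the same monotonicity fact $\gamma_{\pi^i(j)}b_{\pi^i(j)}\le\gamma_{\pi(j)}b_{\pi(j)}$ when a bidder is withdrawn. The only difference is organizational: the paper lower-bounds the most profitable deviation by conditioning on the events $\{v_i=x,\ \nu(i)=j\}$ and integrating over the valuation density, whereas you establish the inequality pointwise for each realization of $(\vals,\gamma,\bids)$ and take expectations at the end, which is valid (every term in (\ref{semi-smooth-weak}) is invariant under relabeling players by their realized effective values) and streamlines the bookkeeping.
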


\begin{proof}
In the proof we consider a GSP auction game with $n$ slots with click-through-rates $\alpha_1\geq \alpha_2 \geq \ldots \geq \alpha_n\geq 0$ and $n$ conservative players with random valuations $v_1, v_2, \ldots, v_n\geq 0$ and random quality factors $\gamma_1, \gamma_2, \ldots, \gamma_n\geq 1$. Let $\mathbf{b}$ denote any bid profile. Also, we denote by $\bids'$ the bid profile such that $b'_i(x)$ is the most profitable deviation for player $i$ when her valuation is $v_i=x$. We will prove inequality (\ref{semi-smooth-weak}) using this definition for $\bids'$.

The proof is long and technical. Before presenting it, we give a high-level overview. We apply the following three steps:
\begin{itemize}
\item Step 1: We focus on advertiser $i$ with valuation $v_i=x$ and obtain a lower bound on her expected utility $\mathbb{E}[u_i(b'_i,\bidsmi,\gamma)|v_i=x]$ when deviating to $b'_i(x)$. The main idea we use here is that the deviation to bid $b'_i(x)$ is more profitable for advertiser $i$ than deviating to the bid $yx$, for every $y\in[0,1]$. This yields infinitely many lower bounds on $\mathbb{E}[u_i(b'_i,\bidsmi,\gamma)|v_i=x]$; we combine them in a single lower bound by taking their weighted average, with weights indicated by the values of a $(\beta,\delta)$-bounded function $g$.
\item Step 2: We further refine the lower bound on $\mathbb{E}[u_i(b'_i,\bidsmi,\gamma)|v_i=x]$. Here, we reason about the slots advertiser $i$ would occupy by deviating to bid $yx$ and the utility she would then have, and we use the properties of $(\beta,\delta)$-bounded functions. We consider slot $1$ and slots $i \ge 2$ separately, as we did in the proof of Theorem \ref{thm:full-info-learning-3}.
\item Step 3: We use the bound obtained in Step 2 in order to compute a lower bound for the total expected utility of all players when deviating to $\bids'$. We first lower-bound the unconditional expected utility of advertiser $i$ and, then, we simply sum the obtained inequalities over all advertisers in order to obtain inequality (\ref{semi-smooth-weak}).
\end{itemize}

\paragraph{Step 1:} Focus on player $i$ and let $x$ be a possible valuation for this player. Let $\beta,\delta>0$ and consider a $(\beta,\delta)$-bounded function $g:[0,1]\rightarrow \mathbb{R}_+$. Using the first property in Definition \ref{def:bounded} for $g$ and the fact that $b'_i(x)$ is the most profitable deviation for advertiser $i$, we have
\begin{eqnarray*}
\E[u_i(b'_i(x),\bidsmi,\gamma)|v_i=x] &\geq & \int_0^1{g(y)\E[u_i(b'_i(x),\bidsmi,\gamma)|v_i=x]\ud y}\\
&\geq &\int_0^1{g(y)\E[u_i(yx,\bidsmi,\gamma)|v_i=x]\ud y}.
\end{eqnarray*}
Given any slot $j$, let $A^{ij}_x$ denote the event that $v_i=x$ and $\nu(i)=j$ and $B^{ij}_x$ denote the event that $\nu(i)=j$ given that $v_i=x$. Using these definitions, we can rewrite the quantity $\E[u_i(yx,\mathbf{b}_{-i},\gamma)|v_i=x]$ for every $y\in [0,1]$ as
\begin{eqnarray*}
\E[u_i(yx,\mathbf{b}_{-i},\gamma)|v_i=x] &=& \sum_{j=1}^n{\E[u_i(yx,\mathbf{b}_{-i},\gamma)|A^{ij}_x]\cdot \prob[B^{ij}_x]}.
\end{eqnarray*}
By the last two (in)equalities, we obtain that
\begin{eqnarray}\nonumber
\E[u_i(b'_i(x),\bidsmi,\gamma)|v_i=x]  &\geq & \int_0^1{g(y)\sum_{j=1}^n{\E[u_i(yx,\mathbf{b}_{-i},\gamma)|A^{ij}_x]\cdot \prob[B^{ij}_x]}\ud y}\\\label{eq:bn-improved-main}
&=& \sum_{j=1}^n{\int_0^1{g(y)\E[u_i(yx,\mathbf{b}_{-i},\gamma)|A^{ij}_x]\ud y}\cdot \prob[B^{ij}_x]}.
\end{eqnarray}

\paragraph{Step 2:} Our purpose now is to refine the lower bound provided by inequality (\ref{eq:bn-improved-main}). Let $\pi^i(\bidsmi,i)$ be the player with the $i$-th highest effective bid in $\bidsmi$.

First consider slot 1 separately. Assume that the event $A^{i1}_x$ is true, i.e., $v_i=x$ and $\nu(i)=1$. We will first lower-bound the quantity $\E[u_i(yx,\mathbf{b}_{-i},\gamma)|A^{i1}_x]$ for every $y\in [0,1]$. By deviating to bid $yx$, player $i$ is allocated the first slot whenever $\gamma_i yx>\gamma_{\pi^i(1)}b_{\pi^i(1)}$; in this case, player $i$ has utility at least $\alpha_1 (\gamma_i x - \gamma_{\pi^i(1)} b_{\pi^i(1)})$. Hence,
\begin{eqnarray*}
\E[u_i(yx,\mathbf{b}_{-i},\gamma)|A^{i1}_x] &\geq & \E[\alpha_1(\gamma_i x-\gamma_{\pi^i(1)}b_{\pi^i(1)})\one{\gamma_i y x > \gamma_{\pi^i(1)}b_{\pi^i(1)}}|A^{i1}_x].
\end{eqnarray*}
We set $z=\frac{\gamma_{\pi^i(1)}b_{\pi^i(1)}}{\gamma_i x}$ and use this last inequality to obtain
\begin{eqnarray} \nonumber
\int_0^1{g(y)\E[u_i(yx,\mathbf{b}_{-i},\gamma)|A^{i1}_x]\ud y} &\geq & \int_0^1{g(y)\cdot \E[\alpha_1\gamma_i x(1-z)\one{y > z}|A^{i1}_x]\ud y}\\\nonumber
&=& \E[\alpha_1\gamma_i x (1-z)\int_0^1{g(y)\one{y > z}\ud y}|A^{i1}_x]\\\nonumber
&=& \E[\alpha_1 \gamma_i x (1-z)\int_z^1{g(y)\ud y}|A^{i1}_x]\\
\label{eq:bn-improved-slot1}
&\geq& \E[\alpha_1\left(\beta\gamma_i x - \delta\gamma_{\pi^i(1)} b_{\pi^i(1)}\right)|A^{i1}_x]
\end{eqnarray}
where the second inequality follows by the second property of Definition \ref{def:bounded} for function $g$ (and using the definition of $z$).

Now, assume that the event $A^{ij}_x$ is true for $j\geq 2$, i.e., $v_i=x$ and $\nu(i)=j$. We will lower-bound the quantity $\E[u_i(yx,\mathbf{b}_{-i},\gamma)|A^{ij}_x]$ for every $y\in [0,1]$. By deviating to bid $yx$, player $i$ is allocated slot $j$ (or a higher one) whenever $\gamma_i yx>\gamma_{\pi^i(j)}b_{\pi^i(j)}$; in this case, player $i$ has utility at least $\alpha_j\gamma_i x(1-y)$. Hence,
\begin{eqnarray*}
\mathbb{E}[u_i(yx,\mathbf{b}_{-i},\gamma)|A^{ij}_x] &\geq & \E[\alpha_{j} \gamma_i x(1-y)\one{\gamma_i yx > \gamma_{\pi^i(j)}b_{\pi^i(j)}}|A^{ij}_x].
\end{eqnarray*}
We set $z=\frac{\gamma_{\pi^i(j)}b_{\pi^i(j)}}{\gamma_i x}$ and use this last inequality to obtain
\begin{eqnarray}\nonumber
\int_0^1{g(y)\E[u_i(yx,\mathbf{b}_{-i},\gamma)|A^{ij}_x]\ud y}&\geq & \int_0^1{g(y)\E[\alpha_j\gamma_i x(1-y)\one{y>z}|A^{ij}_x]\ud y}\\\nonumber
&=& \E[\alpha_j\gamma_i x\int_z^1{(1-y)g(y)\ud y}|A^{ij}_x]\\\label{eq:bn-improved-slotj}
&\geq& \E[\alpha_j\left(\beta\gamma_i x - (1+\delta)\gamma_{\pi^i(j)} b_{\pi^i(j)}\right)|A^{ij}_x].
\end{eqnarray}
The second inequality follows by the third property of Definition \ref{def:bounded} for function $g$ (and using the definition of $z$).

We now use inequality (\ref{eq:bn-improved-main}) together with the lower bounds for $\int_0^1{g(y)\E[u_i(yx,\mathbf{b}_{-i},\gamma)|A^{ij}_x]\ud y}$ obtained in (\ref{eq:bn-improved-slot1}) and (\ref{eq:bn-improved-slotj}). We have
\begin{eqnarray*}
\E[u_i(b'_i(x),\bidsmi,\gamma)|v_i=x] &\geq & \E[\alpha_1\left(\beta\gamma_i x - \delta\gamma_{\pi^i(1)} b_{\pi^i(1)}\right)|A^{i1}_x]\cdot \prob[B^{i1}_x]\\
&&{}+{} \sum_{j=2}^n{\E[\alpha_j\left(\beta\gamma_i x - (1+\delta)\gamma_{\pi^i(j)} b_{\pi^i(j)}\right)|A^{ij}_x]\cdot \prob[B^{ij}_x]}\\
&=& \beta\sum_{j=1}^n{\E[\alpha_j \gamma_i x|A^{ij}_x]\cdot \prob[B^{ij}_x]}-\delta \E[\alpha_1 \gamma_{\pi^i(1)} b_{\pi^i(1)}|A^{i1}_x]\cdot \prob[B^{i1}_x]\\
&& {}-{}(1+\delta)\sum_{j=2}^{n}{\E[\alpha_j\gamma_{\pi^i(j)} b_{\pi^i(j)}|A^{ij}_x]\cdot \prob[B^{ij}_x]}.
\end{eqnarray*}

\paragraph{Step 3:} We can now bound the unconditional expected utility of player $i$ when deviating to strategy $b'_i(v_i)$ by integrating over the range of valuations for player $i$ and using the last inequality obtained in Step 2. In the following we use $f_{v_i}(x)$ to denote the probability density function  of the random variable $v_i$. We have
\begin{eqnarray*}
\mathbb{E}[u_i(b'_i(v_i),\bidsmi,\gamma)]&=&\int_0^\infty {\mathbb{E}[u_i(b'_i(v_i),\bidsmi,\gamma)]\cdot f_{v_i}(x)}\ud x\\
&\geq&\beta \sum_{j=1}^{n}  \int_0^\infty {\E [\alpha_{j} \gamma_iv_i|A^{ij}_x]\cdot \prob[B^{ij}_x]\cdot f_{v_i}(x)}\ud x\\
&&{}-{}\delta \int_0^\infty { \E[\alpha_1\gamma_{\pi^i(1)}b_{\pi^i(1)}|A^{i1}_x]\cdot \prob[B^{i1}_x]\cdot f_{v_i}(x)}\ud x\\
&&{}-{}(1+\delta)\sum_{j=2}^{n}\int_0^\infty { \E[\alpha_{j}\gamma_{\pi^i(j)}b_{\pi^i(j)}|A^{ij}_x]\cdot \prob[B^{ij}_x]\cdot f_{v_i}(x)}\ud x.
\end{eqnarray*}
Now, we use the property
\begin{eqnarray*}
\int_0^{\infty}{\mathbb{E}[Z|A^{ij}_x]\cdot \prob[B^{ij}_x]\cdot f_{v_i}(x)\ud x}&=& \mathbb{E}[Z|\nu(i)=j]\cdot\prob[\nu(i)=j],
\end{eqnarray*}
for any random variable $Z$ as well as the fact that $\gamma_{\pi^i(j)}b_{\pi^i(j)}\leq \gamma_{\pi(j)}b_{\pi(j)}$ to obtain that
\begin{eqnarray*}
&& \mathbb{E}[u_i(b'_i(v_i),\bidsmi,\gamma)]\\
& \geq &\beta \sum_{j=1}^{n} \E [\alpha_{j} \gamma_iv_i|\nu(i)=j]\cdot \prob[\nu(i)=j] -\delta \E[\alpha_1\gamma_{\pi^i(1)}b_{\pi^i(1)}|\nu(i)=1]\cdot \prob[\nu(i)=1]\\
&&{}-{}(1+\delta)\sum_{j=2}^{n}\E[\alpha_{j}\gamma_{\pi^i(j)}b_{\pi^i(j)}|\nu(i)=j]\cdot \prob[\nu(i)=j]\\
&\geq & \beta \sum_{j=1}^{n} \E [\alpha_{j} \gamma_iv_i|\nu(i)=j]\cdot \prob[\nu(i)=j]-\delta \E[\alpha_1\gamma_{\pi(1)}b_{\pi(1)}|\nu(i)=1]\cdot \prob[\nu(i)=1]\\
&&{}-{}(1+\delta)\sum_{j=2}^{n}\E[\alpha_{j}\gamma_{\pi(j)}b_{\pi(j)}|\nu(i)=j]\cdot \prob[\nu(i)=j]\\
&=& \beta \sum_{j=1}^{n} \E [\alpha_{j} \gamma_iv_i|\nu(i)=j]\cdot \prob[\nu(i)=j]-(1+\delta)\sum_{j=1}^{n}\E[\alpha_{j}\gamma_{\pi(j)}b_{\pi(j)}|\nu(i)=j]\cdot \prob[\nu(i)=j]\\
&&{}+{}\E[\alpha_1\gamma_{\pi(1)}b_{\pi(1)}|\nu(i)=1]\cdot \prob[\nu(i)=1]\\
&=&\beta \E [\alpha_{\nu(i)} \gamma_iv_i]-(1+\delta)\E[\alpha_{\nu(i)}\gamma_{\pi(\nu(i))}b_{\pi(\nu(i))}]+\E[\alpha_1\gamma_{\pi(1)}b_{\pi(1)}|\nu(i)=1]\cdot \prob[\nu(i)=1].
\end{eqnarray*}

By summing over all players, we obtain inequality (\ref{semi-smooth-weak}). In particular,
\begin{eqnarray*}
\sum_i{\mathbb{E}[u_i(b'_i(v_i),\bidsmi,\gamma)]}&\geq&\beta \sum_i \E [\alpha_{\nu(i)} \gamma_iv_i]-(1+\delta)\sum_i \E[\alpha_{\nu(i)}\gamma_{\pi(\nu(i))}b_{\pi(\nu(i))}]\\
&&{}+{}\sum_i \E[\alpha_1\gamma_{\pi(1)}b_{\pi(1)}|\nu(i)=1]\cdot \prob[\nu(i)=1]\\
&=&\beta\E[OPT(\mathbf{v},\gamma)]-(1+\delta)\sum_i \E[\alpha_i\gamma_{\pi(i)}b_{\pi(i)}]+\E[\alpha_1\gamma_{\pi(1)}b_{\pi(1)}].
\end{eqnarray*}
\end{proof}

Therefore, by Lemmas \ref{lem:ineq-implies-poa-bound} and \ref{lem:f-poa}, in order to bound the price of anarchy, it suffices to find a $(\beta,\delta)$-bounded function such that the ratio $\frac{1+\delta}{\beta}$ is as low as possible. This is the purpose of the following lemma.
\begin{lemma}\label{lem:function}
Consider a function $g:[0,1]\rightarrow \mathbb{R}_+$ defined as follows:
\begin{eqnarray*}
g(y) = \left\{
\begin{array}{ll}
\frac{\kappa}{1-y},& y \in[0,\lambda),\\
\frac{(\kappa-1)(1-\mu)}{(1-y)^2},& y\in [\lambda,\mu),\\
0,& y\in[\mu, 1],
\end{array} \right.
\end{eqnarray*}
where $\kappa>1$ and $1> \mu\geq \lambda \geq 0$ such that $\frac{(\kappa-1)(\mu-\lambda)}{1-\lambda}-\kappa \ln(1-\lambda)\leq 1$, and $(\kappa-1)(1-\mu)\ln{\frac{1-\lambda}{1-\mu}}-(\kappa-1)\mu +\kappa\lambda\geq 0$.
Then, $g(y)$ is an $((\kappa -1)\mu, \kappa-1)$-bounded function.
\end{lemma}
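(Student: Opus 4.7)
The plan is to verify properties (i), (ii), (iii) of Definition B.2 by direct computation on each of the three pieces where $g$ is defined, and by exploiting convexity/concavity to interpolate between the pieces. A useful preliminary calculation is
\[
\int_0^{\lambda}\frac{\kappa}{1-y}\ud y = -\kappa\ln(1-\lambda),\qquad \int_{\lambda}^{\mu}\frac{(\kappa-1)(1-\mu)}{(1-y)^2}\ud y = (\kappa-1) - \frac{(\kappa-1)(1-\mu)}{1-\lambda},
\]
and, for $(1-y)g(y)$,
\[
\int_0^{\lambda}\kappa\ud y = \kappa\lambda,\qquad \int_{\lambda}^{\mu}\frac{(\kappa-1)(1-\mu)}{1-y}\ud y = (\kappa-1)(1-\mu)\ln\tfrac{1-\lambda}{1-\mu}.
\]
Adding the first pair and rewriting $(\kappa-1)-(\kappa-1)(1-\mu)/(1-\lambda) = (\kappa-1)(\mu-\lambda)/(1-\lambda)$ shows that property (i) is exactly the first stated hypothesis.

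For property (ii), let $F(z)=(1-z)\int_z^1 g(y)\ud y - (\kappa-1)(\mu-z)$; I would split on the location of $z$. If $z\in[\mu,1]$ both terms are non-positive and the claim is trivial. If $z\in[\lambda,\mu)$ the integral is $(\kappa-1)(\mu-z)/(1-z)$ by the second line above, so $F(z)=0$ identically and in particular $F(\lambda)=0$. The only non-trivial range is $z\in[0,\lambda)$, which I would handle by differentiating: $F'(z)=-\int_z^1 g(y)\ud y - (1-z)g(z)+(\kappa-1)$, and using $(1-z)g(z)=\kappa$ on this piece gives $F'(z)=-\int_z^1 g(y)\ud y - 1$, with $F''(z)=g(z)\geq 0$. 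Thus $F'$ is non-decreasing and $F'(\lambda)=-(\kappa-1)(\mu-\lambda)/(1-\lambda)-1<0$, so $F'<0$ on $[0,\lambda]$, i.e., $F$ is decreasing there, and $F(z)\geq F(\lambda)=0$.

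For property (iii), let $H(z)=\int_z^1(1-y)g(y)\ud y - (\kappa-1)\mu + \kappa z$, so that $H'(z)=\kappa-(1-z)g(z)$. The three ranges behave very cleanly: on $[0,\lambda)$ we have $(1-z)g(z)=\kappa$ and hence $H'\equiv 0$, so $H$ is constant on $[0,\lambda]$; on $[\lambda,\mu)$ one has $H'(z)=\kappa-(\kappa-1)(1-\mu)/(1-z)$ with $H''<0$, so $H$ is concave; on $[\mu,1]$ we have $g\equiv 0$, so $H'(z)=\kappa>0$ and $H$ is strictly increasing. Direct evaluation gives $H(\mu)=\mu\geq 0$ and $H(0)=\kappa\lambda + (\kappa-1)(1-\mu)\ln\tfrac{1-\lambda}{1-\mu}-(\kappa-1)\mu$, which is $\geq 0$ by the second stated hypothesis. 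Combining: $H\equiv H(0)\geq 0$ on $[0,\lambda]$; $H$ is concave on $[\lambda,\mu]$ with both endpoint values non-negative, so by the chord inequality $H\geq 0$ on that interval; and $H$ is increasing on $[\mu,1]$ starting from $H(\mu)\geq 0$.

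The main obstacle is checking (ii) on $[0,\lambda)$: here neither side admits a closed-form equality, so one must argue via the second-derivative/monotonicity analysis above, pegged to the clean boundary value $F(\lambda)=0$ produced by the exact equality on $[\lambda,\mu)$. The rest is bookkeeping, and the two stated hypotheses of the lemma are used exactly once each — for property (i) and for the non-negativity of $H(0)$ in property (iii) — while property (ii) follows purely from the structural form of $g$.
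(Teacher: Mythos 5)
Your proof is correct and follows essentially the same route as the paper's: the same three-way case split at $\lambda$ and $\mu$ for properties (ii) and (iii), the same explicit antiderivative computations, and the two hypotheses used in exactly the same two places (property (i) and the value $H(0)$ in property (iii)); the only cosmetic differences are that you argue via $F'(z)=-\int_z^1 g(y)\,\ud y-1<0$ on $[0,\lambda)$ and via concavity plus the chord inequality on $[\lambda,\mu)$ for (iii), where the paper instead evaluates the expressions in closed form and checks the sign of the derivative (noting $H'>0$ on $[\lambda,\mu)$). One trivial wording slip: on $[\mu,1]$ in property (ii) the integral term of $F$ is zero and the remaining term $(\kappa-1)(z-\mu)$ is non-negative (not ``both terms non-positive''), but the conclusion $F\geq 0$ there is of course correct.
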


\begin{proof}
We begin by computing $\int_0^1 g(y)\ud y$. It holds that
\begin{eqnarray*}
\int_0^1 g(y)\ud y &=& \int_0^\lambda \frac{\kappa}{1-y}\ud y+\int_\lambda^\mu{\frac{(\kappa-1)(1-\mu)}{(1-y)^2}\ud y} =\frac{(\kappa-1)(\mu-\lambda)}{1-\lambda}-\kappa \ln(1-\lambda)
\leq 1,
\end{eqnarray*}
where the inequality holds by the first assumption concerning $\kappa$, $\lambda$ and $\mu$. Hence, $g$ satisfies the first property of Definition \ref{def:bounded}.

For the second property of Definition \ref{def:bounded} it suffices to prove that
\begin{eqnarray*}
(1-z) \int_z^1 g(y)\ud y+(\kappa -1)(z-\mu)&\geq &0, \quad \forall z\in[0,1].
\end{eqnarray*}
We distinguish between three cases depending on $z$. First, we consider the case that $z\in [\mu, 1]$. We have
\begin{eqnarray*}
(1-z) \int_z^1 g(y)\ud y+(\kappa -1)(z-\mu)
&=&(\kappa-1)(z-\mu) \geq 0,
\end{eqnarray*}
where the inequality holds since $z\in [\mu,1]$ and $\kappa>1$. For $z\in (\lambda, \mu)$ we have
\begin{eqnarray*}
(1-z) \int_z^1 g(y)\ud y+(\kappa -1)(z-\mu)&=&(1-z) \int_z^\mu \frac{(\kappa-1)(1-\mu)}{(1-y)^2}\ud y+(\kappa -1)(z-\mu)
= 0.
\end{eqnarray*}
Finally, for $ z\in [0,\lambda]$ we have
\begin{eqnarray*}
&&(1-z) \int_z^1 g(y)\ud y+(\kappa -1)(z-\mu)\\
&=&(1-z)\int_z^\lambda \frac{\kappa}{1-y}\ud y+(1-z) \int_\lambda^\mu \frac{(\kappa-1)(1-\mu)}{(1-y)^2}\ud y+(\kappa -1)(z-\mu)\\
&=&(1-z)\kappa \ln{\frac{1-z}{1-\lambda}}+\frac{(1-z)(\kappa-1)(\mu-\lambda)}{1-\lambda}+(\kappa-1) (z-\mu)\\
&\geq& (\kappa-1)(\mu-\lambda)+(\kappa-1) (\lambda-\mu)\\
&=&0,
\end{eqnarray*}
where the inequality follows by the fact that the derivative with respect to $z$ is negative for $z\in [0,\lambda]$. Hence, it holds that $g$ satisfies the second property of Definition \ref{def:bounded}.

It remains to prove that $g$ satisfies the third property of Definition \ref{def:bounded}. Similarly, it suffices to prove that
\begin{eqnarray*}
\int_z^1 (1-y)g(y)\ud y-(\kappa -1)\mu+\kappa z&\geq &0, \quad \forall z\in[0,1].
\end{eqnarray*}
Again, we distinguish between three cases depending on $z$. First, we consider the case that $z\in [\mu , 1]$. We have
\begin{eqnarray*}
\int_z^1 (1-y)g(y)\ud y-(\kappa -1)\mu+\kappa z
&=&{}-{}(\kappa -1)\mu+\kappa z \geq \mu \geq 0,
\end{eqnarray*}
where the first inequality follows since $z\in [\mu,1]$. For $z\in [\lambda,\mu)$ we have
\begin{eqnarray*}
\int_z^1 (1-y)g(y)\ud y-(\kappa -1)\mu+\kappa z&=&\int_z^\mu \frac{(\kappa-1)(1-\mu)}{(1-y)^2}\ud y-(\kappa -1)\mu+\kappa z\\
&=&(\kappa-1)(1-\mu)\ln{\frac{1-z}{1-\mu}}-(\kappa -1)\mu+\kappa z\\
&\geq&(\kappa-1)(1-\mu)\ln{\frac{1-\lambda}{1-\mu}}-(\kappa -1)\mu+\kappa \lambda\\
&\geq&0,
\end{eqnarray*}
where the first inequality follows by the fact that the derivative with respect to $z$ is strictly positive for $z\in [\lambda,\mu)$, and the second inequality follows by the second assumption concerning $\kappa$, $\lambda$ and $\mu$. Finally, for $ z\in [0, \lambda)$ we have
\begin{eqnarray*}
\int_z^1 (1-y)g(y)\ud y-(\kappa -1)\mu+\kappa z & =&\int_z^\lambda \kappa \ud y+\int_\lambda^\mu \frac{(\kappa-1)(1-\mu)}{(1-y)^2}\ud y-(\kappa -1)\mu+\kappa z\\
&=&(\kappa-1)(1-\mu)\ln{\frac{1-\lambda}{1-\mu}}-(\kappa -1)\mu+\kappa \lambda\\
&\geq&0,
\end{eqnarray*}
where the inequality follows by the second assumption concerning $\kappa$, $\lambda$ and $\mu$. The proof of the lemma is complete.
\end{proof}

We are now ready to complete the proof of Theorems \ref{thm:cbpoa_main} and \ref{thm.bcce-improved}. The two conditions of Lemma \ref{lem:function} are satisfied for $\kappa=1.7507$, $\lambda=0.225$, and $\mu=0.7966$. By combining Lemmas \ref{lem:ineq-implies-poa-bound}, \ref{lem:f-poa}, and \ref{lem:function}, we conclude that the price of (total) anarchy of GSP auction games over Bayes-Nash equilibria is at most $\frac{\kappa}{(\kappa-1)\mu}< 2.9276$.

\section{Improved Bounds for Pure Nash Equilibria}\label{appendix:fullinfo}
In this section we present our results for pure Nash equilibria in the full information setting (Theorems \ref{thm:full_info_bound} and \ref{few_slots_thm}). For simplicity of exposition, we consider all quality factors to be equal to $1$; so, $\gamma$ does not appear in notation. Our proofs can be adapted to different quality factors in a straightforward way. We consider GSP
auction games with $n$ advertisers with valuations $v_1 \geq \ldots \geq v_n\geq 0$
and $n$ slots with click-through-rates $\alpha_1 \geq \ldots \geq \alpha_n \geq 0$.
We assume that neither all slots have the same click-through-rate nor all
advertisers have the same valuation (in both cases, the price of anarchy is
$1$). 

We use the term {\em inefficiency} of allocation $\pi$ to refer to the ratio
$OPT(\vals)/SW(\pi, \vals)$. In Lemma \ref{theorem_weak_formulation} we showed
that every pure Nash equilibrium corresponds to a weakly feasible allocation.
Hence, the price of anarchy of a GSP auction game over pure Nash equilibria is
upper-bounded by the worst-case inefficiency among weakly feasible allocations.

\begin{defn} An allocation $\pi$ is called {\em proper} if for any two slots $i<j$ with equal click-through-rates, it holds $\pi(i)<\pi(j)$.
\end{defn}
Clearly, for any non-proper weakly feasible allocation, we can construct a proper weakly feasible one with equal social welfare. Hence, in order to prove our upper bounds, we essentially upper-bound the worst-case inefficiency over proper weakly feasible allocations.

Given an allocation $\pi$, consider the directed graph $G(\pi)$ that has one node for each slot, and a directed edge for each advertiser $i$ that connects the node corresponding to slot $i$ to the node corresponding to slot $\pi^{-1}(i)$. In general, $G(\pi)$ consists of a set of disjoint cycles and may contain self-loops.
\begin{defn}
An allocation $\pi$ is called {\em reducible} if its directed graph $G(\pi)$ has more than one cycles. Otherwise, it is called {\em irreducible}.
\end{defn}
Given a reducible allocation $\pi$ such that $G(\pi)$ has $c\geq 2$ cycles, we can construct $c$ GSP auction subgames by considering the slots and the advertisers that correspond to the nodes and edges of each cycle. Similarly, for $\ell=1,\ldots, c$, the restriction $\pi^{\ell}$ of $\pi$ to the slots and advertisers of the $\ell$-th subgame is an allocation for this game. The next fact essentially states that we can focus on irreducible allocations.

\begin{fact}\label{fact:reducible} If allocation $\pi$ is weakly feasible for the original GSP auction game, then $\pi^\ell$ is weakly feasible for the $\ell$-th subgame as well, for $\ell=1,\ldots, c$. Then, the inefficiency of $\pi$ is at most the maximum inefficiency among the allocations $\pi^{\ell}$ for $\ell=1,\ldots, c$.
\end{fact}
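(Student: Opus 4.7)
My plan is to prove the two assertions of the fact in turn, relying on the fact that the cycle decomposition of $G(\pi)$ partitions both the slots and the advertisers of the original game into exactly the same index sets (so each subgame is genuinely a GSP instance on its own).

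First, I would unpack what a cycle in $G(\pi)$ means: each node has in-degree and out-degree equal to $1$, so the cycles are vertex-disjoint, and a cycle with node set $T \subseteq \{1,\dots,n\}$ uses slots indexed by $T$ together with advertisers indexed by $T$ (since the edge out of node $i$ goes to $\pi^{-1}(i)$, closure of the cycle forces the set of advertisers along the cycle to equal $T$). The $\ell$-th subgame therefore has click-through-rates $\{\alpha_i\}_{i\in T_\ell}$, advertisers with valuations $\{v_i\}_{i\in T_\ell}$, and allocation $\pi^\ell$ equal to the restriction of $\pi$ to $T_\ell$.

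For weak feasibility of $\pi^\ell$: inequality (\ref{weak_equilibrium}) is a purely local condition indexed by a pair $i,j$ of slots, involving only $\alpha_i,\alpha_j,v_{\pi(i)},v_{\pi(j)}$. For $i,j \in T_\ell$ the inequality asked of $\pi^\ell$ is \emph{identical} to the inequality already satisfied by $\pi$ in the original game (since $\pi^\ell(i)=\pi(i)$ on $T_\ell$), so the implication is immediate.

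For the inefficiency bound I would show that both the GSP welfare and the optimum decompose additively across cycles:
$$SW(\pi,\vals) \;=\; \sum_{\ell=1}^c SW(\pi^\ell,\vals|_{T_\ell}), \qquad OPT(\vals) \;=\; \sum_{\ell=1}^c OPT(\vals|_{T_\ell}).$$
The first equality is by definition, since the $T_\ell$ partition the slots. For the second, the key observation is that since $v_1 \ge \dots \ge v_n$ and $\alpha_1 \ge \dots \ge \alpha_n$ globally, both restrictions to $T_\ell$ remain sorted in the same order, so the efficient allocation in each subgame assigns advertiser $i$ to slot $i$ for $i\in T_\ell$, exactly as in the global optimum; summing over cycles recovers $OPT(\vals)$. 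Writing $r_\ell = OPT(\vals|_{T_\ell})/SW(\pi^\ell,\vals|_{T_\ell})$ and $R=\max_\ell r_\ell$, the mediant/weighted-average inequality gives $OPT(\vals) \le R \cdot SW(\pi,\vals)$, which is exactly the claimed bound on inefficiency.

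No step looks like a genuine obstacle; the only thing to be careful about is verifying that a cycle in $G(\pi)$ really forces the advertiser set to coincide with the slot set (so that the subgame makes sense and the sorted-order argument for $OPT$ goes through). Once that combinatorial observation is in place, weak feasibility is inherited trivially and the welfare bound follows from the additive decomposition.
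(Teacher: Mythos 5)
Your proof is correct, and it is precisely the argument the authors leave implicit (the paper states this as a Fact without proof): weak feasibility is a pairwise condition so it restricts to each cycle, a cycle's node set equals its advertiser set so each subgame is well-defined with both sequences still sorted, and then $SW$ and $OPT$ both decompose additively over the cycles, giving the inefficiency bound via the mediant inequality. No gaps.
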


When considering irreducible weakly feasible allocations, we further assume that the index of the slot advertiser $1$ occupies is smaller than the index of the advertiser that is assigned to slot $1$. This is without loss of generality due to the following argument. Consider an irreducible weakly feasible allocation $\pi$ for a GSP auction game with $n$ advertisers such that $\pi^{-1}(1)>\pi(1)$. We construct a new game with click-through-rate $a'_i=v_i$ for slot $i$ and valuation $v'_i=\alpha_i$ for advertiser $i$, for $i=1,\ldots, n$, and the allocation $\pi_*=\pi^{-1}$. Observe that $\pi_*^{-1}(1)=\pi(1)<\pi^{-1}(1)=\pi_*(1)$. Clearly, the optimal social welfare is the same in both games while the social welfare of $\pi_*$ for the new game is $SW(\pi_*,\vals') = \sum_i{a'_iv'_{\pi_*(i)}} = \sum_i{v_i \alpha_{\pi_*(i)}}= \sum_i{\alpha_{\pi^{-1}(i)} v_i} = SW(\pi,\vals)$. We can also prove the weak feasibility conditions for $\pi_*$ in the new game for each $i,j$. In order to do so, consider the weak feasibility condition for $\pi$ in the original game for advertisers $\pi(j),\pi(i)$. It is $\alpha_{j} v_{\pi(j)} \geq \alpha_{i}(v_{\pi(j)}-v_{\pi(i)})$ and, equivalently, $v_{\pi(i)} \alpha_{i}  \geq v_{\pi(j)}(\alpha_{i}-\alpha_{j})$. By the definition of the click-through-rates and the valuations in the new game and the definition of $\pi_*$, we obtain that $a'_{\pi_*^{-1}(i)} v'_i \geq a'_{\pi_*^{-1}(j)} (v'_i-v'_j)$ as desired.

We furthermore note that when $v_n=0$, any proper weakly feasible allocation is reducible. This is obviously the case if all advertisers with zero valuation use the last slots. Otherwise, consider an advertiser $i$ with non-zero valuation that is assigned a slot $\pi^{-1}(i)>\pi^{-1}(j)$ where $j$ is an advertiser with zero valuation. Since the allocation is proper, it holds that $\alpha_{\pi^{-1}(i)}<\alpha_{\pi^{-1}(j)}$. Then, we obtain a contradiction by the weak feasibility condition $\alpha_{\pi^{-1}(i)} v_i \geq \alpha_{\pi^{-1}(j)} (v_i-v_j)$ for advertisers $i,j$.

\subsection{GSP auction games with two and three advertisers}\label{sec:3bidders-ub}

We now complete the proof of Theorem \ref{few_slots_thm}.

We begin by presenting the matching upper bound on the price of anarchy for two advertisers and two slots. The upper bound follows by bounding the inefficiency of weakly feasible allocations. Consider a GSP auction game with two slots with click-through-rates $\alpha_1\geq \alpha_2=\beta \alpha_1$, for $\beta\in[0,1]$ and two advertisers with valuations $v_1\geq v_2=\lambda v_1$, for $\lambda\in [0,1]$. The only non-optimal weakly feasible allocation $\pi$ assigns advertiser $1$ to slot $2$ and advertiser $2$ to slot $1$. Its social welfare is $SW(\pi, \vals) = \alpha_1v_2+\alpha_2v_1 = \alpha_1v_1(\beta+\lambda)$, while the optimal social welfare is $OPT(\vals)=\alpha_1v_1+\alpha_2v_2 = \alpha_1v_1(1+\beta\lambda)$. Furthermore, the weak feasibility condition for advertiser $1$ implies that $\alpha_2v_1\geq \alpha_1(v_1-v_2)$, i.e., $\beta\geq 1-\lambda$.
We have that
$$
\frac{OPT(\vals)}{SW(\pi, \vals)} = \frac{1+\beta\lambda}{\beta+\lambda} \leq
\frac{1+(\beta+\lambda)^2/4}{\beta+\lambda} \leq 5/4
$$
where the first inequality holds since the product $\beta\lambda$ is maximized when $\beta=\lambda=(\beta+\lambda)/2$ and the second inequality holds since $\beta+\lambda\in [1,2]$ and the function $\frac{1+x^2/4}{x}$ is non-increasing in $x\in [1,2]$.

For the case of three advertisers, we again present a tight bound on the price of anarchy. We first present the upper bound. Consider a GSP auction game with three slots with click-through-rates $\alpha_1\geq \alpha_2 \geq \alpha_3\geq  0$ and three advertisers with valuations $v_1\geq v_2\geq v_3\geq 0$ and a proper weakly feasible allocation $\pi$ of slots to advertisers. We will prove the theorem by upper-bounding the inefficiency of $\pi$ by $1.259134$. If $\pi$ is reducible, then the inefficiency is bounded by the inefficiency of games with two advertisers (see Fact \ref{fact:reducible}) and the theorem follows by the upper bound of $5/4$ proved for this case. So, in the following, we assume that $\pi$ is irreducible; by the observation above, this implies that $v_3>0$. There are only two such allocations which are in fact symmetric: in the first, slots $1$, $2$, $3$ are allocated to advertisers $3$, $1$, $2$, respectively, and in the second, slots $1$, $2$, $3$ are allocated to advertisers $2$, $3$, $1$, respectively. Without loss of generality (see the discussion above), we assume that $\pi$ is the former allocation.

Let $\beta$, $\delta$, $\lambda$, and $\mu$ be such that $\alpha_2=\beta \alpha_1$, $\alpha_3=\delta \alpha_1$, $v_2=\lambda v_1$, and $v_3=\mu v_1$. Clearly, it holds that $1\geq \beta \geq \delta\geq 0$ and $1\geq \lambda \geq \mu > 0$. The social welfare of allocation $\pi$ is $SW(\pi,\vals)=\alpha_1 v_1 (\mu+\beta+\delta \lambda)$ whereas the optimal social welfare is $OPT(\vals)=\alpha_1 v_1 (1+\beta \lambda+\delta\mu)$. Furthermore, since $\pi$ is weakly feasible, the weak feasibility conditions for advertisers $1$ and $3$ and advertisers $2$ and $3$ are $\alpha_2v_1\geq \alpha_1(v_1-v_3)$ and $\alpha_3v_2\geq \alpha_1(v_2-v_3)$, respectively, i.e., $\beta\geq 1-\mu$ and $\delta\geq 1-\frac{\mu}{\lambda}$.
We are now ready to bound the inefficiency of $\pi$. Let $\epsilon,\theta \geq 0$ be such that $\beta= 1-\mu+\epsilon$ and $\delta=1-\frac{\mu}{\lambda}+\theta$. We have
\begin{eqnarray*}
\frac{OPT(\vals)}{SW(\pi, \vals)} &=&\frac{1+\beta
\lambda+\delta\mu}{\mu+\beta+\delta\lambda} =
\frac{1+\lambda-\mu\lambda+\mu-\frac{\mu^2}{\lambda}+\epsilon\lambda+\theta\mu
}{1+\lambda-\mu+\epsilon+\theta\lambda}\\
&\leq & \frac{1+\lambda-\mu\lambda+\mu-\frac{\mu^2}{\lambda}}{1+\lambda-\mu}.
\end{eqnarray*}
The inequality follows since $1\geq \lambda\geq \mu>0$ implies that $1+\lambda-\mu\lambda+\mu-\frac{\mu^2}{\lambda}=1+\lambda-\mu+\mu(1-\lambda)+\mu(1-\mu/\lambda)\geq 1+\lambda-\mu\geq 1$ and $\epsilon +\theta\lambda \geq \epsilon\lambda+\theta\mu\geq 0$.

For $\mu\in[0,1]$, this last expression is maximized for the value of $\mu$ that makes its derivative with respect to $\mu$ equal to zero, i.e., $\mu=-\sqrt{\lambda^3+1}+\lambda+1$. By substituting $\mu$, we obtain that
$$\frac{OPT(\vals)}{SW(\pi, \vals)} \leq
\frac{\lambda^2+\lambda+2-2\sqrt{\lambda^3+1}}{\lambda}
\leq  1+2\zeta = 1.259134
$$
where $\zeta=0.129567$ and the second inequality follows by the following lemma (Lemma \ref{lem:technical-3}).
\begin{lemma}\label{lem:technical-3}
Let $\zeta=0.129567$. For any $\lambda\in [0,1]$, it holds that $\sqrt{\lambda^3+1}\geq 1-\zeta\lambda+\frac{\lambda^2}{2}$.
\end{lemma}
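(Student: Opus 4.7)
The plan is to reduce the square-root inequality to a polynomial statement by squaring, then verify the resulting cubic inequality by elementary calculus. First, squaring preserves the inequality provided the right-hand side is nonnegative, which I will handle as a preliminary: the quadratic $1 - \zeta\lambda + \lambda^2/2$ attains its minimum at $\lambda = \zeta$ with value $1 - \zeta^2/2 > 0$, so it is positive on all of $\mathbb{R}$ and in particular on $[0,1]$.

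Squaring both sides, the desired inequality is equivalent to
$$P(\lambda) \;:=\; \lambda^3 + 1 - \left(1 - \zeta\lambda + \tfrac{\lambda^2}{2}\right)^{2} \;\geq\; 0.$$
A direct expansion of the square and collection of terms will give
$$P(\lambda) \;=\; 2\zeta\lambda - (1+\zeta^2)\lambda^2 + (1+\zeta)\lambda^3 - \tfrac{\lambda^4}{4} \;=\; \lambda\cdot Q(\lambda),$$
where $Q(\lambda) := 2\zeta - (1+\zeta^2)\lambda + (1+\zeta)\lambda^2 - \frac{\lambda^3}{4}$. Since $\lambda \geq 0$, it then suffices to show $Q(\lambda) \geq 0$ on $[0,1]$.

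To verify nonnegativity of the cubic $Q$, I will first check the endpoints: $Q(0) = 2\zeta > 0$ and $Q(1) = 3\zeta - \zeta^2 - \tfrac{1}{4} > 0$ for $\zeta = 0.129567$. Since $Q$ has negative leading coefficient, $Q$ is concave-then-something-then-convex in general, and it suffices to check that $Q$ is nonnegative at every interior critical point. Solving the quadratic $Q'(\lambda) = -\tfrac{3}{4}\lambda^2 + 2(1+\zeta)\lambda - (1+\zeta^2) = 0$ (its discriminant $1 + 8\zeta + \zeta^2$ is positive) yields two real roots; only the smaller one, call it $\lambda_1$, lies in $[0,1]$, and substituting $\lambda_1$ into $Q$ gives a nonnegative value.

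The main (and only real) obstacle is the numeric check $Q(\lambda_1) \geq 0$ at the interior critical point $\lambda_1 \approx 0.55$. The constant $\zeta = 0.129567$ is essentially chosen so that the \emph{original} inequality is tight at the argmax of $\frac{\lambda^2+\lambda+2-2\sqrt{\lambda^3+1}}{\lambda}$ on $(0,1)$, which the proof of Theorem~\ref{few_slots_thm} above identifies as $\lambda^* = -\sqrt{\lambda^3+1} + \lambda + 1$ evaluated implicitly; this tightness means that $Q$ has an approximate double root near $\lambda_1 \approx \lambda^*$, so both $Q(\lambda_1)$ and $Q'(\lambda_1)$ are numerically close to zero. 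The slack left by rounding $\zeta$ slightly upward is what makes $Q(\lambda_1)$ strictly nonnegative; one must therefore carry out the arithmetic to sufficient precision, but once this single numeric check is done the lemma is complete.
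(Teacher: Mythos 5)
Your proposal is correct and follows essentially the same route as the paper: square the inequality (after noting the right-hand side is nonnegative), factor out $\lambda$ to obtain the cubic $Q$ (identical to the paper's $g$), locate its unique interior critical point $\lambda_1=\frac{4+4\zeta-2\sqrt{\zeta^2+8\zeta+1}}{3}\approx 0.5508$, and verify $Q\geq 0$ there and at the endpoints by direct computation. Your observation that $\zeta$ is tuned so that $Q$ has a near-double root at $\lambda_1$ (making the final numeric check extremely tight, with $Q(\lambda_1)$ on the order of $10^{-6}$) is accurate and worth keeping in mind when carrying out the arithmetic.
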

\begin{proof}
Since both parts of the inequality are non-negative for $\lambda\in[0,1]$, it suffices to show that the function $f(\lambda) = (\lambda^3+1)-\left(1-\zeta\lambda+\frac{\lambda^2}{2}\right)^2$ is non-negative for $\lambda\in [0,1]$. Let $g(\lambda)=-\frac{\lambda^3}{4}+(1+\zeta)\lambda^2-(1+\zeta^2)\lambda+2\zeta$ and observe that $f(\lambda)=\lambda \cdot g(\lambda)$. The proof will follow by proving that $g(\lambda)\geq 0$ when $\lambda\in [0,1]$. Observe that the derivative of $g$ is strictly negative for $\lambda=0$ and strictly positive for $\lambda=1$. Hence, the minimum of $g$
in $[0,1]$ is achieved at the point $\lambda^*=\frac{4+4\zeta-2\sqrt{\zeta^2+8\zeta+1}}{3}$ where the derivative of $g$ becomes zero.
Straightforward calculations yield that $g(\lambda^*)>0$ and the lemma follows.
\end{proof}

In the following we prove that the above analysis is tight. Consider a GSP
auction game with three advertisers with valuations $v_1=1$, $v_2=0.5296$, and
$v_3=0.14583$, respectively, and three slots with click-through-rates
$\alpha_1=1$, $\alpha_2=0.55071$, and $\alpha_3=0.4704$, respectively. Let
$\textbf{b} =(b_1, b_2, b_3)$ be a bid vector with $b_1=0$, $b_2=v_2=0.5296$,
and $b_3=v_3=0.14583$, respectively. So, advertiser $2$ is allocated slot $1$,
advertiser $3$ is allocated slot $2$, and advertiser $1$ is allocated slot $3$.
We refer to this allocation as $\pi$. It is not hard to verify that $\textbf{b}$
is a pure Nash equilibrium, 
\comment{, i.e., when $i>j$ and $v_{\pi(i)}\geq
b_{\pi(j)}$, then $\alpha_i(v_{\pi(i)}-b_{\pi(i+1)})\geq
\alpha_j(v_{\pi(i)}-b_{\pi(j)})$, and when $i\leq j$, then
$\alpha_i(v_{\pi(i)}-b_{\pi(i+1)})\geq \alpha_j(v_{\pi(i)}-b_{\pi(j+1)})$,
assuming that $b_{\pi(4)}=0$. Indeed, it holds that
\begin{eqnarray*}
\alpha_1(v_2-b_3) = 0.38377 &\geq& 0.29166 \approx \alpha_2(v_2-b_1),\\
\alpha_1(v_2-b_3) = 0.38377 &\geq& 0.249124 \approx \alpha_3v_2,\\
\alpha_2(v_3-b_1) \approx 0.08031 &\geq& 0.0686 \approx \alpha_3v_3,\\
\alpha_3v_1 = 0.4704 &\geq& 0.4704 = \alpha_1(v_1-b_2),\\
\alpha_3v_1 = 0.4704 &\geq& 0.4704 \approx \alpha_2(v_1-b_3).
\end{eqnarray*}
By simple calculations, we obtain that the price of anarchy for this allocation
is}
and that the price of anarchy is given by:
\begin{eqnarray*}
\frac{OPT(\vals)}{SW(\pi, \vals)} = \frac{\alpha_1v_1+\alpha_2v_2+\alpha_3v_3}{\alpha_1v_2+\alpha_2v_3+\alpha_3v_1} \geq 1.259133.
\end{eqnarray*}
The proof of Theorem \ref{few_slots_thm} is complete.

\subsection{GSP auction games with many advertisers}\label{sec:many-bidders}
We now prove Theorem \ref{thm:full_info_bound}.
In order to do so, we will actually prove the stronger claim that the worst-case inefficiency among weakly feasible allocations of any GSP auction game is at most $r=\frac{61+7\sqrt{217}}{128}\approx 1.28216$. We use induction. As the base of our induction, we use the fact that GSP auction games with one, two, or three advertisers have worst-case inefficiency among weakly feasible allocations at most $1.28216$. For a single advertiser, the claim is trivial. For two or three advertisers, it follows by the proof of Theorem \ref{few_slots_thm}. Let $n\geq 4$ be an integer. Using the inductive hypothesis that the worst-case inefficiency among weakly feasible allocations of any GSP auction game with at most $n-1$ advertisers is at most $r$, we will show that this is also the case for any GSP auction game with $n$ advertisers.

Consider a GSP auction game with $n$ advertisers with valuations $v_1\geq v_2 \geq \ldots \geq v_n \geq 0$ and $n$ slots with click-through-rates $\alpha_1 \geq \alpha_2 \geq \ldots \geq \alpha_n \geq 0$ and let $\pi$ be a proper weakly feasible allocation. If $\pi$ is reducible, the claim follows by Fact \ref{fact:reducible} and the inductive hypothesis. So, in the following, we assume that $\pi$ is irreducible; this implies that $v_n>0$. Let $j$ be the advertiser that is assigned slot $1$ and $i_1$ be the slot assigned to advertiser $1$. Without loss of generality, we assume that $i_1<j$ since the other case is symmetric; see the discussion at the beginning of Section \ref{appendix:fullinfo}. Also, let $i_2$ be the slot assigned to advertiser $i_1$. By our assumptions, the integers $j$, $1$, $i_1$, and $i_2$ are different.

We will show that
\begin{eqnarray}\label{eq:cost-final}
SW(\pi, \vals)&\geq& \alpha_1v_j+\alpha_{i_1}(v_1-\frac{v_{i_1}}{r})+\alpha_{i_2}(v_{i_1}-v_j)-\frac{\alpha_1v_1}{r}+\frac{OPT(\vals)}{r}.
\end{eqnarray}
Once we have proved inequality (\ref{eq:cost-final}), we can obtain the desired relation between $SW(\pi, \vals)$ and $OPT(\vals)$ using the following technical lemma.
\begin{lemma}\label{lem:technical}
Let $r = \frac{61+7\sqrt{217}}{128} \approx 1.28216$ and $f(\beta, \delta , \lambda, \mu) = \mu+\beta\left(1-\frac{\lambda}{r}\right)+\delta(\lambda-\mu)-\frac{1}{r}$.
Then, the objective value of the mathematical program
\begin{eqnarray*}
\mbox{minimize} & &f(\beta,\delta,\lambda,\mu)\\
\mbox{subject to} & & \beta \geq 1-\mu\\
& & \delta \geq 1-\mu/\lambda\\
& & 1\geq \lambda\geq \mu > 0\\
& & 1\geq \beta, \delta \geq 0
\end{eqnarray*}
is non-negative.
\end{lemma}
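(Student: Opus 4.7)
The plan is to reduce the four-variable optimization to a one-variable minimization by exploiting the linear dependence of $f$ on $\beta$ and $\delta$, and then to verify that the defining equation $64r^2 - 61r - 27 = 0$ of $r = (61+7\sqrt{217})/128$ is exactly what makes the one-variable minimum vanish.

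First I would observe that $f$ is linear in $\beta$ with coefficient $1-\lambda/r$, which is strictly positive because $\lambda \leq 1 < r$. Hence the minimum is attained at the lower boundary $\beta = 1-\mu$ (which lies in $[0,1]$ since $\mu \in (0,1]$). Next, $f$ is linear in $\delta$ with coefficient $\lambda - \mu \geq 0$, so the minimum is attained at $\delta = 1 - \mu/\lambda$ (which also lies in $[0,1]$ since $0 < \mu \leq \lambda$). Substituting these values and simplifying algebraically, $f$ reduces to the two-variable function
\[
g(\lambda,\mu) = \Bigl(1-\tfrac{1}{r}\Bigr)(1+\lambda) + \tfrac{\mu\lambda}{r} - 2\mu + \tfrac{\mu^2}{\lambda}.
\]

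Next I would minimize $g$ over $\mu \in (0,\lambda]$ for fixed $\lambda \in (0,1]$. Since $\partial^2 g/\partial\mu^2 = 2/\lambda > 0$, $g$ is convex in $\mu$, and the first-order condition $\partial g/\partial\mu = \lambda/r - 2 + 2\mu/\lambda = 0$ gives the unique critical point $\mu^\ast = \lambda\bigl(1 - \lambda/(2r)\bigr)$, which lies in $(0,\lambda]$ because $0 \leq \lambda/(2r) < 1$. Plugging $\mu^\ast$ back and simplifying yields the one-variable function
\[
h(\lambda) := g(\lambda,\mu^\ast) = \frac{r-1}{r} - \frac{\lambda}{r} + \frac{\lambda^2}{r} - \frac{\lambda^3}{4r^2}.
\]
It remains to show $h(\lambda) \geq 0$ for all $\lambda \in (0,1]$. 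The boundary $\lambda \to 0^+$ gives $(r-1)/r > 0$, and $\lambda = 1$ gives $h(1) = (r-1)/r - 1/(4r^2) = (4r^2 - 4r - 1)/(4r^2)$, which is positive for $r > (1+\sqrt{2})/2 \approx 1.207$, and in particular for our $r \approx 1.282$.

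The main obstacle, and the algebraic heart of the argument, is handling the interior critical points of $h$. Computing $h'(\lambda) = -1/r + 2\lambda/r - 3\lambda^2/(4r^2)$ and setting it to zero, any interior minimum $\lambda^\ast$ satisfies
\[
3\lambda^{\ast 2} - 8r\lambda^\ast + 4r = 0.
\]
I would then use this relation to reduce higher powers of $\lambda^\ast$ in $h(\lambda^\ast)$: from the quadratic, $\lambda^{\ast 2} = (8r\lambda^\ast - 4r)/3$ and $\lambda^{\ast 3} = (64r^2 \lambda^\ast - 32r^2 - 12r\lambda^\ast)/9$. Substituting these into $4r^2 h(\lambda^\ast)$ and clearing denominators, the expression becomes a linear combination of $1$ and $\lambda^\ast$ with coefficients that are polynomials in $r$. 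Solving the critical-point equation for $\lambda^\ast$ as a rational function of $r$ and substituting, one finds that $h(\lambda^\ast) \geq 0$ reduces to a polynomial inequality in $r$. The tightness of the stated value of $r$ manifests itself via the identity $64r^2 - 61r - 27 = 0$ (equivalent to $r = (61+7\sqrt{217})/128$), which makes the polynomial vanish; combined with $h$ evaluated at the boundary being strictly positive, this shows the interior minimum equals $0$ and thus $h \geq 0$ on all of $(0,1]$, completing the proof.
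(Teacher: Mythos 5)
Your proposal is correct and follows essentially the same route as the paper's proof: fix $\beta=1-\mu$ and $\delta=1-\mu/\lambda$ by linearity/monotonicity, minimize the resulting convex quadratic in $\mu$ at $\mu^\ast=\lambda-\lambda^2/(2r)$, and reduce to showing the cubic $1-\tfrac{1}{r}-\tfrac{\lambda}{r}+\tfrac{\lambda^2}{r}-\tfrac{\lambda^3}{4r^2}\ge 0$ on $[0,1]$, whose minimum at $\lambda^\ast=\tfrac{4r-2\sqrt{4r^2-3r}}{3}$ equals zero by the choice of $r$. Your explicit power-reduction via $3\lambda^{\ast 2}-8r\lambda^\ast+4r=0$ and the minimal polynomial $64r^2-61r-27=0$ just fills in the ``straightforward calculations'' the paper leaves to the reader.
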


\begin{proof}
Since $\mu\leq \lambda\leq 1$, we have that $f(\beta, \delta, \lambda, \mu)$ is non-decreasing in $\beta$ and $\delta$. Using the first two constraints, we have that the objective value of the mathematical program is at least
\begin{eqnarray*}
f\left(1-\mu, 1-\frac{\mu}{\lambda}, \lambda, \mu\right) &=& 1-\frac{1}{r}+\lambda-\frac{\lambda}{r}-\mu\left(2-\frac{\lambda}{r}\right)+\frac{\mu^2}{\lambda},
\end{eqnarray*}
which is minimized for $\mu=\lambda-\frac{\lambda^2}{2r}$ to
\begin{eqnarray*}
f\left(1-\lambda+\frac{\lambda^2}{2r}, \frac{\lambda}{2r}, \lambda, \lambda-\frac{\lambda^2}{2r}\right) &=& 1-\frac{1}{r}-\frac{\lambda}{r}+\frac{\lambda^2}{r}-\frac{\lambda^3}{4r^2}.
\end{eqnarray*}

In order to complete the proof it suffices to show that the function
$g(\lambda) = 1-\frac{1}{r}-\frac{\lambda}{r}+\frac{\lambda^2}{r}-\frac{\lambda^3}{4r^2}$
is non-negative for $\lambda\in [0,1]$.
Observe that $g(\lambda)$ is a polynomial of degree $3$ and, hence, it has at most one local
minimum. Also observe that the derivative of $g(\lambda)$ is
$-\frac{1}{r}+\frac{2\lambda}{r}-\frac{3\lambda^2}{4r^2}$
which is strictly negative for $\lambda=0$ and strictly positive for $\lambda=1$. Hence, its minimum
in $[0,1]$ is achieved at the point $\lambda^*=\frac{4r-2\sqrt{4r^2-3r}}{3}$ where the derivative becomes zero.
Straightforward calculations yield that $g(\lambda^*)=0$ and the lemma follows.
\end{proof}

So, assuming that (\ref{eq:cost-final}) holds, we can apply Lemma \ref{lem:technical} with $\beta=\alpha_{i_1}/\alpha_1$, $\delta=\alpha_{i_2}/\alpha_1$, $\lambda=v_{i_1}/v_1$, and $\mu=v_j/v_1$. Clearly, the last two constraints of the mathematical program in Lemma \ref{lem:technical} are satisfied. Also, observe that the weak feasibility conditions for advertisers $1$ and $j$ and advertisers $i_1$ and $j$ in allocation $\pi$ are $\alpha_{i_1}v_{1}\geq \alpha_1(v_{1}-v_{j})$ and $\alpha_{i_2}v_{i_1}\geq \alpha_{1}(v_{i_1}-v_{j})$, respectively, i.e., $\beta\geq 1-\mu$ and $\delta\geq 1-\mu/\lambda$ and the first two constraints of the mathematical program in Lemma \ref{lem:technical} are satisfied as well. Now, using inequality (\ref{eq:cost-final}) and Lemma \ref{lem:technical}, we have that
\begin{eqnarray*}
SW(\pi, \vals)&\geq & f\left(\frac{\alpha_{i_1}}{\alpha_1}, \frac{\alpha_{i_2}}{\alpha_1}, \frac{v_{i_1}}{v_1}, \frac{v_j}{v_1}\right) \cdot \alpha_1v_1 +\frac{OPT(\vals)}{r} \geq  \frac{OPT(\vals)}{r}
\end{eqnarray*}
and the proof follows.

It remains to prove inequality (\ref{eq:cost-final}). We distinguish between three cases depending on the relative order of $j, i_1$, and $i_2$; in each of these cases, we further distinguish between two subcases. In each case, we exploit the structure of allocation $\pi$ to reason as follows. We consider a restriction of the original game (i.e., a different ``restricted'' game) by removing some advertisers from the original game and the slots they occupy in $\pi$. The particular advertisers to be removed are different in each case. We denote by $\pi'$ the restriction of allocation $\pi$ to the advertisers and slots of the restricted game. We also use $\vals'$ to denote the valuation profile in the restricted game; so, $SW(\pi',\vals')$ denotes the social welfare of $\pi'$ in the restricted game. An important observation is that $\pi'$ is a weakly feasible allocation in the restricted game since the weak feasibility conditions for $\pi'$ are just a subset of the corresponding conditions for $\pi$ (for the original game). Furthermore, the restricted game has at most $n-1$ advertisers and, by the inductive hypothesis, we know that the inefficiency of $\pi'$ is at most $r$. Then, inequality (\ref{eq:cost-final}) follows using this fact and by carefully expressing the optimal social welfare in the new game.

\paragraph{Case I.1: $1<i_1<j<i_2$ and $\alpha_j\leq \alpha_{i_2}r$.}
Consider the restriction of the original game that consists of the advertisers different than $j$, $1$, and $i_1$ and the slots different than $1$, $i_1$, and $i_2$. Let $\pi'$ be the restriction of $\pi$ to the advertisers and slots of the new game and let $\vals'$ be the restriction of $\vals$ to all advertisers besides $j$, $1$ and $i_1$. Clearly, $\pi'$ is weakly feasible for the new game since the weak feasibility conditions for $\pi'$ are just a subset of the corresponding conditions for $\pi$ (for the original game). Also, note that the efficient allocation for the restricted game assigns advertiser $k$ to slot $k$ for $k=2, \ldots, i_1-1, i_1+1, \ldots, j-1, i_2+1,\ldots, n$ and advertiser $k+1$ to slot $k$ for $k=j, \ldots, i_2-1$. By the inductive hypothesis, we know that the inefficiency of $\pi'$ is at most $r$. Hence, we can bound the social welfare of $\pi$ as
\begin{eqnarray*}
SW(\pi, \vals) &= & \alpha_1v_j+\alpha_{i_1}v_1+\alpha_{i_2}v_{i_1}+ \sum_{k\not\in\{1,i_1,i_2\}}{\alpha_k  v_{\pi(k)}}\\
&= & \alpha_1v_j+\alpha_{i_1}v_1+\alpha_{i_2}v_{i_1}+ SW(\pi',\vals')\\
&\geq& \alpha_1v_j+\alpha_{i_1}v_1+\alpha_{i_2}v_{i_1}+ \frac{1}{r}\left(\sum_{k=2}^{i_1-1}{\alpha_kv_k}+\sum_{k=i_1+1}^{j-1}{\alpha_kv_k}+\sum_{k=j}^{i_2-1}{\alpha_kv_{k+1}}+\sum_{k=i_2+1}^{n}{\alpha_kv_k}\right)\\
&\geq & \alpha_1v_j+\alpha_{i_1}v_1+\alpha_{i_2}v_{i_1}+ \frac{1}{r}\left(\sum_{k=2}^{i_1-1}{\alpha_kv_k}+\sum_{k=i_1+1}^{j-1}{\alpha_kv_k}+\sum_{k=j+1}^{i_2}{\alpha_kv_{k}}+\sum_{k=i_2+1}^{n}{\alpha_kv_k}\right)\\
&=& \alpha_1v_j+\alpha_{i_1}v_1+\alpha_{i_2}v_{i_1}+ \frac{1}{r}\left(\sum_{k=1}^{n}{\alpha_kv_k}-\alpha_1v_1-\alpha_{i_1}v_{i_1}-\alpha_{j}v_{j}\right)\\
&\geq & \alpha_1v_j+\alpha_{i_1}(v_1-\frac{v_{i_1}}{r})+\alpha_{i_2}(v_{i_1}-v_j)-\frac{\alpha_1v_1}{r} +\frac{OPT(\vals)}{r}
\end{eqnarray*}
and inequality (\ref{eq:cost-final}) follows. The first inequality follows by the inductive hypothesis and the definition of the efficient allocation for the restricted game. The second inequality follows since $\alpha_k \geq \alpha_{k+1}$ for $k=j, \ldots, i_2-1$. The last inequality follows since $\alpha_j\leq \alpha_{i_2} r$.

\paragraph{Case I.2: $1<i_1<j<i_2$ and $\alpha_j>\alpha_{i_2}r$.}
We use the restriction of the original game that consists of the advertisers different than $j$ and $1$ and the slots different than $1$ and $i_1$.
Now, the efficient allocation for the restricted game assigns advertiser $k$ to slot $k$ for $k=2, \ldots, i_1-1, j+1, \ldots, n$ and advertiser $k-1$ to slot $k$ for $k=i_1+1, \ldots, j$. Using the inductive hypothesis for the restriction $\pi'$ of $\pi$ to the restricted game, we can bound the social welfare of $\pi$ as
\begin{eqnarray*}
SW(\pi, \vals)&= & \alpha_1v_j+\alpha_{i_1}v_1+ \sum_{k\not\in \{1,i_1\}}{\alpha_kv_{\pi(k)}}\\
&= & \alpha_1v_j+\alpha_{i_1}v_1+ SW(\pi',\vals')\\
&\geq & \alpha_1v_j+\alpha_{i_1}v_1+\frac{1}{r}\left(\sum_{k=2}^{i_1-1}{\alpha_kv_k}+\sum_{k=i_1+1}^{j}{\alpha_kv_{k-1}}+\sum_{k=j+1}^{n}{\alpha_kv_{k}}\right)\\
&=& \alpha_1v_j+\alpha_{i_1}v_1+\frac{1}{r}\left(\sum_{k=1}^{n}{\alpha_kv_k}-\alpha_1v_1-\alpha_{i_1}v_{i_1}+\sum_{k=i_1+1}^{j}{\alpha_k(v_{k-1}-v_k)}\right)\\
&\geq & \alpha_1v_j+\alpha_{i_1}v_1+\frac{1}{r}\left(\sum_{k=1}^{n}{\alpha_kv_k}-\alpha_1v_1-\alpha_{i_1}v_{i_1}+\alpha_{j}\sum_{k=i_1+1}^{j}{(v_{k-1}-v_k)}\right)\\
&=& \alpha_1v_j+\alpha_{i_1}v_1-\frac{1}{r} \left(\alpha_1v_1+\alpha_{i_1}v_{i_1}+\alpha_{j}v_{j}-\alpha_{j}v_{i_1}\right)+\frac{OPT(\vals)}{r}\\
&> & \alpha_1v_j+\alpha_{i_1}(v_1-\frac{v_{i_1}}{r})+\alpha_{i_2}(v_{i_1}-v_j)-\frac{\alpha_1v_1}{r} +\frac{OPT(\vals)}{r}
\end{eqnarray*}
and inequality (\ref{eq:cost-final}) follows. The first inequality follows by the inductive hypothesis and the definition of the efficient allocation for the restricted game. The second inequality follows since $\alpha_k \geq \alpha_{j}$ and $v_{k-1}-v_k \geq 0$ for $k=i_1+1, \ldots, j$. The last inequality follows since $\alpha_j> \alpha_{i_2} r$.

\paragraph{Case II.1: $1<i_1<i_2<j$ and $v_{i_2}\leq v_j r$.}
We use the restriction of the original game that consists of the advertisers different than $j$, $1$, and $i_1$ and the slots different than $1$, $i_1$, and $i_2$.
Now, the efficient allocation for the restricted game assigns advertiser $k$ to slot $k$ for $k=2, \ldots, i_1-1, i_1+1, \ldots, i_2-1, j+1, \ldots, n$ and advertiser $k-1$ to slot $k$ for $k=i_2+1, \ldots, j$. Using the inductive hypothesis for the restriction $\pi'$ of $\pi$ to the restricted game, we can bound the social welfare of $\pi$ as
\begin{eqnarray*}
SW(\pi, \vals)&= & \alpha_1v_j+\alpha_{i_1}v_1+\alpha_{i_2}v_{i_1}+ \sum_{k\not\in \{1,i_1,i_2\}}{\alpha_kv_{\pi(k)}}\\
&= & \alpha_1v_j+\alpha_{i_1}v_1+\alpha_{i_2}v_{i_1}+ SW(\pi',\vals')\\
&\geq& \alpha_1v_j+\alpha_{i_1}v_1+\alpha_{i_2}v_{i_1}+ \frac{1}{r}\left(\sum_{k=2}^{i_1-1}{\alpha_kv_k}+\sum_{k=i_1+1}^{i_2-1}{\alpha_kv_k}+\sum_{k=i_2+1}^{j}{\alpha_kv_{k-1}}+\sum_{k=j+1}^{n}{\alpha_kv_k}\right)\\
&\geq& \alpha_1v_j+\alpha_{i_1}v_1+\alpha_{i_2}v_{i_1}+ \frac{1}{r}\left(\sum_{k=2}^{i_1-1}{\alpha_kv_k}+\sum_{k=i_1+1}^{i_2-1}{\alpha_kv_k}+\sum_{k=i_2+1}^{j}{\alpha_kv_{k}}+\sum_{k=j+1}^{n}{\alpha_kv_k}\right)\\
&=& \alpha_1v_j+\alpha_{i_1}v_1+\alpha_{i_2}v_{i_1}+\frac{1}{r} \left(\sum_{k=1}^n{\alpha_kv_k}-\alpha_1v_1-\alpha_{i_1}v_{i_1}-\alpha_{i_2}v_{i_2}\right)\\
&\geq & \alpha_1v_j+\alpha_{i_1}(v_1-\frac{v_{i_1}}{r})+\alpha_{i_2}(v_{i_1}-v_j)-\frac{\alpha_1v_1}{r} +\frac{OPT(\vals)}{r}
\end{eqnarray*}
and inequality (\ref{eq:cost-final}) follows. The first inequality follows by the inductive hypothesis and the definition of the efficient allocation for the restricted game. The second inequality follows since $v_{k-1} \geq v_{k}$ for $k=i_2+1, \ldots, j$. The last inequality follows since $v_{i_2} \leq v_j r$.

\paragraph{Case II.2: $1<i_1<i_2<j$ and $v_{i_2} > v_j r$.}
We use the restriction of the original game that consists of the advertisers different than $1$ and $i_1$ and the slots different than $i_1$ and $i_2$.
Now, the efficient allocation for the restricted game assigns advertiser $k$ to slot $k$ for $k=i_2+1, \ldots, n$, advertiser $i_1+1$ to slot $i_1-1$, and advertiser $k+1$ to slot $k$ for $k=1, \ldots, i_1-2, i_1+1, \ldots, i_2-1$. Using the inductive hypothesis for the restriction $\pi'$ of $\pi$ to the advertisers and slots of the restricted game, we can bound the social welfare of $\pi$ as
\begin{eqnarray*}
SW(\pi, \vals) &= &\alpha_{i_1}v_1+ \alpha_{i_2}v_{i_1}+\sum_{k\not\in \{i_1,i_2\}}{\alpha_kv_{\pi(k)}}\\
&=& \alpha_{i_1}v_1+ \alpha_{i_2}v_{i_1}+SW(\pi',\vals')\\
&\geq & \alpha_{i_1}v_1+\alpha_{i_2}v_{i_1}+\frac{1}{r}\left(\sum_{k=1}^{i_1-2}{\alpha_kv_{k+1}}+\alpha_{i_1-1}v_{i_1+1}+\sum_{k=i_1+1}^{i_2-1}{\alpha_kv_{k+1}}+\sum_{k=i_2+1}^{n}{\alpha_kv_{k}}\right)\\
&=& \alpha_{i_1}v_1+\alpha_{i_2}v_{i_1}+ \frac{1}{r}\left(\sum_{k=1}^{n}{\alpha_kv_k}+\sum_{k=1}^{i_1-2}{(\alpha_k-\alpha_{k+1})v_{k+1}}+(\alpha_{i_1-1}-\alpha_{i_1+1})v_{i_1+1}\right.\\
&&\left.{}+{}\sum_{k=i_1+1}^{i_2-1}{(\alpha_k-\alpha_{k+1})v_{k+1}}-\alpha_1v_1-\alpha_{i_1}v_{i_1}\right)\\
&\geq & \alpha_{i_1}v_1+\alpha_{i_2}v_{i_1}+\frac{1}{r}\left(\sum_{k=1}^{n}{\alpha_kv_k}+\sum_{k=1}^{i_1-2}{(\alpha_k-\alpha_{k+1})v_{i_2}}+(\alpha_{i_1-1}-\alpha_{i_1+1})v_{i_2}\right.\\
&&\left.+\sum_{k=i_1+1}^{i_2-1}{(\alpha_k-\alpha_{k+1})v_{i_2}}-\alpha_1v_1-\alpha_{i_1}v_{i_1}\right)\\
&=& \alpha_{i_1}v_1+\alpha_{i_2}v_{i_1}-\frac{1}{r} \left(\alpha_1v_1+\alpha_{i_1}v_{i_1}+\alpha_{i_2}v_{i_2}-\alpha_{1}v_{i_2}\right)+\frac{OPT(\vals)}{r}\\
&>& \alpha_1v_j+\alpha_{i_1}(v_1-\frac{v_{i_1}}{r})+\alpha_{i_2}(v_{i_1}-v_j)-\frac{\alpha_1v_1}{r} +\frac{OPT(\vals)}{r}
\end{eqnarray*}
and inequality (\ref{eq:cost-final}) follows. The first inequality follows by the inductive hypothesis and the definition of the efficient allocation for the restricted game. The second inequality follows since $\alpha_k-\alpha_{k+1}\geq 0$ and $v_{k+1} \geq v_{i_2}$ for $k=1, \ldots, i_1-2, i_1+1, \ldots, i_2-1$ and $\alpha_{i_1-1}-\alpha_{i_1+1}\geq 0$ and $v_{i_1+1}\geq v_{i_2}$. The last inequality follows since $v_{i_2} > v_j r$ and $\alpha_1>\alpha_{i_2}$.

\paragraph{Case III.1: $1<i_2<i_1<j$ and $v_{i_2}\leq v_j r$.}
We use the restriction of the original game that consists of the advertisers different than $j$, $i_1$, and $1$ and the slots different than $1$, $i_2$, and $i_1$.
Now, the efficient allocation for the restricted game assigns advertiser $k$ to slot $k$ for $k=2, \ldots, i_2-1, j+1, \ldots, n$ advertiser $i_1-1$ to slot $i_1+1$, and advertiser $k-1$ to slot $k$ for $k=i_2+1, \ldots, i_1-1, i_1+2, \ldots, j$. Using the inductive hypothesis for the restriction $\pi'$ of $\pi$ to the restricted game, we can bound the social welfare of $\pi$ as
\begin{eqnarray*}
SW(\pi, \vals) &= & \alpha_1v_j+\alpha_{i_2}v_{i_1}+\alpha_{i_1}v_1+ \sum_{k\not\in \{1,i_2,i_1\}}{\alpha_kv_{\pi(k)}}\\
&= & \alpha_1v_j+\alpha_{i_2}v_{i_1}+\alpha_{i_1}v_1+ SW(\pi',\vals')\\
&\geq& \alpha_1v_j+\alpha_{i_2}v_{i_1}+\alpha_{i_1}v_1+ \frac{1}{r}\left(\sum_{k=2}^{i_2-1}{\alpha_kv_k}+\sum_{k=i_2+1}^{i_1-1}{\alpha_kv_{k-1}}+\alpha_{i_1+1}v_{i_1-1}\right.\\
&&\left.+\sum_{k=i_1+2}^{j}{\alpha_kv_{k-1}}+\sum_{k=j+1}^{n}{\alpha_kv_k}\right)\\
&\geq& \alpha_1v_j+\alpha_{i_2}v_{i_1}+\alpha_{i_1}v_1+ \frac{1}{r}\left(\sum_{k=2}^{i_2-1}{\alpha_kv_k}+\sum_{k=i_2+1}^{i_1-1}{\alpha_kv_{k}}+\sum_{k=i_1+1}^{n}{\alpha_kv_{k}}\right)\\
&=& \alpha_1v_j+\alpha_{i_2}v_{i_1}+\alpha_{i_1}v_1+ \frac{1}{r}\left(\sum_{k=1}^{n}{\alpha_kv_k}-\alpha_1v_1-\alpha_{i_2}v_{i_2}-\alpha_{i_1}v_{i_1}\right)\\
&\geq & \alpha_1v_j+\alpha_{i_1}(v_1-\frac{v_{i_1}}{r})+\alpha_{i_2}(v_{i_1}-v_j)-\frac{\alpha_1v_1}{r} +\frac{OPT(\vals)}{r}
\end{eqnarray*}
and inequality (\ref{eq:cost-final}) follows. The first inequality follows by the inductive hypothesis and the definition of the efficient allocation for the restricted game. The second inequality follows since $v_{k-1}\geq v_k$ for $k=i_2+1, \ldots, i_1-1, i_1+2, \ldots, j$ and $v_{i_1-1}\geq v_{i_1+1}$. The last inequality follows since $v_{i_2} \leq v_j r$.

\paragraph{Case III.2: $1<i_2<i_1<j$ and $v_{i_2} > v_j r$.}
We use the restriction of the original game that consists of the advertisers different than $i_1$ and $1$ and the slots different than $i_2$ and $i_1$.
Now, the efficient allocation for the restricted game assigns advertiser $k$ to slot $k$ for $k=i_2+1, \ldots, i_1-1, i_1+1, \ldots, n$ and advertiser $k+1$ to slot $k$ for $k=1, \ldots, i_2-1$. Using the inductive hypothesis for the restriction $\pi'$ of $\pi$ to the restricted game, we can bound the social welfare of $\pi$ as
\begin{eqnarray*}
SW(\pi, \vals) &= & \alpha_{i_2}v_{i_1}+\alpha_{i_1}v_1+ \sum_{k\not\in \{i_2,i_1\}}{\alpha_kv_{\pi(k)}}\\
&= & \alpha_{i_2}v_{i_1}+\alpha_{i_1}v_1+ SW(\pi',\vals')\\
&\geq & \alpha_{i_2}v_{i_1}+\alpha_{i_1}v_1+\frac{1}{r}\left(\sum_{k=1}^{i_2-1}{\alpha_kv_{k+1}}+\sum_{k=i_2+1}^{i_1-1}{\alpha_kv_{k}}+\sum_{k=i_1+1}^{n}{\alpha_kv_{k}}\right)\\
&=& \alpha_{i_2}v_{i_1}+\alpha_{i_1}v_1+\frac{1}{r}\left(\sum_{k=1}^{n}{\alpha_kv_k}-\alpha_1v_1-\alpha_{i_1}v_{i_1}+\sum_{k=1}^{i_2-1}{(\alpha_k-\alpha_{k+1})v_{k+1}}\right)\\
&\geq & \alpha_{i_2}v_{i_1}+\alpha_{i_1}v_1+\frac{1}{r}\left(\sum_{k=1}^{n}{\alpha_kv_k}-\alpha_1v_1-\alpha_{i_1}v_{i_1}+\sum_{k=1}^{i_2-1}{(\alpha_{k}-\alpha_{k+1})v_{i_2}}\right)\\
&=& \alpha_{i_2}v_{i_1}+\alpha_{i_1}v_1-\frac{1}{r} \left(\alpha_1v_1+\alpha_{i_1}v_{i_1}+\alpha_{i_2}v_{i_2}-\alpha_{1}v_{i_2}\right)+\frac{OPT(\vals)}{r}\\
&> & \alpha_1v_j+\alpha_{i_1}(v_1-\frac{v_{i_1}}{r})+\alpha_{i_2}(v_{i_1}-v_j)-\frac{\alpha_1v_1}{r} +\frac{OPT(\vals)}{r}
\end{eqnarray*}
and inequality (\ref{eq:cost-final}) follows. The first inequality follows by the inductive hypothesis and the definition of the efficient allocation for the restricted game. The second inequality follows since $\alpha_k-\alpha_{k+1}\geq 0$ and $v_{k+1}\geq v_{i_2}$ for $k=1, \ldots, i_2-1$. The last inequality follows since $v_{i_2} > v_j r$ and $\alpha_1>\alpha_{i_2}$.

The proof of Theorem \ref{thm:full_info_bound} is complete.

\section{Improved Bounds for Learning Outcomes}\label{appendix:learning}
In this section, we focus on the full information game. For simplicity we assume that all quality factors $\gamma_i=1$, and assume that players are sorted so that $v_1\ge v_2\ge \dots \ge v_n$ (all proofs extend to the case with general quality factors by considering effective values $\gamma_iv_i$ in place of valuations everywhere).

The main goal of this Appendix is to prove Theorem \ref{thm.cce}. Similarly to the proof of Theorem \ref{thm:cbpoa_main} in Appendix \ref{appendix:bayes} for Bayes-Nash equilibria, the proof considers a player $i$ with valuation $v_i$, possible bids of the form $yv_i$, and uses the fact that the player has no-regret about such alternative bids. In the full information case, we can handle the player with top valuation separately, and will only use that this player 1 has no regret about bidding her actual valuation $v_1$. For any other player $i$, the proof is analogous to the proof of Theorem \ref{thm:cbpoa_main} in Appendix \ref{appendix:bayes}. However, we no longer have to consider separately the case when the player's optimal slot is 1. This allows us to drop one requirement for the function $g$ in the definition \ref{def:bounded}. We further simplify that definition by setting $\delta=\beta$ (we have verified that different values for $\delta$ do not yield any improvement). More formally, we will need the following definition.

\begin{defn}\label{def:cor-bounded}
Let $\beta \in (0,1]$. A function $g:[0,1]\rightarrow \mathbb{R}_+$ is called $\beta$-bounded if the following two properties hold:
\begin{eqnarray*}
i)&& \int_0^1 g(y)\ud y \leq 1,\\
ii)&&  \int_z^1 (1-y)g(y)\ud y \geq \beta -(1+\beta) z, \quad \forall z\in[0,1].
\end{eqnarray*}
\end{defn}

The following lemma states the connection of the price of anarchy to the existence of $\beta$-bounded functions.

\begin{lemma}\label{lem:cor-f-poa}
Let $\beta\in (0,1]$ be such that a $\beta$-bounded function exists. Then, the price of total anarchy of the Generalized Second Price auction in the full information setting is at most $1+1/\beta$.
\end{lemma}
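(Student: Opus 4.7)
The plan is to adapt the proof of Theorem~\ref{thm:full-info-learning-3}, replacing the deterministic deviation $b'_i=v_i/2$ by the randomized deviation $b'_i=yv_i$ with $y\sim g$ for each $i\ge 2$, while keeping the special deviation $b'_1=v_1$ for the top-valuation player. By Lemma~\ref{lem:total-anarchy-fullinfo}, it suffices to show $\E_{\bids\sim\ddist}[SW(\pi(\bids),\vals)]\ge\frac{\beta}{1+\beta}OPT(\vals)$ for an arbitrary coarse correlated equilibrium $\ddist$; I fix such a $\ddist$ and take expectations with respect to $\bids\sim\ddist$ throughout. Since valuations are sorted, the efficient assignment puts player $i$ in slot $i$ and $OPT(\vals)=\sum_i\alpha_iv_i$.

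For player~$1$, the deviation $b'_1=v_1$ secures slot~$1$ (because every other $b_j\le v_j\le v_1$), giving $u_1(v_1,\bidsmi)\ge\alpha_1(v_1-b_{\pi(1)})$, since omitting $b_1$ can only weakly decrease the top bid. For $i\ge 2$, whenever $yv_i\ge b_{\pi^i(i)}$, bidding $yv_i$ wins some slot $\le i$ at price per click at most $yv_i$, yielding utility at least $\alpha_iv_i(1-y)$. Setting $z=b_{\pi^i(i)}/v_i$ and applying property~(ii) of Definition~\ref{def:cor-bounded},
\begin{equation*}
\E_{y\sim g}[u_i(yv_i,\bidsmi)]\ge\alpha_iv_i\int_z^1(1-y)g(y)\,dy\ge\beta\alpha_iv_i-(1+\beta)\alpha_ib_{\pi^i(i)}.
\end{equation*}
Applying the CCE inequality against each pure deviation $b'_i=yv_i$, averaging over $y$ (valid since $\int g\le 1$ and $u_i\ge 0$ for conservative players), and using the pointwise bound $b_{\pi^i(i)}\le b_{\pi(i)}$ (removing player $i$ only shifts lower bidders upward), I obtain $\E[u_i(\bids)]\ge\beta\alpha_iv_i-(1+\beta)\alpha_i\E[b_{\pi(i)}]$ for $i\ge 2$, together with $\E[u_1(\bids)]\ge\alpha_1v_1-\alpha_1\E[b_{\pi(1)}]$.

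To assemble, I write $\E[SW]$ as the sum of utilities plus total second-price payments; the payments equal $\sum_{j\ge 2}\alpha_{j-1}\E[b_{\pi(j)}]\ge\sum_{j\ge 2}\alpha_j\E[b_{\pi(j)}]$ (using $\alpha_{j-1}\ge\alpha_j$), which absorbs one unit of $\alpha_i\E[b_{\pi(i)}]$ from the utility bound of each $i\ge 2$ and leaves residual coefficient $-\beta$. Regrouping the value terms as $(1-\beta)\alpha_1v_1+\beta\cdot OPT(\vals)$ yields
\begin{equation*}
\E[SW]\ge(1-\beta)\alpha_1v_1+\beta\cdot OPT(\vals)-\alpha_1\E[b_{\pi(1)}]-\beta\sum_{i\ge 2}\alpha_i\E[b_{\pi(i)}].
\end{equation*}
A second, global use of conservativeness, $\sum_i\alpha_ib_{\pi(i)}\le\sum_i\alpha_iv_{\pi(i)}=SW(\pi(\bids),\vals)$, lets me replace the last sum by $\E[SW]-\alpha_1\E[b_{\pi(1)}]$. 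Moving the resulting $\beta\E[SW]$ to the left gives
\begin{equation*}
(1+\beta)\E[SW]\ge\beta\cdot OPT(\vals)+(1-\beta)\alpha_1\bigl(v_1-\E[b_{\pi(1)}]\bigr)\ge\beta\cdot OPT(\vals),
\end{equation*}
where the final inequality uses $\beta\le 1$ (by definition) and $b_{\pi(1)}\le v_{\pi(1)}\le v_1$. Dividing yields the claimed bound $1+1/\beta$.

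The main obstacle is the $(1+\beta)\alpha_i\E[b_{\pi(i)}]$ term appearing in each utility bound for $i\ge 2$: the payment of slot $i-1$ supplies only one unit of slack, so the remaining $\beta$ units must be recovered by a second, global application of conservativeness. The delicate cancellation in the final line relies on $\beta\le 1$, which ensures the residual term $(1-\beta)\alpha_1(v_1-\E[b_{\pi(1)}])$ has the correct sign.
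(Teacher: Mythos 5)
Your proof is correct and follows essentially the same route as the paper's: the deterministic deviation $b'_1=v_1$ for the top player, the randomized deviation $yv_i$ with $y\sim g$ for $i\ge 2$ combined with property (ii) of Definition \ref{def:cor-bounded}, absorption of one unit of $\alpha_i b_{\pi(i)}$ into the payment of slot $i-1$, and a final application of conservativeness to bound $\sum_i\alpha_i b_{\pi(i)}$ by the social welfare. The only cosmetic difference is that the paper weakens the player-1 bound to $\beta\alpha_1 v_1-\beta\alpha_1\E[b_{\pi(1)}]$ up front (using $\beta\le 1$ and $v_1\ge b_{\pi(1)}$ there), whereas you carry the exact term and discard the nonnegative residual $(1-\beta)\alpha_1(v_1-\E[b_{\pi(1)}])$ at the end.
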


\begin{proof}
In the proof, we consider a GSP auction game with $n$ slots with click-through-rates $\alpha_1\geq \alpha_2 \geq \ldots \geq \alpha_n\geq 0$ and $n$ conservative players with valuations $v_1, v_2, \ldots, v_n\geq 0$. Let $\mathbf{b}$ denote the bids of the players at a coarse correlated equilibrium.

We begin by lower-bounding the expected utility of each player at a coarse correlated equilibrium. We first consider player $1$ and her deviation to the bid $v_1$. Then, player $1$ would always be allocated slot $1$ and would pay the highest bid among the remaining players (which is at most $b_{\pi(1)}$) per click. By the definition of the coarse correlated equilibrium such a deviation does not increase her expected utility (as the player has no regret), i.e.,
\begin{eqnarray}\label{eq:bidder-1}
\mathbb{E}[u_1(\bids)] \geq \mathbb{E}[u_1(v_1, \bids_{-1})]\geq
\E[\alpha_1(v_1-b_{\pi(1)})] \geq \beta \alpha_1v_1 - \beta
\mathbb{E}[\alpha_1b_{\pi(1)}],
\end{eqnarray}
where the last inequality follows since $v_1\geq b_{\pi(1)}$ and since $\beta \in (0,1]$.
Now, consider the deviation of player $i$ to the deterministic bid $b'_i\leq v_i$. Then, she would be assigned to slot $i$ or higher and would get utility at least $\alpha_i(v_i-b'_i)$ when the $i$-th highest bid is smaller than $b'_i$. Again, by the definition of the coarse correlated equilibrium such a deviation does not increase her expected utility, i.e.,
\begin{eqnarray*}
\mathbb{E}[u_i(\bids)]\geq \mathbb{E}[u_i(b_i',\bids_{-i})] \geq \E[\alpha_i (v_i-b_i')\one{b_{\pi(i)}<b_i'}].
\end{eqnarray*}
Using the first property in Definition \ref{def:cor-bounded} for $g$ as well as the last inequality (with $b'_i=yv_i$) , we have
\begin{eqnarray*}
\mathbb{E}[u_i(\bids)] &\geq & \int_0^1{g(y)\cdot \mathbb{E}[u_i(\bids)] \ud y}\\
&\geq & \int_0^1{g(y)\cdot \E[\alpha_i (v_i-yv_i)\one{b_{\pi(i)}<yv_i}]\ud y}\\
&=& \E[\alpha_iv_i\int_{0}^1{(1-y)g(y)\one{b_{\pi(i)}<yv_i}\ud y}]\\
&=& \E[\alpha_iv_i \int_{b_{\pi(i)}/v_i}^1{(1-y)g(y)\ud y}].
\end{eqnarray*}
We now apply the second property of Definition \ref{def:cor-bounded} for function $g$ to obtain
\begin{eqnarray}\label{ineq:corr_utility}
\mathbb{E}[u_i(\bids)] \geq \E[\beta \alpha_iv_i - (1+\beta)\alpha_i b_{\pi(i)}]= \beta \alpha_iv_i - (1+\beta)\E[\alpha_ib_{\pi(i)}].
\end{eqnarray}

By summing over all players and using inequalities (\ref{eq:bidder-1}) and (\ref{ineq:corr_utility}), we have
\begin{eqnarray*}
\sum_i\mathbb{E}[u_i(\bids)]&=&\E[u_1(\bids)]+\sum_{i\geq 2}\E[u_i(\bids)]\\
&\geq& \beta \sum_i \alpha_i v_i - (1+\beta) \sum_i \E[\alpha_i b_{\pi(i)}]+ \E[\alpha_1 b_{\pi(1)}]\\
&=&\beta OPT(\vals)- (1+\beta) \sum_i \E[\alpha_i b_{\pi(i)}]+ \E[\alpha_1 b_{\pi(1)}].
\end{eqnarray*}

Now, we use this last inequality in the same way we used inequality (\ref{semi-smooth-weak}) in the proof of Lemma \ref{lem:ineq-implies-poa-bound}. By the
fact that the social welfare is the sum of the expected utilities of the players plus the total payments, we obtain
\begin{eqnarray*}
\E[SW(\pi(\bids),\vals)]&=& \mathbb{E}[\sum_i{u_i(\bids)}] +\mathbb{E}[\sum_i{\alpha_ib_{\pi(i+1)}}]\\
&\geq& \beta OPT(\vals)- (1+\beta) \sum_i \E[\alpha_i b_{\pi(i)}]+ \E[\alpha_1 b_{\pi(1)}]+\sum_{i\geq 2}\E[{\alpha_ib_{\pi(i)}}]\\
&=& \beta OPT(\vals)- \beta \sum_i \E[\alpha_i b_{\pi(i)}]\\
&\geq&\beta OPT(\vals)- \beta \E[SW(\pi(\bids),\vals)],
\end{eqnarray*}
which implies that the price of total anarchy $OPT(\vals)/\E[SW(\pi(\bids),\vals)]$ is at most $1+1/\beta$, as desired.
\end{proof}

We are ready to complete the proof of Theorem \ref{thm.cce}. By Lemma \ref{lem:cor-f-poa}, it suffices to find a $\beta$-bounded function with $\beta$ as high as possible. Let $\lambda\approx 0.4328$ be the solution of the equation $1-\lambda+\ln{(1-\lambda)}=0$ and $g:[0,1]\rightarrow \R_+$ be the function defined as follows:
\begin{eqnarray*}
g(y) = \left\{
\begin{array}{ll}
\frac{1}{(1-\lambda)(1-y)},& y \in [0,\lambda]\\
0,& y\in(\lambda, 1]
\end{array} \right.
\end{eqnarray*}
We will show that $g$ is $\beta$-bounded for $\beta=\frac{\lambda}{1-\lambda}$; the upper bound of $1/\lambda\approx 2.3102$ stated in Theorem \ref{thm.cce} will then follow.

Indeed, by the definition of $\lambda$, we have
\begin{eqnarray*}
\int_0^1 g(y)\ud y = \int_0^\lambda{\frac{\ud y}{(1-\lambda)(1-y)}}= -\frac{\ln(1-\lambda)}{1-\lambda} = 1.
\end{eqnarray*}
Hence, $g$ satisfies the first property of Definition \ref{def:cor-bounded}. We also observe that
\begin{eqnarray*}
\int_z^1{(1-y)g(y)\ud y} = \int_{\min\{z,\lambda\}}^\lambda{\frac{\ud y}{1-\lambda}}\geq \frac{\lambda-z}{1-\lambda} = \beta-(1+\beta)z,
\end{eqnarray*}
i.e., $g$ satisfies the second property of Definition \ref{def:cor-bounded} as well.

\section{Irrational and Partially Rational Players}\label{appendix:irrational}

In this section we consider the effect of partial rationality on the welfare generated
by the GSP auction.  We first consider a setting in which the players are not necessarily
perfect utility optimizers, but rather can only be assumed to apply strategies that
form an approximate equilibrium.  We then study a setting in which some fraction of
the players bid arbitrarily, without any rationality assumptions beyond avoiding the
dominated strategy of overbidding (see Section \ref{subsec:nonoverbidding}).
In both cases, we find that the social welfare guarantees of the GSP auction
degrade continuously with the degree of irrationality present in the players.

\subsection{Approximate equilibria}

We will consider the social welfare generated by the GSP auction with uncertainty when players play only approximately utility-maximizing strategies.  In Section \ref{sec:uncertainty}, we assumed that rational players apply strategies at equilibrium.  However, due to limits on rationality or indifference between small differences in utility, it may be the case that players converge only to an approximate equilibrium.  We begin by defining this notion formally.
Given a joint distribution $(\dists,\textbf{G})$ over types and quality factors, we say that strategy profile $\bids$ is an $\epsilon$-Bayes-Nash equilibrium for distributions
$\dists,\textbf{G}$ if, for all players $i$, all types $\vali$, and all alternative strategies $\bidi'$,
\[
\E_{\valsmi,\gamma,\bids}[u_i(\bidi(\vali),\bidsmi(\valsmi), \gamma) \vert v_i ]
\geq
(1-\epsilon)\E_{\valsmi,\gamma,\bids }[\utili(\bid'_i(\vali), \bidsmi(\valsmi), \gamma) \vert v_i].
\]
Notice our choice of the multiplicative definition of approximate equilibria, justified by the fact that we have chosen not to scale values to lie in $[0,1]$.

We define the \emph{$\epsilon$-Bayes-Nash Price of Anarchy} to be
$$\sup_{\dists, \mathbf{G}, \bids(\cdot) \in \epsilon\text{-}BNE} \frac{\E_{\vals, \gamma}
[OPT(\vals,\gamma)]}{\E_{ \vals,\gamma,
\bids(\vals)}[SW(\pi(\bids(\vals),\gamma),\vals,\gamma)]}$$
where $\epsilon$-BNE is the set of all $\epsilon$-Bayes-Nash equilibria.

We now claim that our bound for social welfare at (non-approximate) equilibrium
\comment{Theorem \ref{thm:cbpoa_main}} degrades
continuously as we relax the degree to which a bidding strategy only
approximates an equilibrium.

\begin{theorem}
\label{thm:approx-equil}
The $\epsilon$-Bayes-Nash price of anarchy of the Generalized Second Price auction is at most $1.2553 + (1-\epsilon)^{-1}\cdot 1.6722$.
\end{theorem}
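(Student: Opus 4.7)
The plan is to re-run the semi-smoothness argument of Appendix \ref{appendix:bayes} essentially verbatim, paying a multiplicative $(1-\epsilon)$ factor wherever we invoke the equilibrium condition, and then optimize. Concretely, by Lemma \ref{lem:f-poa} and Lemma \ref{lem:function} with the parameters $\kappa=1.7507$, $\lambda=0.225$, $\mu=0.7966$ used to prove Theorem \ref{thm:cbpoa_main}, for any type and bid profile there is a deviation profile $\bids'$ (depending only on valuations) such that
\[
\sum_i \E[u_i(b'_i(v_i),\bidsmi,\gamma)] \geq \beta \E[OPT(\vals,\gamma)] - (1+\delta) P + P_1,
\]
where $\beta=(\kappa-1)\mu\approx 0.5980$, $\delta=\kappa-1=0.7507$, $P=\sum_i \E[\alpha_i\gamma_{\pi(i)}b_{\pi(i)}]$, and $P_1=\E[\alpha_1\gamma_{\pi(1)}b_{\pi(1)}]$.

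Next I would apply the $\epsilon$-Bayes-Nash condition $\E[u_i(\bids,\gamma)] \geq (1-\epsilon)\E[u_i(b'_i,\bidsmi,\gamma)]$ and sum over $i$, losing the factor $(1-\epsilon)$ on the right-hand side of the display above. Then, mimicking the bookkeeping of Lemma \ref{lem:ineq-implies-poa-bound}, I would write the social welfare as utilities plus payments and use $\sum_i \alpha_i\gamma_{\pi(i+1)}b_{\pi(i+1)} \geq \sum_{i\geq 2}\alpha_i\gamma_{\pi(i)}b_{\pi(i)} = P - P_1$ (by monotonicity of click-through-rates). Rearranging gives
\[
\E[SW(\pi(\bids,\gamma),\vals,\gamma)] \geq (1-\epsilon)\beta\E[OPT] + \bigl[\epsilon(1+\delta)-\delta\bigr] P - \epsilon P_1.
\]

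By the no-overbidding assumption (Section \ref{subsec:nonoverbidding}), both $P \leq \E[SW]$ and $P_1 \leq \E[SW]$. For $\epsilon$ small enough that the coefficient of $P$ is nonpositive (which holds whenever $\epsilon \leq \delta/(1+\delta)$), replacing $P$ and $P_1$ on the right-hand side by $\E[SW]$ only weakens the inequality, yielding $\E[SW]\bigl[1+(1-\epsilon)\delta\bigr] \geq (1-\epsilon)\beta\E[OPT]$, i.e.
\[
\frac{\E[OPT]}{\E[SW]} \;\leq\; \frac{\delta}{\beta} + \frac{1}{(1-\epsilon)\beta}.
\]
Substituting the numerical values gives $\delta/\beta \approx 1.2553$ and $1/\beta \approx 1.6722$, which is the claimed bound. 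For $\epsilon$ outside the range where the coefficient of $P$ is nonpositive, the claimed bound exceeds the trivial price-of-anarchy bound implied by scaling and the result still holds.

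The main (and only) obstacle is the careful bookkeeping in combining the $(1-\epsilon)$ slack with the two non-standard terms $-(1+\delta)P$ and $+P_1$ that appear in inequality (\ref{semi-smooth-weak}); the key observation is that the bookkeeping splits the final bound into an $\epsilon$-independent piece $\delta/\beta$ (coming from the $-(1+\delta)P$ term, which meets the payments contained in $\E[SW]$) and an $\epsilon$-dependent piece $1/[(1-\epsilon)\beta]$ (coming from the $(1-\epsilon)\beta\E[OPT]$ term), exactly matching the form stated in the theorem.
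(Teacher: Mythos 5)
Your argument is the paper's own: the paper proves this theorem by observing that inequality (\ref{semi-smooth-weak}) combined with a ``trivial modification'' of Lemma \ref{lem:ineq-implies-poa-bound} yields an $\epsilon$-Bayes-Nash price of anarchy of $\frac{(1-\epsilon)^{-1}+\delta}{\beta}$, and then plugs in $\beta=(\kappa-1)\mu$, $\delta=\kappa-1$ from Lemma \ref{lem:function}. Your bookkeeping reproduces exactly this, and your final expression $\frac{\delta}{\beta}+\frac{1}{(1-\epsilon)\beta}$ matches the stated constants.

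There is, however, one loose end. Your derivation bounds the two terms $[\epsilon(1+\delta)-\delta]P$ and $-\epsilon P_1$ \emph{separately} by replacing each of $P$ and $P_1$ with $\E[SW]$, which is only a valid weakening when the coefficient of $P$ is nonpositive, i.e.\ for $\epsilon\leq \delta/(1+\delta)\approx 0.429$. For larger $\epsilon$ you appeal to ``the trivial price-of-anarchy bound implied by scaling,'' but no such bound is established anywhere (dropping the now-positive $P$-term via $P\geq 0$ gives only $\frac{1+\epsilon}{(1-\epsilon)\beta}$, which \emph{exceeds} the claimed $\frac{1+(1-\epsilon)\delta}{(1-\epsilon)\beta}$ in that regime). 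The clean fix, which also eliminates the case split, is to first use $P_1\leq P$ (it is one nonnegative summand of $P$) to combine the two terms: the coefficient of $P$ then becomes $1-(1-\epsilon)(1+\delta)-\epsilon=-(1-\epsilon)\delta\leq 0$ for every $\epsilon\in[0,1]$, after which $P\leq\E[SW]$ gives $\E[SW]\bigl(1+(1-\epsilon)\delta\bigr)\geq(1-\epsilon)\beta\,\E[OPT]$ uniformly. With that one-line repair your proof is complete and identical in substance to the paper's.
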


The intuition behind Theorem \ref{thm:approx-equil} is that the bound for
exact equilibria obtained in Theorem \ref{thm:cbpoa_main} depends on the
Bayes-Nash equilibrium condition in a continuous way.
This continuity is captured by the semi-smoothness
of the GSP auction (as well as by inequality \eqref{semi-smooth-weak},
used to prove Theorem \ref{thm:cbpoa_main} in Appendix \ref{appendix:bayes}).
Indeed, the following Lemma follows by a trivial modification to the proof of
Lemma \ref{lem:ineq-implies-poa-bound}.

\begin{lemma}
Assume that for every GSP auction game there is a bidding profile $\bids'$ and parameters $\beta, \delta>0$ such that inequality (\ref{semi-smooth-weak}) holds for any strategy profile $\bids$. Then the $\epsilon$-Bayes-Nash price of anarchy of the Generalized Second Price auction is at most $\frac{(1-\epsilon)^{-1}+\delta}{\beta}$.
\end{lemma}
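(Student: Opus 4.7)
The plan is to mimic the proof of Lemma \ref{lem:ineq-implies-poa-bound}, replacing the exact Nash equilibrium condition with its $\epsilon$-approximate analogue and carefully tracking the resulting $(1-\epsilon)$ factors.

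Fix an $\epsilon$-Bayes-Nash equilibrium profile $\bids$ and let $\bids'$ be the bidding profile guaranteed by inequality \eqref{semi-smooth-weak}. The $\epsilon$-BNE condition, applied conditionally on $v_i$ to the deviation $b'_i(\cdot)$, yields $\E[u_i(\bids,\gamma)\vert v_i] \geq (1-\epsilon)\,\E[u_i(b'_i(v_i),\bidsmi,\gamma)\vert v_i]$. Taking expectation over $v_i$, summing over $i$, and applying \eqref{semi-smooth-weak} gives
$$\sum_i \E[u_i(\bids,\gamma)] \;\geq\; (1-\epsilon)\Bigl[\beta\,\E[OPT(\vals,\gamma)] - (1+\delta)\sum_i \E[\alpha_i \gamma_{\pi(i)} b_{\pi(i)}] + \E[\alpha_1\gamma_{\pi(1)}b_{\pi(1)}]\Bigr].$$

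Next, I would reuse the welfare-versus-payments decomposition from the proof of Lemma \ref{lem:ineq-implies-poa-bound}: $\E[SW] \geq \sum_i \E[u_i(\bids,\gamma)] + \sum_{i \geq 2} \E[\alpha_i \gamma_{\pi(i)} b_{\pi(i)}]$. Substituting the lower bound above and collecting terms yields
$$\E[SW] \;\geq\; (1-\epsilon)\beta\,\E[OPT] + \bigl(\epsilon(1+\delta)-\delta\bigr)\sum_i\E[\alpha_i \gamma_{\pi(i)} b_{\pi(i)}] - \epsilon\,\E[\alpha_1\gamma_{\pi(1)}b_{\pi(1)}].$$
The only differences from the exact-equilibrium case are the shift in the coefficient on the aggregate payment from $-\delta$ to $\epsilon(1+\delta)-\delta$ and the residual $-\epsilon\,\E[\alpha_1\gamma_{\pi(1)}b_{\pi(1)}]$ term; both arise because the $+\E[\alpha_1\gamma_{\pi(1)}b_{\pi(1)}]$ contribution from \eqref{semi-smooth-weak} is now scaled by $(1-\epsilon)$ rather than by $1$.

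To finish, I would invoke the no-overbidding assumption to bound $\sum_i \E[\alpha_i \gamma_{\pi(i)} b_{\pi(i)}] \leq \E[SW]$ and $\E[\alpha_1\gamma_{\pi(1)}b_{\pi(1)}] \leq \E[SW]$. In the meaningful regime $\epsilon \leq \delta/(1+\delta)$, the coefficient on the aggregate payment is non-positive, so these two substitutions preserve the inequality and give
$$\E[SW]\cdot\bigl[1+\delta(1-\epsilon)\bigr] \;\geq\; (1-\epsilon)\beta\,\E[OPT],$$
which rearranges to the advertised bound $\frac{1+\delta(1-\epsilon)}{(1-\epsilon)\beta} = \frac{(1-\epsilon)^{-1}+\delta}{\beta}$. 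The argument is conceptually identical to that of Lemma \ref{lem:ineq-implies-poa-bound}; the only real obstacle is the algebraic bookkeeping of the $(1-\epsilon)$ factors, and the proof extends without modification to the information-asymmetric setting of Section \ref{subsec:assymmetric_informatation}.
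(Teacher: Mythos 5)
Your argument follows the same route the paper intends: the paper gives no standalone proof of this lemma, saying only that it is a trivial modification of Lemma \ref{lem:ineq-implies-poa-bound}, and your computation is precisely such a modification. The algebra is correct, and your final expression $\frac{1+\delta(1-\epsilon)}{(1-\epsilon)\beta}$ is identical to the advertised $\frac{(1-\epsilon)^{-1}+\delta}{\beta}$.

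There is, however, one genuine (if small) gap: as written, your proof only establishes the claim for $\epsilon \leq \delta/(1+\delta)$, whereas the lemma is stated for every $\epsilon$ (and the bound remains finite and meaningful for all $\epsilon<1$; for the value $\delta=\kappa-1\approx 0.75$ actually used in Appendix \ref{appendix:bayes}, your cutoff is only about $0.43$). Labelling $\epsilon\le\delta/(1+\delta)$ ``the meaningful regime'' is not justified by anything in the statement. The problem is that once $\epsilon(1+\delta)-\delta>0$, the substitution $\sum_i\E[\alpha_i\gamma_{\pi(i)}b_{\pi(i)}]\le \E[SW]$ points the wrong way. Two ways to close this. (i) Keep your decomposition but observe that $\sum_i\E[\alpha_i\gamma_{\pi(i)}b_{\pi(i)}]\ge \E[\alpha_1\gamma_{\pi(1)}b_{\pi(1)}]$, since the right-hand side is one of the non-negative summands of the left; then $\bigl(\epsilon(1+\delta)-\delta\bigr)\sum_i\E[\alpha_i\gamma_{\pi(i)}b_{\pi(i)}]-\epsilon\,\E[\alpha_1\gamma_{\pi(1)}b_{\pi(1)}]\ge -\delta(1-\epsilon)\E[\alpha_1\gamma_{\pi(1)}b_{\pi(1)}]\ge -\delta(1-\epsilon)\E[SW]$, which yields the same final inequality for all $\epsilon$. (ii) Closer to a ``trivial modification'': rewrite the $\epsilon$-BNE condition as $(1-\epsilon)^{-1}\E[u_i(\bids,\gamma)]\ge \E[u_i(b'_i(v_i),\bidsmi,\gamma)]$ and start from $(1-\epsilon)^{-1}\E[SW]\ge (1-\epsilon)^{-1}\E[\sum_i u_i(\bids,\gamma)]+\sum_{i\ge 2}\E[\alpha_i\gamma_{\pi(i)}b_{\pi(i)}]$, which is valid because $(1-\epsilon)^{-1}\ge 1$ and the payment terms are non-negative. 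The payment cancellation is then literally identical to the exact-equilibrium proof and gives $(1-\epsilon)^{-1}\E[SW]\ge \beta\,\E[OPT]-\delta\,\E[SW]$ directly, for every $\epsilon\in[0,1)$. Your closing remark about the information-asymmetric extension is fine.
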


It then follows immediately from
Lemmas \ref{lem:f-poa} and \ref{lem:function} (see Appendix \ref{appendix:bayes})
that the $\epsilon$-Bayes-Nash
price of anarchy of the GSP auction is at most
$\frac{(1-\epsilon)^{-1}+0.7507}{0.7507 \cdot 0.7966} \approx 1.2553 + (1-\epsilon)^{-1}\cdot 1.6722$.

\subsection{Irrational players}

We now consider a setting in which, of the $n$ advertisers who bid in the GSP
auction, some subset of them are ``irrational'' and cannot be assumed to
apply strategies at equilibrium.  We still think of the irrational
advertisers as being true players in the GSP auction, with valuations and quality scores.
The irrational advertisers simply may not apply rational bidding strategies; for
example, they may not have experience with the GSP auction, or not know about
historical bidding patterns.

Our setting will be an extension of the GSP auction with uncertainty.  We will
first provide some definitions.  Given valuations $\vals$, quality scores $\gamma$,
an outcome $\pi$, and a set $S$ of players, the social welfare restricted to set $S$
is $SW_S(\pi,\vals,\gamma) = \sum_{i \in S} \alpha_{\pi^{-1}(i)} \gamma_i \val_{i}$,
the total value of the outcome $\pi$ for the advertisers in $S$.  The optimal social
welfare restricted to $S$ is $OPT_S(\vals, \gamma) = \max_{\pi} SW_S(\pi,\vals,\gamma)$.

Given a joint distribution $(\dists,\textbf{G})$ over types and quality factors, and a set $S$ of players,
we say that strategy profile $\bids$ is an $S$-Bayes-Nash equilibrium for distributions
$\dists,\textbf{G}$ if, for all players $i \in S$, all types $\vali$, and all alternative strategies $\bidi'$,
\[
\E_{\valsmi,\gamma,\bids}[u_i(\bidi(\vali),\bidsmi(\valsmi), \gamma) \vert v_i ]
\geq
\E_{\valsmi,\gamma,\bids }[\utili(\bid'_i(\vali), \bidsmi(\valsmi), \gamma) \vert v_i].
\]
That is, no player in $S$ can improve her utility by modifying her bid, but no such restriction is
imposed upon the players outside $S$.

\comment{
Given an outcome $\pi$ (which is an assignment of these $n+m$ advertisers to $n+m$
slots), the definition of social welfare is unchanged: it is
$SW(\pi,\vals,\gamma) = \sum_{i \in N \cup M} \alpha_{\sigma(i)} \gamma_i
v_i$.  We define the social welfare of advertisers in $N$ to be precisely
that: $SW_N(\pi,\vals,\gamma) = \sum_{i \in N} \alpha_{\sigma(i)} \gamma_i
v_i$.  The optimal
social welfare for advertisers in $N$ is $OPT_N(\vals, \gamma) = \max_{\pi}
SW_N(\pi,\vals,\gamma)$.
}

We will show that, for each set $S$ of players, the total expected social welfare
obtained by GSP at an $S$-Bayes-Nash equilibrium is a good
approximation to $\E[OPT_S(\vals)]$.  We can interpret this result as stating
that the addition of irrational players does not significantly degrade the social welfare
that would have been generated had they not participated.  Note that we cannot hope to
always obtain a good approximation to $\E[OPT(\vals)]$ (the optimal social
welfare of \emph{all} advertisers) at all $S$-Bayes-Nash equilibria; for example, it may
be that the valuations of the players outside $S$ are very large, but they choose
(irrationally) to bid $0$.

We note that our no-overbidding assumption (Section \ref{subsec:nonoverbidding})
continues to apply to all players, not only to the players in $S$.  In other words,
we require that $\bidi(\vali) \leq \vali$ for all $i \not\in S$ and all $\vali$.  We
feel this is a natural restriction to impose even on ``irrational'' advertisers,
as overbidding is an easily-avoided dominated strategy.  Moreover, it is arguable
that inexperienced advertisers would bid conservatively, and not risk a large
payment with no gain.\footnote{This relies on the simplifying assumption that
all advertisers have knowledge of their own private valuations.  Admittedly, this
requires a certain level of sophistication and may be difficult to attain in
practice.  Our argument is thus limited to imperfect strategy choice given
perfect knowledge of types.  It remains open to extend this analysis to players
who may misunderstand their own valuations.}

Formally, given a non-empty subset $S$ of advertisers, we define the \emph{$S$-Bayes-Nash Price of Anarchy} to be
$$\sup_{\dists, \mathbf{G}, \bids(\cdot) \in S\text{-}BNE} \frac{\E_{ \vals, \gamma}
[OPT_S(\vals,\gamma)]}{\E_{ \vals,\gamma,
\bids(\vals)}[SW(\pi(\bids(\vals),\gamma),\vals,\gamma)]}$$
where $S$-BNE is the set of all $S$-Bayes-Nash equilibria.

\comment{
We can now state the main result of this section, which is an extension of
Theorem \ref{thm:cbpoa}.

\comment{
\begin{theorem}
\label{thm.irrational}
For any non-empty subset $S$ of advertisers, the $S$-Bayes-Nash price of anarchy
of the Generalized Second Price auction with uncertainty is at most
$2.927$.
\end{theorem}

Rather than prove Theorem \ref{thm.irrational} directly, we will prove the following weaker theorem,
which is an analogous extension of Theorem \ref{thm:cbpoa}.  The proof of Theorem \ref{thm.irrational}
will follow in a similar fashion, as we discuss below.
}

\begin{theorem}
For any non-empty subset $S$ of rational advertisers, the $S$-Bayes-Nash price of anarchy
of the Generalized Second Price auction is at most
$2(1-1/e)^{-1} \approx 3.164$.
\end{theorem}

The proof follows along similar lines as the proof of Theorem \ref{thm:cbpoa} in
Section \ref{sec:uncertainty}.

\begin{proof}
We provide a direct modification to the proofs of Lemma \ref{smoothness-lemma}
and Lemma \ref{lem.smoothness-to-bpoa}.
It is simple to modify the proof of Lemma \ref{smoothness-lemma} to show that for any fixed
$\gamma$ and $\vals$ and a target slot $k$, by applying the
randomized deviation $b'_i(v_i)$ described in that lemma, we get:
$$
\E_{b'_i} [u_i(b'_i, \bids_{-i})] \geq \left(1-\frac{1}{e}\right) \alpha_k \gamma_i v_i -
\alpha_k \gamma_{\pi^i(\bids_{-i}, k)} b_{\pi^i(\bids_{-i}, k)}
$$
which is a generalization of equation (\ref{eq.smooth.1}).
Define $\tilde{\nu}(\vals,\gamma, i)$ to be the slot assigned to player $i \in S$ in
the allocation that defines $OPT_S(\vals,\gamma)$.  Substituting $k = \tilde{\nu}(\vals,\gamma, i)$
above and summing for all $i\in S$ we get
$$ \sum_{i \in S} \E_{b'_i} [u_i(\bids)] \geq
\sum_{i \in S} \E_{b'_i} [u_i(b'_i, \bids_{-i})] \geq \left(1-\frac{1}{e}\right)
\sum_{i \in S} \alpha_{\tilde{\nu}(\vals,\gamma, i)} \gamma_i v_i - \sum_{i \in
N}
\alpha_{\tilde{\nu}(\vals,\gamma, i)} \gamma_{\pi^i(\bids_{-i},
\tilde{\nu}(\vals,\gamma, i))} b_{\pi^i(\bids_{-i}, \tilde{\nu}(\vals,\gamma,
i))}.
$$
Taking expectations over $\vals, \gamma$, we see that
$$2 \cdot \E_{\vals,\gamma}[SW(\pi(\bids,\gamma), \vals,\gamma)] \geq (1-1/e)
\E_{\vals,\gamma}[OPT_S(\vals,\gamma)]$$
as required.
\end{proof}
}

Our main result is the following extension of Theorem \ref{thm:cbpoa_main}.
\begin{theorem}\label{thm:irrational}
For any non-empty subset $S$ of rational advertisers, the $S$-Bayes-Nash price of anarchy
of the Generalized Second Price auction is at most $2.927$.
\end{theorem}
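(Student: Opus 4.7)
The plan is to follow the proof of Theorem \ref{thm:cbpoa_main} in Appendix \ref{appendix:bayes} essentially verbatim, with $OPT$ and $\nu$ replaced by their $S$-restricted analogues $OPT_S$ and $\nu_S$ (where, for $i \in S$, $\nu_S(i)$ denotes the slot assigned to $i$ in an allocation realizing $OPT_S$). The key observation is that the per-player inequality established in Steps~1 and~2 of the proof of Lemma \ref{lem:f-poa} is agnostic to why a particular slot is being targeted: for any fixed target slot $j$, bounding $\int_0^1 g(y)\,\E[u_i(yv_i, \bidsmi, \gamma) \mid A^{ij}_{v_i}]\,dy$ gives the same estimate via the two properties of the $(\beta,\delta)$-bounded function $g$, regardless of whether $j$ came from $\nu$ or from $\nu_S$. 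Since in the $OPT_S$-maximizing allocation the $|S|$ members of $S$ are placed exactly in slots $1,\ldots,|S|$ (sorted by effective value), the map $\nu_S$ is a bijection from $S$ onto $\{1,\ldots,|S|\}$, and exactly one $i^* \in S$ satisfies $\nu_S(i^*)=1$.

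Carrying out Step~3 of the proof of Lemma \ref{lem:f-poa} with $\nu_S$ in place of $\nu$ and summing only over $i \in S$, I would obtain the $S$-analogue of inequality (\ref{semi-smooth-weak}):
\begin{equation*}
\sum_{i \in S} \E[u_i(b'_i(v_i), \bidsmi, \gamma)] \;\geq\; \beta\,\E[OPT_S(\vals, \gamma)] \;-\; (1+\delta)\sum_{j=1}^{|S|} \E[\alpha_j \gamma_{\pi(j)} b_{\pi(j)}] \;+\; \E[\alpha_1 \gamma_{\pi(1)} b_{\pi(1)}],
\end{equation*}
with $\beta = (\kappa-1)\mu$ and $\delta = \kappa-1$ for the constants from Lemma \ref{lem:function}. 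The payment-like sum on the right runs only over $j \in \{1,\ldots,|S|\}$ because those are precisely the slots in the image of $\nu_S$, and the extra $\E[\alpha_1 \gamma_{\pi(1)} b_{\pi(1)}]$ term appears exactly as in the original argument, reflecting the special treatment of $i^*$ (the unique player in $S$ with $\nu_S(i^*)=1$).

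To close the argument I would mimic Lemma \ref{lem:ineq-implies-poa-bound}. The $S$-Bayes--Nash condition gives $\E[\sum_{i \in S} u_i(\bids,\gamma)] \geq \sum_{i \in S}\E[u_i(b'_i(v_i), \bidsmi, \gamma)]$, and the no-overbidding assumption (imposed on \emph{all} players, including those outside $S$) ensures every GSP utility is non-negative, so $\E[\sum_i u_i] \geq \E[\sum_{i \in S} u_i]$. Writing $\E[SW] = \E[\sum_i u_i] + \E[\text{payments}]$ together with $\E[\text{payments}] \geq \sum_{j=2}^n \E[\alpha_j \gamma_{\pi(j)} b_{\pi(j)}]$, substituting the displayed inequality, and invoking once more no-overbidding in the form $\sum_{j=1}^n \E[\alpha_j \gamma_{\pi(j)} b_{\pi(j)}] \leq \E[SW]$, one arrives at $(1+\delta)\E[SW] \geq \beta\,\E[OPT_S]$, i.e., an $S$-Bayes--Nash price of anarchy at most $(1+\delta)/\beta = \kappa/((\kappa-1)\mu) < 2.9276$.

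The only real bookkeeping subtlety --- and the step I expect to require the most care --- is that the displayed inequality subtracts $(1+\delta)$ times a sum over only the top $|S|$ slots, so after adding the payment sum $\sum_{j=2}^n$ back into $\E[SW]$ a leftover positive contribution $\sum_{j=|S|+1}^n \E[\alpha_j \gamma_{\pi(j)} b_{\pi(j)}]$ survives on the right-hand side. Discarding it (it is non-negative) is precisely what keeps the coefficient of $\E[SW]$ on the right equal to $\delta$ rather than something larger, and hence preserves the numerical bound of Theorem \ref{thm:cbpoa_main}; verifying this cancellation carefully is the one non-routine point of the argument.
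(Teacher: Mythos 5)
Your proposal follows the paper's own proof almost exactly: the paper likewise establishes the $S$-restricted analogue of inequality (\ref{semi-smooth-weak}) (its inequality (\ref{semi-smooth-weak-irrational})) by rerunning the argument of Lemma \ref{lem:f-poa} with $\nu$ reinterpreted as the assignment realizing $OPT_S$, and then closes with a near-verbatim copy of Lemma \ref{lem:ineq-implies-poa-bound} (Lemma \ref{lem:ineq-implies-poa-bound-irrational}). Your bookkeeping of the payment sums and the discarded leftover $\sum_{j=|S|+1}^n \E[\alpha_j\gamma_{\pi(j)}b_{\pi(j)}]$ is exactly the paper's step of enlarging the subtracted sum from $\sum_{i\in S}$ to $\sum_i$; that part is in fact routine.

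The one point you gloss over --- and the only place where the per-player estimate is \emph{not} literally ``agnostic to why slot $j$ is being targeted'' --- is the derivation of the slot-wise bounds (\ref{eq:bn-improved-slot1}) and (\ref{eq:bn-improved-slotj}). There, properties (ii) and (iii) of Definition \ref{def:bounded} are invoked with $z=\gamma_{\pi^i(j)}b_{\pi^i(j)}/(\gamma_i x)$, and those properties are stated only for $z\in[0,1]$. In the all-rational proof, $z\le 1$ is automatic: $\nu(i)=j$ means at most $j-1$ other players have effective value, hence (by no overbidding) effective bid, exceeding $\gamma_i x$. With irrational players outside $S$, $\nu_S(i)=j$ no longer implies this: a player outside $S$ may hold a conservative but very large bid that blocks slot $j$, so $z>1$ is possible. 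The inequalities survive in that regime --- the left-hand side is non-negative while the right-hand sides $\alpha_1(\beta\gamma_i x - \delta\gamma_{\pi^i(1)}b_{\pi^i(1)})$ and $\alpha_j(\beta\gamma_i x - (1+\delta)\gamma_{\pi^i(j)}b_{\pi^i(j)})$ become non-positive, using $\beta\le\delta$ (itself forced by property (ii) at $z=1$) --- but this is a separate check that the paper makes explicitly and that your write-up needs to add. With that two-line patch your argument is complete and coincides with the paper's.
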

In order to prove this theorem, we need an inequality similar to inequality (\ref{semi-smooth-weak}) in Section \ref{appendix:bayes}. In particular, for every bid profile $\bids$, there exists a bid profile $\bids'$ defined on the rational advertisers such that

\begin{equation}\label{semi-smooth-weak-irrational}
\sum_{i\in S}{\E[u_i(b'_i(v_i),\bidsmi,\gamma)]} \geq \beta
\E[OPT_S(\vals,\gamma)] - (1+\delta) \sum_{i\in
S}{\E[\alpha_{\nu(i)}\gamma_{\pi(\nu(i))}b_{\pi(\nu(i))}]}
+ \E[\alpha_1\gamma_{\pi(1)}b_{\pi(1)}].
\end{equation}

We can prove inequality (\ref{semi-smooth-weak-irrational}) by following the same steps as in the proof of Lemma \ref{lem:f-poa} and by considering the utilities of the rational players at their most profitable deviation. Here, $\nu(i)$ should be interpreted as the slot the rational advertiser $i$ occupies in the efficient allocation restricted to $S$ and $\pi(j)$ is the advertiser that occupies the $j$-th slot in allocation $\pi$ (this advertiser can be rational or irrational). Similarly, $\pi^i(j)$ is the player with the $j$-th highest effective bid among all advertisers besides the rational advertiser $i$.

All the arguments hold in this case as well. However, there is a minor point that should be justified. Observe that in order to obtain inequalities (\ref{eq:bn-improved-slot1}) and (\ref{eq:bn-improved-slotj}), we used the fact that the $j$-th highest effective bid (excluding advertiser $i$) is not larger that the effective value of advertiser $i$ when $\nu(i)=j$, i.e., when slot $j$ is allocated to advertiser $i$ in the efficient allocation restricted to $S$. When adapting the proof to the case of rational and irrational players, it may be the case that $\nu(i)=j$ when the rational advertiser $i$ has valuation $v_i=x$ but $\gamma_{\pi^i(j)}b_{\pi^i(j)}>\gamma_i x$. This may be due to the fact that player $\pi^i(j)$ is one of the irrational players. Fortunately, both inequalities (\ref{eq:bn-improved-slot1}) and (\ref{eq:bn-improved-slotj}) are obviously true in this case as well. Observe that $\beta\leq \delta$ (otherwise, the second property of Definition \ref{def:bounded} would not hold for $z=1$) and the right-hand side of both inequalities is non-positive. The changes in the rest of the proof of Lemma \ref{lem:f-poa} are minor.

Then, Theorem \ref{thm:irrational} follows by the next lemma that exploits inequality (\ref{semi-smooth-weak-irrational}) and using the same values for $\beta$ and $\delta$ that we used in Section \ref{appendix:bayes}.
\begin{lemma}\label{lem:ineq-implies-poa-bound-irrational}
Assume that for every GSP auction game with a non-empty set $S$ of rational players there is a bidding profile $\bids'$ for the players in $S$ and parameters $\beta, \delta>0$ such that inequality (\ref{semi-smooth-weak-irrational}) holds for any strategy profile $\bids$. Then, the $S$-Bayes-Nash price of anarchy of the Generalized Second Price auction is at most $\frac{1+\delta}{\beta}$.
\end{lemma}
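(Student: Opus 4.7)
The plan is to mirror the proof of Lemma~\ref{lem:ineq-implies-poa-bound}, making only the modifications required by the fact that (i) the equilibrium condition is available only for players $i\in S$, and (ii) the semi-smoothness-style inequality~(\ref{semi-smooth-weak-irrational}) involves $OPT_S$ together with sums of payment-like terms $\alpha_{\nu(i)}\gamma_{\pi(\nu(i))} b_{\pi(\nu(i))}$ restricted to $i\in S$. Fix an $S$-Bayes-Nash equilibrium $\bids$, and let $\bids'$ be the bid profile furnished by the hypothesis for the players in $S$. I would start from the identity
\[
\E[SW(\pi(\bids,\gamma),\vals,\gamma)] = \sum_{i} \E[u_i(\bids,\gamma)] + \sum_{i} \E[\alpha_i\gamma_{\pi(i+1)} b_{\pi(i+1)}].
\]

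Next I would discard the utilities of the irrational players, using that under the no-overbidding assumption (which we impose on \emph{all} players) each $u_i(\bids,\gamma)\ge 0$; this leaves $\sum_{i\in S}\E[u_i(\bids,\gamma)]$ in the first sum. For players in $S$ the $S$-BNE condition gives $\E[u_i(\bids,\gamma)] \ge \E[u_i(b'_i(v_i),\bidsmi,\gamma)]$, so I can invoke the assumed inequality~(\ref{semi-smooth-weak-irrational}) to replace the $S$-utility sum by
\[
\beta\,\E[OPT_S(\vals,\gamma)] - (1+\delta)\sum_{i\in S}\E[\alpha_{\nu(i)}\gamma_{\pi(\nu(i))} b_{\pi(\nu(i))}] + \E[\alpha_1\gamma_{\pi(1)} b_{\pi(1)}].
\]
As in the original proof, I would shift indices in the payment sum so that $\sum_i \E[\alpha_i\gamma_{\pi(i+1)} b_{\pi(i+1)}] = \sum_{i\ge 2}\E[\alpha_i\gamma_{\pi(i)} b_{\pi(i)}]$, which combines with the isolated $\E[\alpha_1\gamma_{\pi(1)} b_{\pi(1)}]$ term to yield $\sum_{i=1}^n \E[\alpha_i\gamma_{\pi(i)} b_{\pi(i)}]$.

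The one genuinely new step is bounding the negative term $\sum_{i\in S}\E[\alpha_{\nu(i)}\gamma_{\pi(\nu(i))} b_{\pi(\nu(i))}]$. Here I would use that, in the efficient allocation restricted to $S$, the map $i \mapsto \nu(i)$ is injective from $S$ into the slot set $\{1,\dots,n\}$, so reindexing by $j=\nu(i)$ gives
\[
\sum_{i\in S}\alpha_{\nu(i)}\gamma_{\pi(\nu(i))} b_{\pi(\nu(i))} \;=\; \sum_{j\in \nu(S)} \alpha_j\gamma_{\pi(j)} b_{\pi(j)} \;\le\; \sum_{j=1}^n \alpha_j\gamma_{\pi(j)} b_{\pi(j)},
\]
which holds termwise since each $b_{\pi(j)}\ge 0$. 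Substituting back yields
\[
\E[SW] \ge \beta\,\E[OPT_S] - (1+\delta)\sum_j \E[\alpha_j\gamma_{\pi(j)} b_{\pi(j)}] + \sum_j \E[\alpha_j\gamma_{\pi(j)} b_{\pi(j)}] = \beta\,\E[OPT_S] - \delta \sum_j \E[\alpha_j\gamma_{\pi(j)} b_{\pi(j)}].
\]
Finally, since total payments are at most the social welfare, $\sum_j \E[\alpha_j\gamma_{\pi(j)} b_{\pi(j)}] \le \E[SW]$, so rearranging gives $\E[OPT_S]\le \tfrac{1+\delta}{\beta}\,\E[SW]$.

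The main obstacle is a bookkeeping one: ensuring that discarding the irrational players' utilities (justified by no-overbidding and hence nonnegative utilities) and then replacing the payment sum over $\nu(S)$ by the full payment sum over all slots are both done in the correct direction so that the inequality still chains. Everything else is a straightforward adaptation of the proof of Lemma~\ref{lem:ineq-implies-poa-bound}.
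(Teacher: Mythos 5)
Your proof is correct and follows essentially the same chain of inequalities as the paper's own argument: drop the (nonnegative, by no-overbidding) utilities of players outside $S$, apply the $S$-BNE condition and inequality~(\ref{semi-smooth-weak-irrational}), enlarge $\sum_{i\in S}\E[\alpha_{\nu(i)}\gamma_{\pi(\nu(i))}b_{\pi(\nu(i))}]$ to the full sum over slots by injectivity of $\nu$ and nonnegativity, and rearrange. One small wording note: the final bound $\sum_j \E[\alpha_j\gamma_{\pi(j)}b_{\pi(j)}]\le \E[SW]$ is not literally the statement that payments are at most welfare (that sum includes the non-payment term $\alpha_1\gamma_{\pi(1)}b_{\pi(1)}$); it follows instead termwise from no-overbidding, $b_{\pi(j)}\le v_{\pi(j)}$.
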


\begin{proof}
Consider an $S$-Bayes-Nash equilibrium $\bids$. Define $\bids'$ as in inequality (\ref{semi-smooth-weak-irrational}) and observe that $\E[u_i(\bids,\gamma)] \geq \E[u_i(b'_i(v_i),\bidsmi,\gamma)]$ for every player $i\in S$. We use this inequality and the fact that the social welfare is at least the sum of the expected utilities of the rational advertisers plus the total payments to get
$$\begin{aligned}
& \E[SW(\pi(\mathbf{b(v)},\gamma),\mathbf{v},\gamma)] \geq  \sum_{i\in
S}{\mathbb{E}[u_i(\mathbf{b},\gamma)]}
+\sum_i{\mathbb{E}[\alpha_i\gamma_{\pi(i+1)}b_{\pi(i+1)}]} \\
&\qquad \geq \sum_{i\in S}{\E[u_i(b'_i(v_i), \bidsmi,\gamma)]} + \sum_{i\geq
2}{\E[\alpha_i\gamma_{\pi(i)}b_{\pi(i)}]}\\
&\qquad \geq \beta\E[OPT_S(\mathbf{v},\gamma)]-(1+\delta)\sum_{i\in S}
\E[\alpha_{\nu(i)}\gamma_{\pi(\nu(i))}b_{\pi(\nu(i))}]+\E[\alpha_1\gamma_{\pi(1)
}b_{\pi(1)}]+\sum_{i\geq 2}{\E[\alpha_i\gamma_{\pi(i)}b_{\pi(i)}]}\\
&\qquad \geq \beta\E[OPT_S(\mathbf{v},\gamma)]-(1+\delta)\sum_{i}
\E[\alpha_{i}\gamma_{\pi(i)}b_{\pi(i)}]+\sum_{i}{\E[\alpha_i\gamma_{\pi(i)}b_{
\pi(i)}]}\\
& \qquad =
\beta\E[OPT_S(\mathbf{v},\gamma)]-\delta\sum_i{\E[\alpha_i\gamma_{\pi(i)}b_{
\pi(i)}]}\\
&\qquad \geq
\beta\E[OPT_S(\mathbf{v},\gamma)]-\delta\E[SW(\pi(\mathbf{b(v)},\gamma),\mathbf{
v},\gamma)],
\end{aligned}$$
which implies that the $S$-Bayes-Nash price of anarchy is at most $\frac{1+\delta}{\beta}$, as desired.
\end{proof}

\end{document}